\numberwithin{equation}{section}
\newtheorem{theorem}{Theorem}
\newtheorem{assumption}{Assumption}
\newtheorem{corollary}{Corollary}
\newtheorem{proposition}{Proposition}
\newtheorem{remark}{Remark}
\newtheorem{lemma}{Lemma}
\theoremstyle{definition}
\newcommand{\norm}[1]{\left \|#1\right \|}
\newcommand{\abs}[1]{\left\vert #1 \right\vert}
\newcommand{\supp}{\text{supp}}
\newcommand{\sumi}{\sum_{i=1}^N }
\newcommand{\sumj}{\sum_{j=1}^M }
\newcommand{\sumt}{\sum_{t=1}^T }
\begin{document}

\title{Post-Selection Inference in Three-Dimensional Panel Data\thanks{First arXiv version: March 30, 2019.}}
\author{
Harold D. Chiang\thanks{Harold D. Chiang: harold.d.chiang@vanderbilt.edu. Department of Economics, Vanderbilt University, VU Station B \#351819, 2301 Vanderbilt Place, Nashville, TN 37235-1819, USA\bigskip} \qquad Joel Rodrigue\thanks{Joel Rodrigue: joel.b.rodrigue@vanderbilt.edu. Department of Economics, Vanderbilt University, VU Station B \#351819, 2301 Vanderbilt Place, Nashville, TN 37235-1819, USA\bigskip} \qquad Yuya Sasaki\thanks{Yuya Sasaki: yuya.sasaki@vanderbilt.edu. Department of Economics, Vanderbilt University, VU Station B \#351819, 2301 Vanderbilt Place, Nashville, TN 37235-1819, USA\bigskip}
\bigskip\\
}
\date{}
\maketitle
\begin{abstract}\setlength{\baselineskip}{5.8mm}
Three-dimensional panel models are widely used in empirical analysis.
Researchers use various combinations of fixed effects for three-dimensional panels.
When one imposes a parsimonious model and the true model is rich, then it incurs mis-specification biases.
When one employs a rich model and the true model is parsimonious, then it incurs larger standard errors than necessary.
It is therefore useful for researchers to know correct models.
In this light, \citet{LuMiaoSu2018} propose methods of model selection.
We advance this literature by proposing a method of post-selection inference for regression parameters.
Despite our use of the lasso technique as means of model selection, our assumptions allow for many and even all fixed effects to be nonzero.
Simulation studies demonstrate that the proposed method is less biased than under-fitting fixed effect estimators, is more efficient than over-fitting fixed effect estimators, and allows for as accurate inference as the oracle estimator.
\smallskip\\
{\bf Keywords:} post-selection inference, three-dimensional panel data.
\smallskip\\
{\bf JEL Code:} C23
\end{abstract}

\newpage
\section{Introduction}
\label{sec:introduction}

\citet{Matyas1997} suggests the three-dimensional panel model
\begin{align}\label{eq:matyas}
y_{ijt} = {x}_{ijt}' \beta + \underbrace{\alpha_i + \gamma_j + \lambda_t}_{\text{Fixed Effects}} + \varepsilon_{ijt}
\end{align}
for $(i,j,t) \in \{1,...,N\} \times \{1,...,M\} \times \{0,...,T\}$, where $y_{ijt}$ denotes an outcome variable of unit $(i,j)$ at time $t$, ${x}_{ijt}$ denotes $k$-dimensional explanatory variables of unit $(i,j)$ at time $t$, and $\alpha_i$, $\gamma_j$, and $\lambda_t$ are fixed effects associated with indices $i$, $j$, and $t$, respectively.
To fix our ideas, consider the gravity model \citep{Tinbergen1962} from the empirical trade literature where $y_{ijt}$ denotes the logarithm of the volume of exports from country $i$ to country $j$ in year $t$, and
the $k$-dimensional covariates ${x}_{ijt}$ contain observed characteristics of the trade pair $(i,j)$ in year $t$, including the log GDP of country $i$ in year $t$ ($GDP_{it}$), the log GDP of country $j$ in year $t$ ($GDP_{jt}$), the log distance between countries $i$ and $j$ ($DIST_{ij}$), and the dummy variable of a bilateral trade agreement between countries $i$ and $j$ ($TA_{ij}$), among others.
The fixed effects $\alpha_i$, $\gamma_j$, and $\lambda_t$ represent the unobserved exporting country effects, destination country effects, and year effects, respectively.
Researchers are often interested in the coefficient of $DIST_{ij}$ interpreted as the trade elasticity or the trade cost.
Another important parameter of empirical interest is the coefficient of $TA_{ij}$ interpreted as the effect of bilateral trade agreements on trade volumes. 
See \cite{HeadMayer2014} for a comprehensive review of gravity models.

To date, variants of the three-dimensional panel model (\ref{eq:matyas}) have been extensively used in empirical analysis of international trade (see \citet{BaltagiEggerErhardt2017} for a survey), 
housing (see \citet{BaltagiBresson2017} for a survey), 
migration (see \citet{Ramos2017} for a survey), and 
consumer price.
In these analyses, researchers employ various combinations of fixed effects, including 
(I) $\alpha_i + \gamma_j$, 
(II) $\alpha_i + \gamma_j + \lambda_t$, and
(III) $\alpha_{it} + \gamma_{jt}$,
among others.\footnote{
Parameters $\beta$ of certain types of controls are not identified under more general combinations of fixed effects.
For example, the coefficients of $GDP_{it}$ and $GDP_{jt}$ are not identified under the fixed effect model (III) due to the collinearity.
However, the coefficients of $DIST_{ij}$ and $TA_{ij}$ would be identifiable under any of the three models.
In empirical analysis of bilateral trade flows, the latter two coefficients are of more common interest.
In fact substituting fixed effects (such as $\alpha_{it}$ and $\gamma_{jt}$) for observed proxies (such as $GDP_{it}$ and $GDP_{jt}$) is ``now common practice and recommended by major empirical trade economists'' \citep{HeadMayer2014}.
}
See \citet[][Tables 1.1--1.3]{BalazsiMatyasWansbeek2017} for a comprehensive list of empirical papers and their specifications of the combinations of fixed effects.
Researchers in general do not know which combination of fixed effects correctly specifies the model of their interest.
If the true model is parsimonious and a researcher erroneously assumes a rich specification, then na\"ive fixed effect estimators generally entail exacerbated variances.
On the other hand, if the true model is rich and a researcher erroneously assumes a parsimonious specification, then na\"ive fixed effect estimators generally entail mis-specification biases.
The lack of knowledge of the true model specification therefore leads to undesired econometric results in any event.

A recent paper by \citet{LuMiaoSu2018} develops a method of model selection.
Their method serves as a useful guideline for empirical researchers to choose a correct combination of fixed effects in three-dimensional panel models.
When a researcher uses a selected model to compute estimates of $\beta$ and their standard errors, it is also important that she takes into account the statistical effects of the model selection.
To our knowledge, the existing literature does not provide a method of post-selection inference for three-way panel models.
In this light, we extend the frontier of this existing econometric literature \citep{LuMiaoSu2018} by providing a method of inference for $\beta$ accounting for the effect of the model selection.
We make use of the lasso technique along with de-biasing to this end, but our method does not require exactly sparse fixed effects.
In other words, our assumptions do allow for many and even all of the fixed effects to be nonzero in a general combination of fixed effects.

{\bf Related Literature}
A three-dimensional panel model was suggested by \citet{Matyas1997}.
The literature on multi-dimensional panels is extensive today, and is surveyed in the book of article collections edited by \citet{Matyas2017}.
Its chapter written by \citet{BalazsiMatyasWansbeek2017} provides a comprehensive list of empirical research papers employing multi-dimensional panel data.

Methods of model selection in three-dimensional panels are developed by \citet{LuMiaoSu2018}, and
this paper was motivated by \citet{LuMiaoSu2018}.
As stated earlier, we aim to extend this frontier of the literature by developing a post-selection inference for the regression parameters.

We use the lasso technique for model selection and post-selection inference, but our assumptions do allow for all fixed effects to be nonzero.
This is because we rely on the approximate sparsity condition as opposed to the conventional sparsity.
Post-selection inference via lasso is studied by an extensive body of the literature in various contexts.
This literature includes, but are not limited to, 
\citet{BelloniChenChernozhukovHansen2012} for IV models, and \citet{BelloniChernozhukovHansen2014}, \citet{JavanmardMontanari2014}, \citet{vandeGeeretal2014}, and \citet{ZhangZhang2014} for linear regression models.

Lasso estimation for panel models are suggested by 
\citet{Koenker2004}, 
\citet{Lamarche2010}, 
\citet{Kock2013}, 
\citet{CanerHan2014}, 
\citet{LuSu2016}, 
\citet{LiQianSu2016}, 
\citet{QianSu2016}, 
\citet{CanerHanLee2018}, 
\citet{HardingLamarche2019}, among others.
Classification and estimation by lasso for panel models are proposed by \citet{SuShiPhillips2016} -- also see \citet{LuSu2017}, \citet{SuJu2018}, and \citet{SuWangJin2017}. 
For post-selection inference with panel data using lasso,
\citet{BelloniChernozhukovHansenKozbur2016} work with de-meaned fixed effect models with high-dimensional controls using post-double-selection estimator.
\citet{Kock2016} and \citet{KockTang2018} work with correlated random effect panel models and dynamic panel models with sparse fixed effects via de-biased lasso, respectively. 
We extend this frontier of the literature to three-dimensional panels.
Besides the different framework of three-dimensional panels as opposed to two-dimensional ones, this paper is different from \citet{Kock2016} and \citet{KockTang2018} in the following four technical points.
First, 
we extend the theory of nodewise lasso by allowing for different convergence rates to incorporate a larger class of fixed effect models. 
Second, we use a different proof strategy with the sparsity requirement of $ss_l(\log (p\vee (NM)))^2/(N \wedge M)=o(1)$ inspired by \citet[][Lemma 8]{BelloniChenChernozhukovHansen2012}, whereas an adaptation of the proof strategies of \citet{Kock2016}\footnote{See Assumption A3 (b) of \citet{Kock2016}.} and \citet{KockTang2018}\footnote{See Assumption 5 (c) of \citet{KockTang2018}.} to our framework would require $ss^2_l (\log(p\vee (NM)))^2/(N \wedge M)=o(1)$.
This feature further extends the class of models that can be handled under our framework. 
Third, the sub-gaussianity assumption of covariates, which is assumed by the majority of papers in the de-biased lasso literature, is not required. 
Fourth, we allow for non-sparse coefficients based on the notion of approximate sparsity following that of \citet{BelloniChenChernozhukovHansen2012} instead of the $L^v$ sparsity for $0<v<1$ as in \citet{KockTang2018}.

With all these technical relations to the existing literature, we once again emphasize that our main contribution is the robust inference method for three-dimensional panels.
Unlike two-dimensional panels, there are a number of alternative combinations of fixed effect specifications in three-dimensional panels, and hence model selection is more important in these models \citep{LuMiaoSu2018}.
We apply and extend state-of-the-art technology \citep[e.g.,][]{BelloniChenChernozhukovHansen2012,Kock2016,KockTang2018} to this three-dimensional panel framework which concerns many empirical researchers.

{\bf Organization:}
The rest of this paper is organized as follows.
We introduce the model framework in Section \ref{sec:model}.
An overview of our proposed method is presented in Section \ref{sec:method}.
The main theoretical result is presented in Section \ref{sec:theory}, followed by sufficient conditions discussed in Section \ref{sec:sufficient_conditions_and_asymptotic_variance}.
We discuss the key assumption in the context of gravity analysis of international trade in Section \ref{sec:approximate_sparsity_in_gravity_analysis}.
We conduct simulation studies in Section \ref{sec:simulation_studies}.
Section \ref{sec:discussions} concludes the paper.

\section{The Model Framework}
\label{sec:model}

Consider the following representation of a general class of three-dimensional panel models with large $N$ and large $M$.
\begin{align}
y_{ijt} = {x}_{ijt}' \beta 
&
+ \sum_{i'=1}^{N} \alpha_{i'} \mathbbm{1}_{i=i'}
+ \sum_{j'=1}^{M} \gamma_{j'} \mathbbm{1}_{j=j'}
+ \sum_{t'=1}^{T} \lambda_{t'} \mathbbm{1}_{t=t'}
\notag\\
&
+ \sum_{i'=1}^{N} \sum_{t'=1}^{T} \alpha_{i't'} \mathbbm{1}_{i=i'} \mathbbm{1}_{t=t'}
+ \sum_{j'=1}^{N} \sum_{t'=1}^{T} \gamma_{j't'} \mathbbm{1}_{i=i'} \mathbbm{1}_{t=t'}
+ \varepsilon_{ijt}
\label{eq:three_way_representation}
\end{align}
This representation consists of a
$k$-dimensional parameter vector $\beta$, 
$N$-dimensional parameter vector $\alpha_{[N]}=(\alpha_1,...,\alpha_{N})'$,
$M$-dimensional parameter vector $\gamma_{[M]}=(\gamma_1,...,\gamma_{M})'$,
$T$-dimensional parameter vector $\lambda_{[T]}=(\lambda_1,...,\lambda_{T})'$,
$NT$-dimensional parameter vector $\alpha_{[NT]}=(\alpha_{11},...,\alpha_{NT})'$, and
$MT$-dimensional parameter vector $\gamma_{[MT]}=(\gamma_{11},...,\gamma_{MT})'$.
In total, there are $k+N+M+T+NT+MT$ parameters involved in this representation (\ref{eq:three_way_representation}).

Recall that conventional fixed effect models include
\begin{enumerate}[(I)]
	\item $\alpha_i + \gamma_j$,
	\item $\alpha_i + \gamma_j + \lambda_t$, and
	\item $\alpha_{it} + \gamma_{jt}$,
\end{enumerate}
among others.
Model (I) entails $k+N+M$ of possibly nonzero parameters $(\beta',\alpha_{[N]}',\gamma_{[M]}')'$, while the rest of the $T+NT+MT$ parameters $(\lambda_{[T]}',\alpha_{[NT]}',\gamma_{[MT]}')'$ are all zero.
Similarly, Model (II) entails $k+N+M+T$ of possibly nonzero parameters $(\beta',\alpha_{[N]}',\gamma_{[M]}',\lambda_{[T]})'$, while the rest of the $NT+MT$ parameters $(\alpha_{[NT]}',\gamma_{[MT]}')'$ are all zero.
Likewise, Model (III) entails $k+NT+MT$ of possibly nonzero parameters $(\beta',\alpha_{[NT]}',\gamma_{[MT]}')'$, while the rest of the $N+M+T$ parameters $(\alpha_{[N]}',\gamma_{[M]}',\lambda_{[T]}')'$ are all zero.
Furthermore, the representation (\ref{eq:three_way_representation}) includes many other combinations than these three models.

When Model (I) is true for example, then the representation (\ref{eq:three_way_representation}) has $T+NT+MT$ redundant parameters and hence estimating the model (\ref{eq:three_way_representation}) generally yields much larger standard errors for the parameters $\beta$ of interest than necessary.
This motivates the need of model selection.
We propose to use the lasso to select such redundant fixed effect parameters out of the representation (\ref{eq:three_way_representation}), and then conduct inference robustly accounting for the statistical effects of the model selection.

For ease of conducting econometric analysis, we further rewrite the representation (\ref{eq:three_way_representation}) as
\begin{align}
\label{eq:intermediate_representation}
y_{ijt} 
&= \mathbf{x}_{ijt}' \boldsymbol{\beta} 
+ \mathbf{d}_{1,it}' \boldsymbol{\alpha}
+ \mathbf{d}_{2,jt}' \boldsymbol{\gamma}
+ \varepsilon_{ijt}
\end{align}
where 
$\mathbf{x}_{ijt} = ({x}_{ijt}', \mathbbm{1}_{t=1}, ..., \mathbbm{1}_{t=T})'$ and 
$\boldsymbol{\beta} = (\beta', \lambda_1 , ..., \lambda_T)'$ are of dimension $k_0=k+T$,
$\mathbf{d}_{1,it} = (\mathbbm{1}_{i=1}, ..., \mathbbm{1}_{i=N}, \mathbbm{1}_{i=1}\mathbbm{1}_{t=1}, ..., \mathbbm{1}_{i=N}\mathbbm{1}_{t=T})'$ and
$\boldsymbol{\alpha} = (\alpha_{[N]},\alpha_{[NT]})'$ are of dimension $N_0=N+NT$, and
$\mathbf{d}_{2,jt} = (\mathbbm{1}_{j=1}, ..., \mathbbm{1}_{j=N}, \mathbbm{1}_{j=1}\mathbbm{1}_{t=1}, ..., \mathbbm{1}_{j=M}\mathbbm{1}_{t=T})'$ and
$\boldsymbol{\gamma} = (\gamma_{[M]},\gamma_{[MT]})'$ are of dimension $M_0=M+MT$.

Suppose that we can decompose the fixed effects $\boldsymbol{\alpha}$ into $\overline{\boldsymbol{\alpha}}$ and $\boldsymbol{\alpha} - \overline{\boldsymbol{\alpha}}$ and decompose the fixed effects $\boldsymbol{\gamma}$ into $\overline{\boldsymbol{\gamma}}$ and $\boldsymbol{\gamma} - \overline{\boldsymbol{\gamma}}$ such that 
\begin{align}\label{eq:alpha_decomposition}
\|\overline{\boldsymbol{\alpha}}\| \text{ is bounded and }
\sum_{i=1}^N \sum_{j=1}^M \sum_{t=1}^T \left( \mathbf{d}_{1,it}' \left(\boldsymbol{\alpha} - \overline{\boldsymbol{\alpha}}\right) \right)^2 \lesssim \norm{\boldsymbol{\beta}}_0 + \norm{\overline{\boldsymbol{\alpha}}}_0 + \norm{\overline{\boldsymbol{\gamma}}}_0
\end{align}
and
\begin{align}\label{eq:gamma_decomposition}
\|\overline{\boldsymbol{\gamma}}\| \text{ is bounded and }
\sum_{i=1}^N \sum_{j=1}^M \sum_{t=1}^T \left( \mathbf{d}_{2,it}' \left(\boldsymbol{\gamma} - \overline{\boldsymbol{\gamma}}\right) \right)^2 \lesssim \norm{\boldsymbol{\beta}}_0 + \norm{\overline{\boldsymbol{\alpha}}}_0 + \norm{\overline{\boldsymbol{\gamma}}}_0
\end{align}
hold,
where $\norm{\cdot}_0$ denotes the support cardinality (the $L^0$ norm).\footnote{With this said, we emphasize that this decomposition is merely theoretical, and a researcher need not implement such a decomposition in practice.
Precise requirements for the decomposition are stated in Assumptions \ref{a:sparsity}  and \ref{a:Theta} (4) ahead, followed by discussions in the context of our motivating application (\ref{eq:matyas}) in Remark \ref{remark:approximate_sparsity}.
In Section \ref{sec:approximate_sparsity_in_gravity_analysis}, we use world trade data to argue that these assumptions are plausible in the application (\ref{eq:matyas}).
}
Such a decomposition is constructed for example by setting $\overline{\boldsymbol{\alpha}}_\ell$ equal to $\boldsymbol{\alpha}_\ell$ for those coordinates $\ell$ for which $\abs{\boldsymbol{\alpha}_\ell}$ is large and setting $\overline{\boldsymbol{\alpha}}_\ell$ equal to zero for those coordinates $\ell$ for which $\abs{\boldsymbol{\alpha}_\ell}$ is small, and similarly for $\boldsymbol{\gamma}$.
Consequently, we can further rewrite the representation (\ref{eq:intermediate_representation}) as
\begin{align*}
y_{ijt} 
&= \mathbf{x}_{ijt}' \boldsymbol{\beta} 
+ \mathbf{d}_{1,it}' \overline{\boldsymbol{\alpha}}
+ \mathbf{d}_{2,jt}' \overline{\boldsymbol{\gamma}}
+ r_{ijt}
+ \varepsilon_{ijt}
\end{align*}
where $r_{ijt}$ is the approximation error defined by
\begin{align*}
r_{ijt} 
= \mathbf{d}_{1,it}' \left(\boldsymbol{\alpha} - \overline{\boldsymbol{\alpha}}\right)
+ \mathbf{d}_{2,jt}' \left(\boldsymbol{\gamma} - \overline{\boldsymbol{\gamma}}\right)
\end{align*}
and it satisfies
\begin{align*}
\sum_{i=1}^N \sum_{j=1}^M \sum_{t=1}^T r_{ijt}^2 \lesssim \norm{\boldsymbol{\beta}}_0 + \norm{\overline{\boldsymbol{\alpha}}}_0 + \norm{\overline{\boldsymbol{\gamma}}}_0.
\end{align*}

Stacking the three-dimensional panel data across the $NMT$ observations, we in turn construct the matrix representation
\begin{align}
Y 
= X\boldsymbol{\beta} + D_1 \overline{\boldsymbol{\alpha}} + D_2 \overline{\boldsymbol{\gamma}} + R + \varepsilon 
= Z\overline{\boldsymbol{\eta}} + R + \varepsilon,
\label{eq:representation}
\end{align}
where 
$Y=(y_{111},...,y_{NMT})'$,
$R=(r_{111},...,r_{NMT})'$, and
$\varepsilon=(\varepsilon_{111},...,\varepsilon_{NMT})'$,
are vectors of dimension $NMT$,
$X=(\mathbf{x}_{111},...,\mathbf{x}_{NMT})'$ is a matrix of size $NMT \times k_0$,
$D_1=(\mathbf{d}_{1,11},...,\mathbf{d}_{1,NT})'$ is a matrix of size $NMT \times N_0$,
$D_2=(\mathbf{d}_{2,11},...,\mathbf{d}_{2,MT})'$ is a matrix of size $NMT \times M_0$,
$Z=[X \ D_1 \ D_2]$, and
$\overline{\boldsymbol{\eta}} = [\boldsymbol{\beta}' \ \overline{\boldsymbol{\alpha}}' \ \overline{\boldsymbol{\gamma}}']'$ is a vector of dimension $p=k_0+N_0+M_0$.

If the true model is parsimonious, like Model (I), then a large number of the elements of the high-dimensional parameters, $\boldsymbol{\alpha}$ and $\boldsymbol{\gamma}$, will be zero.
Thus, a large number of the elements of $\overline{\boldsymbol{\alpha}}$ and $\overline{\boldsymbol{\gamma}}$ will be zero.
Furthermore, for those coordinates of $\boldsymbol{\alpha}$ and $\boldsymbol{\gamma}$ that are small in absolute value, the corresponding coordinates of $\overline{\boldsymbol{\alpha}}$ and $\overline{\boldsymbol{\gamma}}$ are set to zero in the decomposition in light of the relatively smaller approximation errors caused by setting them to zero.
We propose to use the lasso technique to select such redundant parameters in $\overline{\boldsymbol{\alpha}}$ and $\overline{\boldsymbol{\gamma}}$ out of this high-dimensional model as means of model selection for the purpose of obtaining smaller standard errors.
Furthermore, accounting for the statistical effects of this model selection, we then conduct robust inference for the main parameters $\boldsymbol{\beta}$ in the panel model.
Section \ref{sec:method} illustrates an overview of our proposed method.
A formal theoretical analysis will then follow in Sections \ref{sec:theory} and \ref{sec:sufficient_conditions_and_asymptotic_variance}.

\section{Overview of the Method}
\label{sec:method}

Our proposed method consists of four steps.
The first step is a lasso estimation of the parameter vector $\boldsymbol{\eta}$ entailing a model selection.
The second step is an auxiliary step to calculate an approximate inverse of the Gram matrix to be used in the subsequent two steps.
The third step de-biases the regularized lasso estimate from the first step.
The fourth step is a calculation of the asymptotic variance of each coordinate of the de-biased lasso estimator of $\boldsymbol{\beta}$.
\bigskip\\
{\bf Step 1:}
For the representing equation (\ref{eq:representation}), define the lasso estimator
\begin{align}\label{eq:lasso}
&\widehat{\boldsymbol{\eta}} \in \arg\min_{\boldsymbol{\eta} \in \mathbb{R}^p} \norm{Y-Z\boldsymbol{\eta}} + \mu P(\boldsymbol{\eta}),
\end{align}
where $\mu \in [0,\infty)$ is a regularization tuning parameter and the penalty function $P$ is defined by
\begin{align*}
&P(\boldsymbol{\eta}) = \norm{\widehat\Upsilon_1 \boldsymbol{\beta}}_1 + \frac{1}{\sqrt{N}} \norm{\widehat\Upsilon_2 \boldsymbol{\alpha}}_1 + \frac{1}{\sqrt{M}} \norm{\widehat\Upsilon_3 \boldsymbol{\gamma}}_1
\end{align*}
for some diagonal normalization matrix $\widehat\Upsilon_\ell$ for each $\ell \in \{1,2,3\}$.\footnote{See Remark \ref{rem:weights} in Appendix \ref{sec:oracle_inequalities}.}
In practice, the regularization tuning parameter $\mu$ can be chosen using a cross validation via software packages.
\bigskip\\
{\bf Step 2:}
The next step is an auxiliary process to obtain a $p \times p$ matrix $\widehat\Theta$ of approximate inverse of the Gram matrix to be used in Step 3.
We define the nodewise lasso estimator
\begin{align}\label{eq:node_wise_lasso}
&\widehat\phi^\ell \in \arg\min_{\phi \in \mathbb{R}^{k_0-1}} \norm{Z^\ell-Z^{-\ell} \phi}^2 + \mu_{\text{node}}^\ell \norm{\frac{1}{\sqrt{NM}} S_{-\ell} \widehat\Upsilon_{\text{node}}^\ell\phi}_1
\end{align}
of the $\ell$-th column $Z^{\ell}$ on all the other $(p-1)$ columns $Z^{-\ell}$ for each $\ell \in \{1,...,p\}$, where $\mu_{\text{node}}^\ell \in [0,\infty)$ is a regularization tuning parameter, $\widehat\Upsilon_{\text{node}}^\ell$ is some diagonal normalization matrix for each $\ell \in \{1,...,p\}$, and $S_{-\ell}$ is the $(p-1) \times (p-1)$ matrix obtained by removing the $\ell$-th row and the $\ell$-th column of 
\begin{align*}
S &=
\left[\begin{array}{ccc}
\sqrt{NM}I_{k_0} & 0 & 0\\
0 & \sqrt{M}I_{N_0} & 0\\
0 & 0 & \sqrt{N}I_{M_0}
\end{array}\right].
\end{align*}
In practice, the regularization tuning parameter $ \mu_{\text{node}}^\ell$ can be chosen using a cross validation via software packages.

Once the nodewise lasso estimates $\widehat\phi^\ell$ are obtained, a $p \times p$ matrix $\widehat\Theta$ approximating the inverse Gram matrix can be constructed by
\begin{align}
\widehat\Theta &=
\left[\begin{array}{ccccc}
\widehat\tau_1^{-2} & 0 & \cdots & 0 & 0\\
0 & \widehat\tau_2^{-2} & & 0 & 0\\
\vdots & & \ddots & & \vdots\\
0 & 0 & & \widehat\tau_{p-1}^{-2} & 0\\
0 & 0 & \cdots & 0 & \widehat\tau_{p}^{-2}
\end{array}\right]
\left[\begin{array}{ccccc}
1 & -\widehat\phi_1^1 & \cdots & -\widehat\phi_{p-2}^1 & -\widehat\phi_{p-1}^1\\
-\widehat\phi_1^2 & 1 & \cdots & -\widehat\phi_{p-2}^2 & -\widehat\phi_{p-1}^2\\
\vdots & & \ddots & & \vdots\\
-\widehat\phi_1^{p-1} & -\widehat\phi_2^{p-1} & \cdots & 1 & -\widehat\phi_{p-1}^{p-1}\\
-\widehat\phi_1^{p} & -\widehat\phi_2^{p} & \cdots & -\widehat\phi_{p-1}^{p} & 1
\end{array}\right],
\label{eq:theta_hat}
\end{align}
with $\widehat\tau_\ell$ given by
\begin{align*}
&\widehat\tau_\ell^2 = \frac{1}{NM} \norm{Z_\ell - Z_{-\ell}\widehat\phi^\ell}^2 + \frac{\mu^\ell_{\text{node}}}{NM} \norm{\frac{1}{\sqrt{NM}} S_{-\ell} \widehat\Upsilon_{-\ell} \widehat\phi^\ell}_1
\end{align*}
for each $\ell \in \{1,...,p\}$
and $\widehat\phi^\ell_l$ denoting the $l$-th coordinate of the nodewise lasso estimate $\widehat\phi^\ell$ for each $\ell \in \{1,...,p\}$ and $l \in \{1,...,p-1\}$.
\bigskip\\
{\bf Step 3:}
The shrinkage by the regularization $\mu P(\boldsymbol{\eta})$ forces a sub-vector of the lasso estimates $\widehat{\boldsymbol{\eta}}$ to be zero, and this mechanism serves as means of model selection.
Since this regularization biases the second-stage lasso estimator $\widehat{\boldsymbol{\eta}}$, we further `de-bias' it according to
\begin{align}
\label{eq:de-biased}
\widetilde {\boldsymbol{\eta}}_\ell = \widehat {\boldsymbol{\eta}}_\ell +\frac{1}{NM}\widehat \Theta_\ell' Z'(Y-Z\widehat {\boldsymbol{\eta}}),
\end{align}
for each $\ell \in [p]$,
where 
$\widehat\Theta_\ell$ is the $\ell$-th column of $\widehat\Theta$
and
$\widehat\Theta$ is the $p \times p$ approximate inverse Gram matrix constructed in Step 2.
The sub-vectors of $\widetilde{\boldsymbol{\eta}}$ will be denoted by $\widetilde{\boldsymbol{\eta}} = \left( \widetilde{\boldsymbol{\beta}}', \widetilde{\boldsymbol{\alpha}}', \widetilde{\boldsymbol{\gamma}} \right)'$.
\bigskip\\
{\bf Step 4:}
The asymptotic variance of $\sqrt{NM}\left(\widetilde{\boldsymbol{\beta}}_\ell - \boldsymbol{\beta}_\ell\right)$ for $\ell \in \{1,...,k_0\}$ is approximated by
\begin{align*}
\widehat V_{\ell\ell} =  \widehat\Theta_\ell' \widehat\Omega \widehat\Theta_\ell
\end{align*}
where 
$\widehat\Theta_\ell$ is defined in Step 3,
\begin{align*}
\widehat\Omega = \frac{1}{NM} \sum_{i=1}^N\sum_{j=1}^M \left(\sum_{t=1}^TZ_{ijt}\widehat\varepsilon_{ijt}\right) \left(\sum_{t=1}^TZ_{ijt}\widehat\varepsilon_{ijt}\right)',
\end{align*}
and $\widehat\varepsilon_{ijt}$ is the residual from the lasso in Step 1.

\section{The Main Theory}
\label{sec:theory}

Define the de-biased lasso estimator by
\begin{align}\label{eq:de_biased}
\widetilde{\boldsymbol{\eta}} = \widehat{\boldsymbol{\eta}} - \frac{\mu}{NM} \widehat \Theta P'(\widehat{\boldsymbol{\eta}}),
\end{align}
where $P'$ denotes the sub-gradient of $P$.
Recall that the sub-vectors of $\widetilde{\boldsymbol{\eta}}$ are denoted by $\widetilde{\boldsymbol{\eta}} = [ \widetilde{\boldsymbol{\beta}}', \widetilde{\boldsymbol{\alpha}}', \widetilde{\boldsymbol{\gamma}} ]'$, corresponding to $\overline{\boldsymbol{\eta}} = [\boldsymbol{\beta}' \ \overline{\boldsymbol{\alpha}}' \ \overline{\boldsymbol{\gamma}}']'$.
This section presents a general limit distribution result for each coordinate of the de-biased lasso estimator $\widetilde{\boldsymbol{\beta}}$ for the coefficients of $x_{ijt}$.
We focus on short panels with fixed $T$ and large $(N,M)$, although an extension to large $T$ cases may be feasible with alternative assumptions.
While we maintain high-level assumptions in the current section for the sake of generality, we will follow up with lower-level sufficient conditions in Section \ref{sec:sufficient_conditions}.
Define the $p \times p$ rate-adjusted Gram matrix
\begin{align}\label{eq:psi_bar}
\bar\Psi=
\begin{bmatrix}
\frac{1}{NM}X'X & \frac{1}{M\sqrt{N}} X'D_1 & \frac{1}{N\sqrt{M}} X'D_2\\
 \frac{1}{M\sqrt{N}} D_1'X & \frac{1}{M}D_1'D_1 & \frac{1}{\sqrt{NM}}
 D_1'D_2\\
\frac{1}{N\sqrt{M}} D_2'X& \frac{1}{\sqrt{NM}} D_2'D_1& \frac{1}{N} D_2' D_2 
\end{bmatrix},
\end{align}
Let $[n] = \{1,...,n\}$ for any $n \in \mathbb{N}$.
With these notations, consider the following assumption.

\begin{assumption}[Asymptotic Normality]\label{a:asymptotic_normality}
For all $(N,M)$, there exists a column random vector $\widehat \Theta_l$ such that the following conditions hold for an $(N,M)$-dependent choice of $\mu$ as $N,M\rightarrow\infty$.
\begin{enumerate}[(i)]
\item $\max_{l\in[k_0]} \abs{ \sqrt{NM}(\widehat\Theta_l' Q\bar\Psi Q  -e_l')(\widehat {\boldsymbol{\eta}}  - \overline{\boldsymbol{\eta}} ) } = o_p(1)$.
\item $\max_{l\in[k_0]} \abs{ \widehat \Theta_l' Z'R/\sqrt{NM} }=o_p(1)$.
\item For each $l\in[k_0]$, there exists $V_{ll} \in (0,\infty)$ that can depend on $(N,M)$ such that
\begin{align*}
V_{ll}^{-1/2}\widehat\Theta_l' Z'\varepsilon /\sqrt{NM} \leadsto N(0,1).
\end{align*}
\end{enumerate}
\end{assumption}

In the current general theoretical discussions, Assumption \ref{a:asymptotic_normality} merely requires an existence of some $\widehat \Theta_l$ satisfying the three conditions, and does not say how it should be constructed.
Recall that the overview of the method in Section \ref{sec:method} suggests a concrete way to construct such $\widehat \Theta_l$.
Section \ref{sec:sufficient_conditions_and_asymptotic_variance} ahead will discuss lower-level sufficient conditions to guarantee that such a concrete construction of $\widehat \Theta_l$ satisfies the three high-level conditions in Assumption \ref{a:asymptotic_normality}.

\begin{theorem}[Asymptotic Normality]\label{theorem:asymptotic_normality}
Suppose that Assumption \ref{a:asymptotic_normality} (i)--(ii) are satisfied. Then,
\begin{align*}
\widetilde {\boldsymbol{\eta}}_l - \overline{\boldsymbol{\eta}}_l = \frac{1}{NM} \widehat \Theta'_l Z'\varepsilon+o_p \left( 1/\sqrt{NM} \right)
\end{align*}
for each $l \in [p]$.
Furthermore, if Assumption \ref{a:asymptotic_normality} (iii) is satisfied in addition,  then we have
\begin{align*}
\sqrt{NM}(\widetilde {\boldsymbol{\beta}}_l - {\boldsymbol{\beta}}_l)\leadsto  N(0,V_{ll})
\end{align*}
for each $l\in [k_0]$.
\end{theorem}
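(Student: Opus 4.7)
The plan is to perform an algebraic expansion of the de-biased estimator around $\overline{\boldsymbol{\eta}}$, so that the conditions of Assumption \ref{a:asymptotic_normality} match up with the resulting remainder terms term-by-term. The statement then follows essentially by substitution.

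First I would reconcile the two formulas for the de-biased estimator. The formula (\ref{eq:de_biased}) uses the subgradient $P'(\widehat{\boldsymbol{\eta}})$, whereas the Step 3 formula (\ref{eq:de-biased}) uses the lasso residual $Z'(Y-Z\widehat{\boldsymbol{\eta}})$; by the KKT conditions for (\ref{eq:lasso}), these two expressions agree (up to the standard factor absorbed into $\mu$). I will work with the residual form, which is more transparent. Substituting $Y = Z\overline{\boldsymbol{\eta}} + R + \varepsilon$ from (\ref{eq:representation}) gives, for each $\ell \in [p]$,
\begin{align*}
\widetilde{\boldsymbol{\eta}}_\ell - \overline{\boldsymbol{\eta}}_\ell
&= (\widehat{\boldsymbol{\eta}}_\ell - \overline{\boldsymbol{\eta}}_\ell) - \tfrac{1}{NM}\widehat\Theta_\ell' Z'Z(\widehat{\boldsymbol{\eta}}-\overline{\boldsymbol{\eta}}) + \tfrac{1}{NM}\widehat\Theta_\ell' Z' R + \tfrac{1}{NM}\widehat\Theta_\ell' Z'\varepsilon.
\end{align*}

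Next I would connect $\tfrac{1}{NM}Z'Z$ to the rate-adjusted Gram matrix $\bar\Psi$ defined in (\ref{eq:psi_bar}). Defining the diagonal rescaling matrix $Q = \operatorname{diag}(I_{k_0}, N^{-1/2} I_{N_0}, M^{-1/2} I_{M_0})$, a direct block-by-block comparison yields the identity $\tfrac{1}{NM}Z'Z = Q\bar\Psi Q$. Plugging this in and regrouping,
\begin{align*}
\widetilde{\boldsymbol{\eta}}_\ell - \overline{\boldsymbol{\eta}}_\ell
&= -\bigl(\widehat\Theta_\ell' Q\bar\Psi Q - e_\ell'\bigr)(\widehat{\boldsymbol{\eta}}-\overline{\boldsymbol{\eta}}) + \tfrac{1}{NM}\widehat\Theta_\ell' Z' R + \tfrac{1}{NM}\widehat\Theta_\ell' Z'\varepsilon.
\end{align*}
Multiplying through by $\sqrt{NM}$, the first summand on the right is $o_p(1)$ uniformly in $\ell \in [k_0]$ by Assumption \ref{a:asymptotic_normality}(i), and the second summand is $o_p(1)$ by Assumption \ref{a:asymptotic_normality}(ii). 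This yields the first conclusion,
\begin{align*}
\widetilde{\boldsymbol{\eta}}_\ell - \overline{\boldsymbol{\eta}}_\ell = \tfrac{1}{NM}\widehat\Theta_\ell' Z'\varepsilon + o_p(1/\sqrt{NM}),
\end{align*}
after noting that the displayed identity holds coordinate-wise for all $\ell \in [p]$ (Assumption (i)--(ii) are only needed for the target coordinates $\ell \in [k_0]$ to drive the $o_p$ claim).

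For the asymptotic normality statement, I recall that the first $k_0$ coordinates of $\overline{\boldsymbol{\eta}}$ coincide with $\boldsymbol{\beta}$ by the construction in (\ref{eq:representation}), so $\overline{\boldsymbol{\eta}}_\ell = \boldsymbol{\beta}_\ell$ for $\ell \in [k_0]$. Hence, for such $\ell$,
\begin{align*}
\sqrt{NM}\bigl(\widetilde{\boldsymbol{\beta}}_\ell - \boldsymbol{\beta}_\ell\bigr) = \tfrac{1}{\sqrt{NM}}\widehat\Theta_\ell' Z'\varepsilon + o_p(1),
\end{align*}
and Assumption \ref{a:asymptotic_normality}(iii) combined with Slutsky's lemma delivers $\sqrt{NM}(\widetilde{\boldsymbol{\beta}}_\ell - \boldsymbol{\beta}_\ell) \leadsto N(0, V_{\ell\ell})$.

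Given this structure, there is no real obstacle inside the theorem itself: the proof is essentially bookkeeping on the identity $\tfrac{1}{NM}Z'Z = Q\bar\Psi Q$ and the KKT-based equivalence of the two de-biased forms. The genuine work lies upstream, in verifying that the high-level Assumption \ref{a:asymptotic_normality} holds — in particular condition (i), which demands both a tight $\ell_1$ rate for $\widehat{\boldsymbol{\eta}} - \overline{\boldsymbol{\eta}}$ from the lasso oracle inequality and a sup-norm control on the entries of $\widehat\Theta_\ell' Q\bar\Psi Q - e_\ell'$ from the nodewise lasso in Step 2. That verification is deferred to Section \ref{sec:sufficient_conditions_and_asymptotic_variance}.
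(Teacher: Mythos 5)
Your proposal is correct and follows essentially the same route as the paper: the paper starts from the K.K.T. condition $\mu P'(\widehat{\boldsymbol{\eta}})=Z'(Y-Z\widehat{\boldsymbol{\eta}})$, substitutes $Y=Z\overline{\boldsymbol{\eta}}+R+\varepsilon$, uses the identity $Z'Z=S\bar\Psi S$ (equivalently $\tfrac{1}{NM}Z'Z=Q\bar\Psi Q$), and arrives at exactly your three-term decomposition, which is then handled term-by-term with Assumption \ref{a:asymptotic_normality} (i)--(iii). The only difference is cosmetic — you work with the residual form (\ref{eq:de-biased}) while the paper works with the subgradient form (\ref{eq:de_biased}) — and these coincide by the K.K.T. condition, as you note.
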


A proof is found in Appendix \ref{sec:proof:theorem:asymptotic_normality}.

\begin{remark}
The de-biased lasso estimator $\widetilde {\boldsymbol{\eta}}_l = \widehat {\boldsymbol{\eta}}_l -\frac{\mu }{NM}\widehat \Theta'_l P'(\widehat {\boldsymbol{\eta}})$ can be also rewritten by replacing $\mu P'(\widehat {\boldsymbol{\eta}})$ by $-Z'(Y-Z\widehat {\boldsymbol{\eta}})$ following the K.K.T. condition, i.e., $\widetilde {\boldsymbol{\eta}}_l = \widehat {\boldsymbol{\eta}}_l +\frac{1}{NM}\widehat \Theta_l' Z'(Y-Z\widehat {\boldsymbol{\eta}})$. 
This representation yields the de-biased lasso formula proposed in (\ref{eq:de-biased}).
\end{remark}

\section{Sufficient Conditions and Variance Estimation}
\label{sec:sufficient_conditions_and_asymptotic_variance}

In this section, we propose lower-level sufficient conditions for the high-level general statements in Assumption \ref{a:asymptotic_normality}.
These conditions provide a theoretical guarantee for the concrete practical procedure of Section \ref{sec:method} to work.
While the general limit distribution result in Theorem \ref{theorem:asymptotic_normality} did not specify a concrete form of the asymptotic variance $V_{ll}$, the current section also provides a formula for it under these sufficient conditions.
Furthermore, we propose an analog variance estimator $\widehat V_{ll}$, and show its consistency under these sufficient conditions.

Throughout this section, we will assume $\widehat \Upsilon=I_p$ and $\widehat \Upsilon_{\text{node},l}=I_{p-1}$ for all $l\in[k_0]$ for simplicity, although these restrictions are not essential at all.
We use the following notations for the parameter supports:
$J_1 = \text{supp}(\boldsymbol{\beta}),$
$J_2 = \text{supp}(\overline{\boldsymbol{\alpha}}),$
$J_3 = \text{supp}(\overline{\boldsymbol{\gamma}}),$
and
$J = \text{supp}(\overline{\boldsymbol{\eta}})$,
Their cardinalities are denoted by 
$s_1 = |J_1|,$
$s_2 = |J_2|,$
$s_3 = |J_3|,$
and
$s = |J|.$
We note that $s$ is non-decreasing in $N$ and/or $M$.
Similarly to the decomposition (\ref{eq:representation}) for the main regression model, we also consider the decomposition
\begin{align}
&Z_l= Z_{-l}\phi^l +r_l + \zeta_l,
\label{eq:auxiliary_decomposition}
\\
&E[Z_{-l}\zeta_l]=0
\nonumber
\end{align}
for each coordinate $l\in [k_0]$ of the regressors.

\subsection{Sufficient Conditions}
\label{sec:sufficient_conditions}

We present sufficient conditions as five modules, Assumptions \ref{a:sparsity}, \ref{a:moments}, \ref{a:sparse_eigenvalues}, \ref{a:Theta}, and \ref{a:variance}, listed below.

\begin{assumption}[Approximate Sparsity]\label{a:sparsity}
(1) $\| \overline{\boldsymbol{\eta}} \|\le K$. 
(2) $\|R\|\le c_s \lesssim \sqrt{s }$ with probability $1-o(1)$.
(3) 
$
\| Z'R \| = o_p\Big( \sqrt{NM} \Big).
$
\end{assumption}

Recall that the fixed effects $\boldsymbol{\alpha}$ are decomposed into $\overline{\boldsymbol{\alpha}}$ and $\boldsymbol{\alpha} - \overline{\boldsymbol{\alpha}}$ such that (\ref{eq:alpha_decomposition}) is satisfied, and the fixed effects $\boldsymbol{\gamma}$ are decomposed into $\overline{\boldsymbol{\gamma}}$ and $\boldsymbol{\gamma} - \overline{\boldsymbol{\gamma}}$ such that (\ref{eq:gamma_decomposition}) is satisfied.
These conditions (\ref{eq:alpha_decomposition}) and (\ref{eq:gamma_decomposition}) are imposed to satisfy Assumption \ref{a:sparsity} (1) and (2).
Assumption \ref{a:sparsity} (3) can be relaxed to a weaker condition,\footnote{For example, $\sup_{\substack{\|\xi\|=1 \\ \|\xi\|_0 = Cs}} \| \xi' Z'R \| = o_p(\sqrt{NM})$ for some finite positive $C$.} but we present the current condition for its better interpretation.

\begin{remark}[Discussion of the Approximate Sparsity Condition]\label{remark:approximate_sparsity}
We emphasize that the approximate sparsity condition of Assumption \ref{a:sparsity}  (together with Assumption \ref{a:Theta} (4) to be stated below) allows for many and even all the fixed effects (i.e., $\boldsymbol{\eta}$ as opposed to $\overline{\boldsymbol{\eta}}$) to be nonzero.
The assumption should be interpreted as a requirement for how the fixed effects can be decomposed into the sparse components ($\overline{\boldsymbol{\alpha}}$ and $\overline{\boldsymbol{\gamma}}$) and the remaining components ($\boldsymbol{\alpha} - \overline{\boldsymbol{\alpha}}$ and $\boldsymbol{\gamma} - \overline{\boldsymbol{\gamma}}$) generating
$
R
= D_1 \left(\boldsymbol{\alpha} - \overline{\boldsymbol{\alpha}}\right)
+ D_2 \left(\boldsymbol{\gamma} - \overline{\boldsymbol{\gamma}}\right).
$
Indeed, the assumption implicitly imposes a non-trivial restriction on sampling procedures.
For example, an i.i.d. sampling of fixed effects is not accommodated, although this feature does not contradict with our sampling assumption to be stated below as Assumption \ref{a:moments}.
With this said, the same limitations apply to all the preceding papers (cf. Section \ref{sec:introduction}) that employ (approximate) sparsity conditions on fixed effects in panel data.
In fact, the approximate sparsity is a rather plausible assumption for the sampling process in the context of our motivating application (\ref{eq:matyas}).
In gravity analysis of trade, researchers initially used only the G7 countries, later added the OECD countries, and smaller economies have been added more recently.
Nearly half of all import and export flows are determined by the top ten largest economies.
Newly added countries to the sample tend to have very small trade volumes.
This sampling process entails fixed effects taking smaller values as sample size increases, and it does not contradict with the approximate sparsity requirement.
Section \ref{sec:approximate_sparsity_in_gravity_analysis} elaborates on the approximate sparsity in trade volumes based on the actual world trade data.
$\triangle$
\end{remark}

\begin{assumption}[Moments]\label{a:moments}
For each $(N,M)$, the random vectors $(Y'_{ij1},Z'_{ij1},...,Y'_{ijT},Z'_{ijT})'$, $(i,j)\in [N]\times [M]$, are independently distributed. 
Furthermore, there exist $q \in (4, \infty)$ and $K \in (0,\infty)$ not depending on $(N,M)$ such that the following conditions hold for all $l\in[k_0]$.
\begin{enumerate}[(1)]
\item
$
\Big( \frac{1}{NM}\sumi\sumj E\Big[\max_{ t\le T}\|X_{ijt}\|^{2q}_\infty\Big]\Big)^{1/2q}\le B_{NM}
$
and
$
\Big( E|X_{ijt,l}|^{2q} \Big)^{1/2q} \le K
$
hold for all $i,j,t,l$, where $B_{MN}$ satisfies $B_{NM}\sqrt{\log (p \vee (NM))}\lesssim (NM)^{1/2-1/q}$;
\item $\|(D_1,D_2)\|_\infty=1$; and
\item $\frac{1}{NM}\sumi\sumj\sumt E\varepsilon_{ijt}^{2q}\vee \frac{1}{NM}\sumi\sumj\sumt E(\zeta^l_{ijt})^{2q}\le K^{2q} <\infty$.
\end{enumerate}
\end{assumption}

For any squared matrix $A$, define the sparse eigenvalues by
\begin{align*}
\varphi_{\min} (A,m)=\underset{ \substack{
\|\xi\|=1\\ \|\xi\|_0\le m
}}{\inf}\xi'A \xi 
\:\text{ and }\:
 \varphi_{\max} (A,m) =
 \underset{ \substack{
\|\xi\|=1\\ \|\xi\|_0\le m
}}{\sup}\xi'A \xi .
\end{align*} 
With these notations, we state the following assumption of sparse eigenvalues for the rate-adjusted Gram matrix $\bar\Psi$ defined in (\ref{eq:psi_bar}).

\begin{assumption}[Sparse Eigenvalues]\label{a:sparse_eigenvalues}
For any $C > 0$, there exist constants $0<\underline k < \overline k<\infty$, not depending on $(N,M)$, such that 
\begin{align*}
\underline k \le \varphi_{\min} (\bar\Psi,Cs)\le\varphi_{\max} (\bar\Psi,Cs) \le \overline k
\end{align*}
with probability approaching one.
\end{assumption}

For each $(N,M)$, we write $\Psi = E\bar\Psi$ depending on $(N,M)$,
With this notation, the auxiliary decomposition (\ref{eq:auxiliary_decomposition}) is made according to the following conditions.

\begin{assumption}[Nuisance Parameters]\label{a:Theta}
The following conditions are satisfied.
\begin{enumerate}[(1)]
\item $\max_{l \in [k_0]}\|\phi^l\|_0\le s_l$ and $\max_{l \in [k_0]}\|\phi^l\|+(s_l)^{-1/2}\|\phi^l\|_1\le K$;
\item For all $l\in [k_0]$, $\|r_l\|\le \sqrt{s_l}$;
\item For all $(N,M)$, $0<L<\Lambda_{\min}(\Psi)<\Lambda_{\max}(\Psi)<U<\infty$ for $L$, $U$ independent of $(N,M)$;
\item $\max_{l\in[k_0]}(s_l\vee s)\sqrt{\frac{(\log (p \vee (NM)))^2}{N\wedge M}}=o(1)$.
\end{enumerate}
\end{assumption}

Accounting for the possible dependence, we define the cluster-robust variance matrix
\begin{align*}
\Omega=& E\left[ \frac{1}{NM}\sumi\sumj   \Big(
\sumt Z_{ijt} \varepsilon_{ijt}
\Big) \Big(
\sumt Z_{ijt} \varepsilon_{ijt}
\Big)'\right].
\end{align*}
For each $(N,M)$, we write $\Theta=(E[\frac{Z'Z}{NM}])^{-1}$ depending on $(N,M)$.
Let $\Theta_l$ denote the $l$-th column of $\Theta$.
We state the following assumption of finite and non-zero variance.

\begin{assumption}[Variance]\label{a:variance}
For any $(N,M)$ and for all $l\in[k_0]$,
$
\|\Omega\| <\infty
$
and
$
\Theta_l'\Omega\Theta_l \ge\underline k >0
$
for a constant $\underline k $ which is independent of the sample size.
\end{assumption}
\begin{remark}
Notice that the conditions above are imposed on the Gram matrices, $\bar \Psi$ and $\Psi$, re-weighted by effective sample size, rather than the original Gram matrices, $Z'Z/NM$ and $EZ'Z/NM$.
Assumption \ref{a:moments} is weaker than the common assumptions required in the literature, such as sub-gaussianity or uniform boundedness.
Assumption \ref{a:sparse_eigenvalues} is also assumed by \cite{BelloniChenChernozhukovHansen2012} and \cite{BelloniChernozhukovHansenKozbur2016}. It requires some small sub-matrices of the big $p\times p$ re-weighted
Gram matrix to be well-behaved. 
Lower level sufficient conditions are also possible by using Lemma P1 in \cite{BelloniChernozhukovChetverikovWei2018}, but are not pursued here. 
Assumption \ref{a:Theta} (1) and (2) impose sparsity on the nodewise regression parameters and the approximation errors. Assumption \ref{a:Theta} (3) requires $\Psi$, the expectation of the re-weighted Gram matrix, to be positive definite uniform over $(N,M)$. These are rather standard in the literature. Assumption \ref{a:Theta} limits the models that can be handled in terms of their dimensionality and sparsity. 
Note that we need only $ss_l(\log (p\vee (NM)))^2/(N \wedge M)=o(1)$, whereas an adaptation of the proof strategies of \citet{Kock2016} and \citet{KockTang2018} to our framework would entail $ss^2_l(\log(p\vee (NM)))^2/(N \wedge M)=o(1)$. Finally, Assumption \ref{a:variance} requires $\Omega$ in the sandwich form to be well-behaved.
\end{remark}

The following proposition states that Assumptions \ref{a:sparsity}, \ref{a:moments}, \ref{a:sparse_eigenvalues}, \ref{a:Theta}, and \ref{a:variance} are sufficient for the high-level conditions in Assumption \ref{a:asymptotic_normality}, with a concrete variance formula motivating the practical guideline of Section \ref{sec:method}.

\begin{proposition}\label{prop:sufficient}
Assumptions \ref{a:sparsity}, \ref{a:moments}, \ref{a:sparse_eigenvalues}, \ref{a:Theta}, and \ref{a:variance} imply Assumption \ref{a:asymptotic_normality} with $V_{ll} = \Theta_l' \Omega \Theta_l$.
\end{proposition}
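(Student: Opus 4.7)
The plan is to verify conditions (i)--(iii) of Assumption \ref{a:asymptotic_normality} under the stated sufficient conditions, while simultaneously identifying $V_{ll}$ as $\Theta_l'\Omega\Theta_l$. My preparatory step is to derive two parallel sets of Lasso oracle inequalities. For (\ref{eq:lasso}), I would adapt the BCCH-style restricted-eigenvalue argument to the rate-adjusted geometry: letting $Q$ denote the diagonal rescaling satisfying $\bar\Psi = QZ'ZQ$ (with blocks $I_{k_0}/\sqrt{NM}$, $I_{N_0}/\sqrt{M}$, $I_{M_0}/\sqrt{N}$), Assumptions \ref{a:sparsity}--\ref{a:sparse_eigenvalues} deliver $\|Q^{-1}(\widehat{\boldsymbol{\eta}}-\overline{\boldsymbol{\eta}})\|_2 = O_p(\sqrt{s\log(p\vee NM)/(N\wedge M)})$ and the corresponding $\ell_1$ bound with an extra $\sqrt{s}$ factor. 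For the $p$ nodewise programs (\ref{eq:node_wise_lasso}), analogous inequalities yield $\|\widehat\phi^l - \phi^l\|_1 = O_p(s_l\sqrt{\log(p\vee NM)/(N\wedge M)})$, $\widehat\tau_l^2$ bounded away from zero in probability via Assumption \ref{a:Theta}(3), and hence $\|\widehat\Theta_l\|_1 = O_p(1)$; the nodewise KKT conditions translate into the pivotal bound $\|\widehat\Theta_l' Q\bar\Psi Q - e_l'\|_\infty = O_p(\sqrt{\log(p\vee NM)/(N\wedge M)})$.

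To verify (i), I would split $\widehat{\boldsymbol{\eta}}-\overline{\boldsymbol{\eta}}$ along the union of the supports of $\overline{\boldsymbol{\eta}}$ and $\phi^l$, of combined cardinality $O(s+s_l)$: on this support the quadratic form is controlled through the sparse-eigenvalue Assumption \ref{a:sparse_eigenvalues}, while off this support one pairs the KKT deficit with $\|Q^{-1}(\widehat{\boldsymbol{\eta}}-\overline{\boldsymbol{\eta}})\|_1$ via Hölder; combining yields an overall rate of order $\sqrt{ss_l}\log(p\vee NM)/\sqrt{N\wedge M}$, which is $o_p(1/\sqrt{NM})$ exactly under Assumption \ref{a:Theta}(4). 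This splitting---mirroring \citet[Lemma 8]{BelloniChenChernozhukovHansen2012}---is what trades the naive $s^2$ requirement for $ss_l$. Condition (ii) is more direct: decompose $\widehat\Theta_l' Z'R/\sqrt{NM} = \Theta_l' Z'R/\sqrt{NM} + (\widehat\Theta_l-\Theta_l)' Z'R/\sqrt{NM}$; the first piece is $o_p(1)$ by Cauchy--Schwarz using Assumption \ref{a:sparsity}(3) together with $\|\Theta_l\|=O(1)$ from Assumption \ref{a:Theta}(3), and the remainder is controlled by $\|\widehat\Theta_l-\Theta_l\|_1 \cdot \|Z'R\|_\infty/\sqrt{NM}$ using the nodewise rates above.

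For (iii), the strategy is to reduce to the infeasible statistic $\Theta_l'Z'\varepsilon/\sqrt{NM}$. The replacement error is bounded by $\|\widehat\Theta_l-\Theta_l\|_1\cdot\|Z'\varepsilon/\sqrt{NM}\|_\infty$, the latter of order $O_p(\sqrt{\log(p\vee NM)})$ by a maximal inequality for independent sums under the $(2q)$-th moment bound in Assumption \ref{a:moments}(3). The infeasible statistic rewrites as $(NM)^{-1/2}\sum_{i,j}\Theta_l'\bigl(\sum_t Z_{ijt}\varepsilon_{ijt}\bigr)$, a sum of $NM$ independent mean-zero summands by Assumption \ref{a:moments}, with average variance $\Theta_l'\Omega\Theta_l \ge \underline k > 0$ by Assumption \ref{a:variance}; the Lyapunov condition is verified from the uniform moment bounds in Assumption \ref{a:moments} together with $\|\Omega\|<\infty$, delivering the claimed $N(0,1)$ limit with $V_{ll}=\Theta_l'\Omega\Theta_l$. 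The main obstacle I anticipate is the preparatory rate step: the heterogeneous three-block rescaling in $\bar\Psi$ forces every oracle inequality to be re-derived in the $Q$-weighted norm, carefully tracking the interaction of $\widehat\Upsilon$ and $\widehat\Upsilon_{\text{node}}^l$ with the sparse-eigenvalue condition on $\bar\Psi$; once these rates are in hand, the support-splitting for (i), the Hölder decomposition for (ii), and the Lyapunov CLT for (iii) are essentially routine.
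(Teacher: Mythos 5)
Your proposal is correct and follows essentially the same architecture as the paper's: oracle inequalities for the main and nodewise lasso programs in the rate-adjusted geometry (the paper's Lemmas \ref{lemma:rates_eta} and \ref{lemma:Theta_1_way}), the nodewise K.K.T. bound $\max_{l}\|\widehat\Theta_l' Q\bar\Psi Q-e_l'\|_\infty=O_p(\sqrt{\log(p\vee NM)/(NM)})$ for condition (i), a Cauchy--Schwarz bound combined with Assumption \ref{a:sparsity} (3) for condition (ii), and replacement of $\widehat\Theta_l$ by $\Theta_l$ followed by a Lyapunov CLT over the $NM$ independent clusters for condition (iii), exactly as in Lemmas \ref{lemma:approximation_error_Delta}, \ref{lemma:suff_2}, and \ref{lemma:suff_3}. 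The one place you genuinely deviate is condition (i): the paper closes that bound with the plain dual-norm inequality $|(\widehat\Theta_l' Q\bar\Psi Q-e_l')(\widehat{\boldsymbol{\eta}}-\overline{\boldsymbol{\eta}})|\le\|\widehat\Theta_l' Q\bar\Psi Q-e_l'\|_\infty\,\|\widehat{\boldsymbol{\eta}}-\overline{\boldsymbol{\eta}}\|_1$, which gives a rate of order $s\log(p\vee NM)/\sqrt{N\wedge M}$ and is already $o(1)$ under Assumption \ref{a:Theta} (4); your support-splitting refinement is valid but not needed here, and you misplace the device inspired by Lemma 8 of \citet{BelloniChenChernozhukovHansen2012} --- in the paper the pre-sparsity bound $\|\widehat\Theta_l\|_0\lesssim s_l$ (Lemma \ref{lemma:empirical_pre-sparsity}) is what buys $ss_l$ in place of $ss_l^2$, and it is deployed in the proof of variance-estimator consistency (Theorem \ref{theorem:variance_estimator}), not in Proposition \ref{prop:sufficient}. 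For (ii) the paper is also slightly more direct, using $\max_{l}\|\widehat\Theta_l\|=O_p(1)$ together with $\|Z'R\|=o_p(\sqrt{NM})$ rather than splitting off $\widehat\Theta_l-\Theta_l$, which spares the separate control of $\|Z'R\|_\infty$ your route requires.
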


A proof is found in Appendix \ref{sec:proof:prop:sufficient}.
Combining Theorem \ref{theorem:asymptotic_normality} and Proposition \ref{prop:sufficient} together, we state the following corollary. 

\begin{corollary}[Asymptotic Normality]\label{corollary:asymptotic_normality}
If Assumptions \ref{a:sparsity}, \ref{a:moments}, \ref{a:sparse_eigenvalues}, \ref{a:Theta}, and \ref{a:variance} are satisfied, then
\begin{align*}
\sqrt{NM}(\widetilde {\boldsymbol{\beta}}_l - {\boldsymbol{\beta}}_l)\leadsto  N(0,V_{ll})
\end{align*}
for each $l\in [k_0]$, where $V_{ll} = \Theta_l' \Omega \Theta_l$.
\end{corollary}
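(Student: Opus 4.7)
The proof is a direct chaining of the two preceding results in this section, so the plan is essentially bookkeeping rather than fresh analysis. First I would invoke Proposition \ref{prop:sufficient}, which takes Assumptions \ref{a:sparsity}--\ref{a:variance} as hypotheses and delivers exactly the three high-level conditions (i)--(iii) of Assumption \ref{a:asymptotic_normality}, moreover identifying the asymptotic variance as $V_{ll} = \Theta_l' \Omega \Theta_l$. Once Assumption \ref{a:asymptotic_normality} is in force with this explicit variance, I would apply Theorem \ref{theorem:asymptotic_normality}, whose conclusion is precisely the statement
\[
\sqrt{NM}(\widetilde{\boldsymbol{\beta}}_l - \boldsymbol{\beta}_l) \leadsto N(0,V_{ll})
\]
for each $l \in [k_0]$. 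Substituting the variance formula from Proposition \ref{prop:sufficient} then yields the claim.

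Because the corollary is a two-line combination of earlier results, there is no real obstacle internal to the proof. The only care needed is to verify that the notational conventions match between the statements: in particular, that the $\widehat\Theta_l$ constructed in Section \ref{sec:method} (and used implicitly in Theorem \ref{theorem:asymptotic_normality}) is the same object whose properties Proposition \ref{prop:sufficient} certifies, and that the dimension index $l \in [k_0]$ (coordinates of $\boldsymbol{\beta}$) lines up consistently across both statements. Both checks are immediate from the way Section \ref{sec:sufficient_conditions_and_asymptotic_variance} was set up.

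In short, the corollary requires no new argument: my plan is simply to cite Proposition \ref{prop:sufficient} to discharge the hypotheses of Theorem \ref{theorem:asymptotic_normality}, apply that theorem, and read off $V_{ll} = \Theta_l' \Omega \Theta_l$. The substantive work has already been done in proving Proposition \ref{prop:sufficient} (where the nodewise lasso construction, sparsity, sparse eigenvalue bounds, and moment conditions must be combined to verify conditions (i)--(iii)) and in proving Theorem \ref{theorem:asymptotic_normality} (the linearization and application of a CLT); the corollary merely packages these for the end user.
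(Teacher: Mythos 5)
Your proposal matches the paper exactly: the corollary is obtained by combining Proposition \ref{prop:sufficient} (which verifies Assumption \ref{a:asymptotic_normality} with $V_{ll} = \Theta_l' \Omega \Theta_l$ under the stated lower-level assumptions) with Theorem \ref{theorem:asymptotic_normality}. No further argument is needed and your bookkeeping checks on $\widehat\Theta_l$ and the index range $l \in [k_0]$ are the right ones.
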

\begin{remark}
We conjecture that one can further enhance the results of Corollary \ref{corollary:asymptotic_normality} by showing the honesty property (uniform validity over a large set of parameters) of confidence intervals using the proposed procedure with no extra assumption by adapting the proof strategy of Theorem 3 of \citet{CanerKock2018} or Theorem 3 of \citet{KockTang2018} to our framework.
\end{remark}
\subsection{Asymptotic Variance Estimation}
\label{sec:asymptotic_variance}

Based on the asymptotic variance formula presented in Proposition \ref{prop:sufficient}, we suggest to compute the cluster-robust asymptotic variance of $\sqrt{NM}\left(\widetilde{\boldsymbol{\beta}}_\ell - \boldsymbol{\beta}_\ell\right)$ by
\begin{align*}
\widehat V_{ll} =&  \widehat\Theta_{l}' \widehat  \Omega \widehat\Theta_{l},
\end{align*}
as suggested in Section \ref{sec:method}.
This estimator is consistent in the current assumptions as formally stated in the following theorem.

\begin{theorem}[Variance Estimator]\label{theorem:variance_estimator}
If Assumptions \ref{a:sparsity}, \ref{a:moments}, \ref{a:sparse_eigenvalues}, \ref{a:Theta} and \ref{a:variance} are satisfied, then 
$$
\max_{l\in [k_0]}|\widehat V_{ll} - V_{ll}|=o_p(1).
$$
\end{theorem}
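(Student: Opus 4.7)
\textbf{Proof proposal for Theorem \ref{theorem:variance_estimator}.} The plan is to write
\begin{align*}
\widehat V_{ll} - V_{ll}
&= (\widehat\Theta_l - \Theta_l)'\widehat\Omega\,\widehat\Theta_l + \Theta_l'\widehat\Omega(\widehat\Theta_l - \Theta_l) + \Theta_l'(\widehat\Omega - \Omega)\Theta_l,
\end{align*}
and control each of the three terms uniformly in $l \in [k_0]$. The first two terms will be bounded via Cauchy--Schwarz/H\"older together with the nodewise-lasso rates for $\widehat\Theta_l$, and the third term will be handled by an oracle-vs-feasible decomposition of $\widehat\Omega$ using the first-stage lasso rates.

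First, from the intermediate steps in the proof of Proposition \ref{prop:sufficient} (specifically, the analysis establishing Assumption \ref{a:asymptotic_normality}(i)), I expect to have already established, uniformly in $l$, convergence rates of the form $\|\widehat\Theta_l - \Theta_l\|_1 = O_p(s_l\sqrt{\log(p\vee(NM))/(N\wedge M)})$ and $\|\widehat\Theta_l - \Theta_l\| = O_p(\sqrt{s_l\log(p\vee(NM))/(N\wedge M)})$, along with $\|\widehat\Theta_l\|_1 = O_p(\|\Theta_l\|_1) = O_p(\sqrt{s_l})$; Assumption \ref{a:Theta}(4) guarantees these rates are $o_p(1)$. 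In parallel, standard lasso oracle inequalities for Step 1 (applied in the proof of Proposition \ref{prop:sufficient}) yield $\|\widehat{\boldsymbol{\eta}} - \overline{\boldsymbol{\eta}}\|_1 = O_p(s\sqrt{\log(p\vee(NM))/(N\wedge M)})$ and $\|Z(\widehat{\boldsymbol{\eta}}-\overline{\boldsymbol{\eta}})\|^2/(NM) = O_p(s\log(p\vee(NM))/(N\wedge M))$.

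Second, for the ``pure sampling error'' term, I would introduce the oracle analog $\tilde\Omega = (NM)^{-1}\sum_{i,j}(\sum_t Z_{ijt}\varepsilon_{ijt})(\sum_t Z_{ijt}\varepsilon_{ijt})'$ with $E[\tilde\Omega]=\Omega$, and split $\widehat\Omega - \Omega = (\widehat\Omega - \tilde\Omega) + (\tilde\Omega - \Omega)$. For $\Theta_l'(\tilde\Omega - \Omega)\Theta_l$, independence across $(i,j)$ in Assumption \ref{a:moments}, the moment bounds in Assumption \ref{a:moments}(3) and $\|(D_1,D_2)\|_\infty = 1$, combined with the bound $\|\Theta_l\|_1 = O(\sqrt{s_l})$ inherited from Assumptions \ref{a:Theta}(1)--(3), give $o_p(1)$ by Markov/Chebyshev or by a maximal inequality applied after expanding in the entries of $\Theta_l$. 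For the residual-versus-error gap $\widehat\Omega - \tilde\Omega$, I substitute $\widehat\varepsilon_{ijt} = \varepsilon_{ijt} + r_{ijt} + Z_{ijt}'(\overline{\boldsymbol{\eta}} - \widehat{\boldsymbol{\eta}})$ to get cross and quadratic terms; the deterministic approximation piece is controlled by Assumption \ref{a:sparsity}, and the estimation piece is handled by applying $\Theta_l'\,\cdot\,\Theta_l$ and using $\|Z(\widehat{\boldsymbol{\eta}}-\overline{\boldsymbol{\eta}})\|_{2,NM}$-type rates together with $\|\Theta_l\|_1$ bounds (reducing matters to dual-norm inequalities of the form $|\Theta_l' Z'v| \le \|Z'v\|_\infty\|\Theta_l\|_1$).

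Third, for the terms $(\widehat\Theta_l - \Theta_l)'\widehat\Omega\,\widehat\Theta_l$ and $\Theta_l'\widehat\Omega(\widehat\Theta_l - \Theta_l)$, I will first establish $\|\widehat\Omega\|_\infty = O_p(1)$ from the residual decomposition and Assumption \ref{a:moments}, which together with $\|\widehat\Theta_l\|_1,\|\Theta_l\|_1 = O_p(\sqrt{s_l})$ and $\|\widehat\Theta_l - \Theta_l\|_1$ above bounds each cross term by $O_p(s_l^{3/2}\sqrt{\log(p\vee(NM))/(N\wedge M)})$, which is $o_p(1)$ under Assumption \ref{a:Theta}(4) possibly with a mild strengthening through the sparsity-$s_l$ factor. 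If the exponent on $s_l$ turns out to be too large under the stated Assumption \ref{a:Theta}(4), the fix is to work in the $\|\cdot\|_2$ rather than $\|\cdot\|_1/\|\cdot\|_\infty$ duality: specifically, bound $|(\widehat\Theta_l-\Theta_l)'\widehat\Omega\widehat\Theta_l| \le \|\widehat\Theta_l-\Theta_l\|\cdot\|\widehat\Omega\|_{\mathrm{op},\mathrm{sp}}\cdot\|\widehat\Theta_l\|$ restricted to sparse supports, and then invoke Assumption \ref{a:sparse_eigenvalues}-style bounds on $\widehat\Omega$ restricted to the union of supports. A uniform bound over $l\in[k_0]$ follows by a union bound since $k_0$ is at most polynomial in $p\vee (NM)$.

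The main obstacle I anticipate is precisely the residual-versus-error decomposition of $\widehat\Omega - \tilde\Omega$: the summand $\sum_t Z_{ijt}Z_{ijt}'(\overline{\boldsymbol{\eta}}-\widehat{\boldsymbol{\eta}})$ couples the unknown lasso error with the heavy-tailed design, and the cross term $\sum_t Z_{ijt}\varepsilon_{ijt}\cdot \sum_t Z_{ijt}'(\overline{\boldsymbol{\eta}}-\widehat{\boldsymbol{\eta}})$ must be handled without sub-Gaussianity under Assumption \ref{a:moments} alone. The key trick will be to sandwich first by $\Theta_l$ on both sides (using dual-norm bounds), so that one factor becomes $\|\Theta_l' Z'\varepsilon\|_\infty$ or $\|\Theta_l' Z'Z\|_\infty$, both of which can be controlled by the same Nemirovski/Rudelson-type concentration arguments already used in proving Assumption \ref{a:asymptotic_normality}(i).
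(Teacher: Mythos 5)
Your proposal is correct and follows essentially the same route as the paper's proof: both introduce the infeasible matrix $\tilde\Omega=(NM)^{-1}\sum_{i,j}\bigl(\sum_t Z_{ijt}\varepsilon_{ijt}\bigr)\bigl(\sum_t Z_{ijt}\varepsilon_{ijt}\bigr)'$, control $\|\tilde\Omega-\Omega\|_\infty$ by the concentration inequality of Lemma \ref{lemma:concentration_inequality_CCK}, expand $\widehat\varepsilon_{ijt}=\varepsilon_{ijt}+r_{ijt}-Z_{ijt}'(\widehat{\boldsymbol{\eta}}-\overline{\boldsymbol{\eta}})$ to handle $\widehat\Omega-\tilde\Omega$ via the Step 1 oracle rates, and feed in the nodewise rates for $\widehat\Theta_l-\Theta_l$ from Lemma \ref{lemma:Theta_1_way}; the only cosmetic difference is that you telescope so that the cross terms carry $\widehat\Omega$ and the middle term sandwiches $\widehat\Omega-\Omega$ by $\Theta_l$, whereas the paper carries $\Omega$ in the cross terms and sandwiches $\widehat\Omega-\tilde\Omega$ by $\widehat\Theta_l$. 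One point deserves emphasis: your ``primary'' bound for the cross terms, $\|\widehat\Theta_l-\Theta_l\|_1\,\|\widehat\Omega\|_\infty\,\|\widehat\Theta_l\|_1=O_p\bigl(s_l^{3/2}\sqrt{\log(p\vee(NM))/(N\wedge M)}\bigr)$, does \emph{not} close under Assumption \ref{a:Theta} (4) as stated, so the fallback you mention is mandatory rather than optional: one must pass to $\|\widehat\Theta_l-\Theta_l\|$ and $\|\widehat\Theta_l\|$ in the Euclidean norm and restrict the intervening quadratic forms to $O(s_l)$-sparse vectors, which is licensed by the empirical pre-sparsity Lemma \ref{lemma:empirical_pre-sparsity} together with Assumptions \ref{a:sparse_eigenvalues} and \ref{a:variance}. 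This substitution of sparse-support quadratic-form bounds for $\ell^1$/$\ell^\infty$ duality is exactly the device the paper singles out in the remark following its proof as the reason its sparsity requirement is $ss_l(\log(p\vee(NM)))^2/(N\wedge M)=o(1)$ rather than $ss_l^2(\log(p\vee(NM)))^2/(N\wedge M)=o(1)$; with that understood, your argument matches the paper's.
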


A proof is found in Appendix \ref{sec:proof:theorem:variance_estimator}.

\section{Approximate Sparsity in Gravity Analysis of Trade}
\label{sec:approximate_sparsity_in_gravity_analysis}

In this section, we discuss our key assumption, namely the assumption of approximate sparsity (Assumptions \ref{a:sparsity} and \ref{a:Theta} (4) -- also see Remark \ref{remark:approximate_sparsity}), in the gravity model (\ref{eq:matyas}) of international trade.
The idea behind the approximate sparsity assumption is that only a small number of observations have large fixed effect values, and the remaining majority of observations have relatively modest fixed effect values that can be summarized into the approximation error term $r_{ijt}$.
The assumption is likely satisfied in sampling processes where, after collecting observations with relatively large values of fixed effects (e.g., G7 and OECD countries), the remaining additions tend to have smaller values of fixed effects.
We argue that this is plausible in common settings such gravity analysis in international trade.

To illustratea this point, we retrieved data from the World Integrated Trade Solution (WITS) Database, a common source of trade flows and trade costs used in gravity analysis.\footnote{This database was developed by the World Bank in conjunction with the United Nations Conference on Trade and Development (UNCTAD), the International Trade Center, United Nations Statistical Division (UNSD) and the World Trade Organization (WTO). 
The database combines information on trade flows from the UN Comtrade database, tariff and non-tariff barriers from the UN TRAINS database, and the both preferential and MFN tariffs from the WTO's Integrated Data Base.}  We focus on country-specific import and export flows and aim to make two specific points.  
First, in any given year, trade is largely dominated by a few large countries.  
For instance, in 2015, the WITS database contains positive import flows for 237 countries and positive export flows for 232 countries.  
Nonetheless, nearly half of all import (respectively, export) flows are determined by the top 10 largest importers (respectively, exporters) alone.  
Not surprisingly, the largest importers are also the largest exporters.
Second, the importance of these countries has remained stable over time, despite the fact that (a) world trade has grown exponentially over time and (b) WITS records exports and imports for a substantially larger number of countries in recent years than it did even a few years ago.  
In this sense, the `new' additions to trade databases tend to have very small trade flows.

Using a country's share of world imports (Table \ref{tab:world_import_shares_over_time}) as a measure of `importer' importance or a country's share of world exports (Table \ref{tab:world_export_shares_over_time}) as a measure of exporter importance, we document the 10 largest trading nations every 5 years starting in 1990.  
We note the following three attributes of standard trade data:
(1) a small number of countries account for the large majority of world trade;
(2) whether a country represents a large or small fraction of trade flows changes slowly over time; and
(3) even though many developing countries have grown substantially since 1990, the average share of small countries has not changed very much.  This last feature is largely due to the fact that the `new' countries which are added to world trade databases are nearly always very small.
To make points (1) and (2) particularly clear, we would expect that a typical country would have an import/export share of roughly 0.5\% for a sample of about 200 countries. However, in any given year, fewer than 40 countries have import or export shares of 0.5\%.  
Of the countries which have been added to the import database since 1990, their average (median) import share was 0.07\% (0.01\%).  
Similarly, among the countries added to the export database since 1990, their average (median) export share was 0.09\% (0.03\%).  
Regardless of how to measure the size of these peripheral countries, their overall contribution to world trade is extremely small.

In summary, only a small number of observations have high trade volumes.
The large majority of remaining observations have very modest and almost negligible trade shares.
This pattern remains stable over time.
Since researchers first collect observations with large volumes (e.g., G7 and OECD countries), new additions to the data thereafter entail relatively small volumes.
This common sampling process in gravity analysis of international trade is compatible with our key assumption, namely the the assumption of approximate sparsity (Assumptions \ref{a:sparsity} and \ref{a:Theta} (4) -- also see Remark \ref{remark:approximate_sparsity}).

\section{Simulation Studies}
\label{sec:simulation_studies}

\subsection{Simulation Setting}

Consider the following three fixed effect models of three-dimensional panel data.
\begin{align*}
\text{Model (I):} & \qquad
y_{ijt} = {x}_{ijt} \beta 
+ \alpha_i + \gamma_j
+ \varepsilon_{ijt}
\\
\text{Model (II):} & \qquad
y_{ijt} = {x}_{ijt} \beta 
+ \alpha_i + \gamma_j + \lambda_t
+ \varepsilon_{ijt}
\\
\text{Model (III):} & \qquad
y_{ijt} = {x}_{ijt} \beta 
+ \alpha_{it} + \gamma_{jt}
+ \varepsilon_{ijt}
\end{align*}
Model (I) is nested by Model (II), and Model (II) is in turn nested by Model (III).
Therefore, Model (I) is the most parsimonious and subject to under-fitting, whereas Model (III) is the richest and subject to over-fitting.
If a researcher runs a fixed effect estimator under Model (I) when Model (II) or (III) is true, then the estimates generally suffers from mis-specification biases.
If a researcher runs a fixed effect estimator under Model (III) when Model (I) or (II) is true, then the estimates generally suffers from larger standard errors than necessary.

We run simulations for varying sizes of $N$ and $M=N-1$, while the length of time is set to $T=5$ throughout.
This setting follows from our asymptotic theory where $N$ and $M$ increases but $T$ does not.
The $i$ and $j$ fixed effects are generated by 
$\alpha_i \sim N\left(m_\alpha, s_\alpha^2 \left/\left(\sqrt{i} \cdot (\log (i+1))^3\right)\right.\right)$
and
$\gamma_j \sim N\left(m_\gamma, s_\gamma^2 \left/\left(\sqrt{j} \cdot (\log (j+1))^3\right)\right.\right)$
independently, where 
$m_\alpha=m_\gamma=0$ and 
$s_\alpha=s_\gamma=1$.
The $t$ fixed effects are generated by $\lambda_t=0$ for all $t$ but for one year $t$ when a universal shock of $\lambda_t = 2$ is applied.
The $it$ and $jt$ fixed effects are generated by
$\alpha_{it} \sim N\left(m_{\alpha}, s_{\alpha}^2 \left/\left(\sqrt{i} \cdot (\log (i+1))^3\right)\right.\right)$,
$\gamma_{jt} \sim N\left(m_{\gamma}, s_{\gamma}^2 \left/\left(\sqrt{j} \cdot (\log (j+1))^3\right)\right.\right)$,
$m_{\alpha}=m_{\gamma}=0$, and
$s_{\alpha}=s_{\gamma}=1$.
We generate $X$ dependently on the fixed effects according to the mixture
\begin{align*}
x_{ijt} = m_x + s_x \cdot \left[ (1-\rho) \cdot \tilde x_{ijt} + \rho F_{ijt} \right],
\end{align*}
where $m_x=0$, $s_x=2$, $\rho=0.5$, $\tilde x_{ijt} \sim N(0,1)$, 
and $F_{ijt}$ is the standardized sum of fixed effects for the unit $(i,j,t)$, i.e.,
\begin{align*}
\text{Under Model (I):}& \qquad
F_{ijt} = \left(\alpha_i + \gamma_j\right) \left/ \sqrt{\frac{1}{NM}\sum_{i=1}^N\sum_{j=1}^M \left(\alpha_i + \gamma_j\right)^2}\right.
\\
\text{Under Model (II):}& \qquad
F_{ijt} = \left(\alpha_i + \gamma_j + \lambda_t\right) \left/ \sqrt{\frac{1}{NMT}\sum_{i=1}^N\sum_{j=1}^M\sum_{t=1}^T \left(\alpha_i + \gamma_j + \lambda_t\right)^2}\right.
\\
\text{Under Model (II):}& \qquad
F_{ijt} = \left(\alpha_{it} + \gamma_{jt}\right) \left/ \sqrt{\frac{1}{NMT}\sum_{i=1}^N\sum_{j=1}^M\sum_{t=1}^T \left(\alpha_{it} + \gamma_{jt}\right)^2}\right.
\end{align*}
for each $(i,j,t) \in \{1,...,N\} \times \{1,...,M\} \times \{1,...,T\}$.
The error term is generated by $\varepsilon_{ijt} \sim N(m_\varepsilon,s_\varepsilon^2)$ independently where $m_\varepsilon=0$ and $s_\varepsilon=10$.
The main coefficient of interest is set to $\beta = 1$.
Each set of simulations consists of 10,000 Monte Carlo iterations of data generation, estimation, and inference.

We compare five methods of estimation and inference.
These are the OLS without any individual fixed effects, the fixed effect estimator based on Model (I), the fixed effect estimator based on Model (II), the fixed effect estimator based on Model (III), and our proposed de-biased lasso estimator and post-selection inference.
Note that the OLS is always under-fitting the true data generating model, and hence is expected to produce mis-specification biases.
The fixed effect estimator based on Model (I) is correctly specified when the true data generating model is Model (I), but is under-fitting Model (II) and Model (III).
The fixed effect estimator based on Model (II) is over-fitting Model (I), correctly specified when the true data generating model is Model (II), and under-fitting Model (III).
The fixed effect estimator based on Model (III) is over-fitting Model (I) and Model (II), but is correctly specified when the true data generating model is Model (III).

\subsection{Simulation Results}

Table \ref{tab:simulation} displays Monte Carlo simulation results under Model (I) (top panel), Model (II) (middle panel), and Model (III) (bottom panel) with the sample size $N=10$ ($NMT=450$).
Similarly, Tables \ref{tab:simulation15} and \ref{tab:simulation20} display Monte Carlo simulation results with the sample sizes $N=15$ ($NMT=1050$) and $N=200$ ($NMT=1900$), respectively.
The displayed statistics are the averages, biases, standard deviations, and root mean squared errors of estimates.
Also displayed are the coverage frequencies of the true value of $\beta$ by the 95\% confidence intervals.
The first column of each table shows the OLS results without any individual fixed effects.
The next three columns of each table show results of fixed effect estimators based on estimating equations of Model (I), Model (II), and Model (III).
We shall call them FE-I, FE-II, and FE-III for succinctness.
The last column of each table shows results of our proposed de-biased lasso estimator with valid post-selection inference.
We shall call it POST for succinctness.

In the top panel of each table, where the true data generating model is Model (I), OLS is biased while FE-I, FE-II, and FE-III yield little biases.
These results are consistent with the current simulation setting as OLS mis-specifies the true model while FE-I, FE-II, and FE-III correctly specify the true model.
The bias of POST is in the middle between that of OLS and those of FE-I, FE-II, and FE-III.
In other words, POST is de-biased to some extent but not to the full extent so that desired balances between the bias and variance are maintained.
OLS yields a smaller standard deviation than FE-I or FE-II, and FE-III yields by far the largest standard deviation.
These results are also consistent with the fact that OLS is the most parsimonious while FE-III is the most redundant in specification.
POST yields an even smaller standard deviation than OLS.
FE-I, as the oracle estimator, yields a smaller root mean square error than OLS, FE-II, or FE-III. 
Furthermore, POST yields an even smaller root mean square error than the oracle estimator, FE-I.
The coverage frequency by FE-I, as the oracle estimator, is closer to the nominal level 95\% than those of OLS, FE-II, or FE-III.
Furthermore, POST yields the coverage frequency as close to the nominal level as the oracle estimator, FE-I.
In summary, we observe that, when the true model is parsimonious, POST is more efficient than redundantly rich models and allows for as accurate inference as the oracle estimator.

In the middle panel of each table, where the true data generating model is Model (II), OLS and FE-I are biased while FE-II and FE-III yield little biases.
These results are consistent with the current simulation setting as OLS and FE-I mis-specify the true model while FE-II and FE-III correctly specify the true model.
The bias of POST is slightly larger than those of FE-II and FE-III, but much smaller than those of OLS and FE-I.
In other words, POST is de-biased to a large extent but not to the full extent so that desired balances between the bias and variance are maintained.
FE-II, as the oracle estimator, yields a smaller root mean square error than OLS, FE-I, or FE-III. 
Furthermore, POST yields an even smaller root mean square error than the oracle estimator, FE-II.
The coverage frequency by FE-II, as the oracle estimator, is closer to the nominal level 95\% than those of OLS, FE-I, or FE-III.
POST yields the coverage frequency as close to the nominal level as the oracle estimator, FE-II.
In summary, we observe that POST is more precise than biased parsimonious estimators, is more efficient than redundant estimators, and allows for as accurate inference as the oracle estimator.

In the bottom panel of each table, where the true data generating model is Model (III), OLS, FE-I, and FE-II are biased while FE-III yields a little bias.
These results are consistent with the current simulation setting as OLS, FE-I, and FE-II mis-specify the true model while FE-III correctly specifies the true model.
The bias of POST is in the middle between those of OLS, FE-I, and FE-II and that of FE-III.
In other words, POST is de-biased to some extent but not to the full extent so that desired balances between the bias and variance are maintained.
POST yields a smaller root mean square error than any other estimator, including the oracle estimator, FE-III.
POST also yields the coverage frequency closer to the nominal level than any estimator, including the oracle estimator, FE-III.
In summary, we observe that, when the true model is rich, POST is more precise than parsimonious estimators and allows for as accurate inference as the oracle estimator.

The simulation results reported above demonstrate that the proposed method (POST) can be used as a robustly applicable method of inference when a researcher does not know the correct fixed effect specification in practice.
We also implemented many additional sets of simulations under alternative data generating parameters, and confirm that the qualitative pattern of these additional results remain the same as those of our baseline setting presented above. 
Specifically, we consistently observe that POST is more precise than biased parsimonious estimators, is more efficient than redundant estimators, and allows for as accurate inference as the oracle estimator.


\section{Discussions}
\label{sec:discussions}

Three-dimensional panel models are widely used in empirical analysis of international trade, housing, migration, and consumer
price, among others.
Empirical researchers use various combinations of fixed effects for three-dimensional panels.
When a researcher imposes a parsimonious model and the true model is rich, then estimation based on the assumed parsimonious model generally incurs mis-specification biases.
When a researcher employs a rich model and the true model is parsimonious, then estimation based on the redundantly rich model generally incurs larger standard errors than necessary.
It is therefore useful for researchers to know correct models for an application of interest.
In this light, \citet{LuMiaoSu2018} propose methods of model selection in three-dimensional panel data.
In this paper, we advance this literature by proposing a method of post-selection inference for regression parameters.
We propose to use the lasso technique as means of model selection and to de-bias the lasso estimate, but our assumptions allow for many and even all fixed effects to be nonzero.
Simulation studies demonstrate that the proposed method is more precise than biased estimators by parsimonious models, is more efficient than noisy estimators by redundant models, and allows for as accurate inference as the oracle estimator.

We suggest a couple of directions for future research.
First, our model framework does not allow for $ij$ fixed effects, while $i$, $j$, $t$, $it$ and $jt$ fixed effects are allowed.
Although allowing for $ij$ fixed effects is not of interest in our motivating example,\footnote{In gravity models for international trade, the main parameters of interest are the coefficient of $DIST_{ij}$, interpreted as the trade elasticity or trade cost, and the coefficient of $TA_{ij}$, interpreted as the effects of bilateral trade agreements on trade volume. These parameters will not be identified once $ij$ fixed effects enter the model.} it may be possible to allow for such fixed effects provided that the asymptotic setting allows for large $T$ as well as large $N$ and/or large $M$.
Formal theoretical development for this case is left for future research.
Second, we conjecture that our limit distribution result can be extended to establish honest (uniformly valid) confidence intervals, and formal theoretical investigation of the honesty property is left for future research.

\newpage
\appendix
\section*{Mathematical Appendix}

Throughout, we use the following short-hand notations: 
$Q=S/\sqrt{NM}$ and $a = p \vee (NM)$. Also, for a matrix $A$, denote $\|A\|_\infty=\max_{i,j}|A_{i,j}|$.

\section{Proofs of the Main Results}
\label{sec:proofs_of_the_main_results}

\subsection{Proof of Theorem \ref{theorem:asymptotic_normality}}\label{sec:proof:theorem:asymptotic_normality}
\begin{proof}
The K.K.T. condition for the lasso program (\ref{eq:lasso}) gives
\begin{align*}
-Z'(Y-Z\widehat{\boldsymbol{\eta}}) + \mu P'(\widehat{\boldsymbol{\eta}})=0.
\end{align*}
Note that we have $S\bar\Psi S=  Z'Z$ by the definition of $\bar\Psi$, and thus
\begin{align*}
S\bar\Psi S(\widehat {\boldsymbol{\eta}}  - \overline{\boldsymbol{\eta}}  ) +\mu  P'(\widehat {\boldsymbol{\eta}}) = Z'\varepsilon + Z'R.
\end{align*}
Multiplying both sides by $\widehat \Theta_l'/\sqrt{NM}$, we have
\begin{align*}
\sqrt{NM}\widehat\Theta_l'
Q\bar\Psi Q  (\widehat {\boldsymbol{\eta}}  - \overline{\boldsymbol{\eta}} ) + \frac{\mu \widehat\Theta'_l P'(\widehat {\boldsymbol{\eta}})}{\sqrt{NM}} = \frac{\widehat\Theta_l'  Z'\varepsilon}{\sqrt{NM}} + \frac{\widehat\Theta'_l Z'R}{\sqrt{NM}},
\end{align*}
where $Q=S/\sqrt{NM}$.
Therefore, we have
\begin{align*}
\sqrt{NM}e_l'(\widehat {\boldsymbol{\eta}}-{\boldsymbol{\eta}})+\sqrt{NM}(\widehat\Theta_l' Q\bar\Psi Q  -e_l')(\widehat {\boldsymbol{\eta}}  - \overline{\boldsymbol{\eta}} ) + \frac{\mu \widehat\Theta'_l P'(\widehat {\boldsymbol{\eta}})}{\sqrt{NM}} = \frac{\widehat\Theta'_l  Z'\varepsilon}{\sqrt{NM}} + \frac{\widehat\Theta'_l Z'R}{\sqrt{NM}}.
\end{align*}
By Assumption \ref{a:asymptotic_normality} (i)--(ii) and the definition (\ref{eq:de_biased}) of the de-biased lasso, we obtain
\begin{align*}
\sqrt{NM}e_l'(\widetilde {\boldsymbol{\eta}}- \overline{\boldsymbol{\eta}}) = \frac{\widehat\Theta'_l  Z'\varepsilon}{\sqrt{NM}} + o_p(1).
\end{align*}
Applying Assumption \ref{a:asymptotic_normality} (iii) for each $l \in [k_0]$ yields the weak convergence result.
\end{proof}

\subsection{Proof of Proposition \ref{prop:sufficient}}\label{sec:proof:prop:sufficient}
\begin{proof}
The sufficiency of Assumptions \ref{a:sparsity}, \ref{a:moments}, \ref{a:sparse_eigenvalues}, and \ref{a:Theta} for Assumption \ref{a:asymptotic_normality} (i) is provided in Lemma \ref{lemma:approximation_error_Delta}.
The sufficiency of Assumptions \ref{a:sparsity}, \ref{a:moments}, \ref{a:sparse_eigenvalues}, and \ref{a:Theta} for Assumption \ref{a:asymptotic_normality} (ii) is provided in Lemma \ref{lemma:suff_2}.
The sufficiency of Assumptions \ref{a:moments}, \ref{a:sparse_eigenvalues}, \ref{a:Theta} and \ref{a:variance} for Assumption \ref{a:asymptotic_normality} (iii) is provided in Lemma \ref{lemma:suff_3}.
\end{proof}

\subsection{Proof of Theorem \ref{theorem:variance_estimator}}\label{sec:proof:theorem:variance_estimator}

\begin{proof}
We introduce the intermediate object defined by
$$
\tilde  \Omega=\sumi\sumj  S^{-1} \Big(
\sumt Z_{ijt} \varepsilon_{ijt}
\Big) \Big(
\sumt Z_{ijt} \varepsilon_{ijt}
\Big)' S^{-1}.
$$
Lemma \ref{lemma:empirical_pre-sparsity} under Assumptions \ref{a:sparsity}, \ref{a:moments}, \ref{a:sparse_eigenvalues} and \ref{a:Theta} yields $\max_{l\in[p]}\|\hat \Theta_l\|_0\le Cs_l$ with probability $1-o(1)$ for some $C$ large enough for all $l\in [k_0]$.
Therefore, we obtain the decomposition
\begin{align}
&|\hat\Theta_l' \hat\Omega \hat\Theta_l-\Theta_l'\Omega \Theta_l| \nonumber\\
\le&
 |\hat\Theta_{l}' \hat\Omega \hat\Theta_{l}-\hat\Theta_{l}' \Omega \hat\Theta_{l}| +|\hat\Theta_{l}' \Omega \hat\Theta_{l}-\Theta_{l}'\Omega \Theta_{l}|\nonumber\\
 \le 
 &\|\hat \Theta_l\|^2 \max_{ \substack{\|\xi\|=1\\
 \|\xi\|_0\le Cs_l}}\xi'(\hat\Omega - \tilde\Omega)\xi  
 +
  \|\hat \Theta_l\|^2_1 \|\tilde\Omega - \Omega \|_\infty
 +
\|\hat \Theta_{l} - \Theta_{l}\|^2\max_{ \substack{\|\xi\|=1\\
 \|\xi\|_0\le Cs_l}} \xi'\Omega\xi
 +
 2\|\Omega \Theta_{l}\| \|\hat \Theta_{l} - \Theta_{l}\|\label{eq:variance_est}
\end{align}
for all $l\in [k_0]$
By Lemma \ref{lemma:Theta_1_way} under Under Assumptions \ref{a:moments}, \ref{a:sparse_eigenvalues}, and \ref{a:Theta} , it suffices to bound 
$
\max_{ \substack{\|\xi\|=1\\
 \|\xi\|_0\le Cs_l}}\xi'(\hat\Omega - \tilde\Omega)\xi  
$ 
and
$
\|\hat \Theta_l\|^2_1 \|\tilde\Omega - \Omega \|_\infty
$
on the right-hand side.

We first bound 
$
\|\hat \Theta_l\|^2_1 \|\tilde\Omega - \Omega \|_\infty
$
on the right-hand side of (\ref{eq:variance_est}). 
Since $\max_{l\in[p]}\|\hat \Theta_l\|_0 = O_p( s_l)$ with probability approaching one and $\max_{l\in[p]}\|\hat\Theta_l\|=O_p(1)$, we have $\|\hat \Theta_l\|_1=O_p( \sqrt{s_l})$ uniformly over $l\in[k_0]$. 
By an application of Lemma \ref{lemma:concentration_inequality_CCK}, we have
\begin{align*}
\|\tilde\Omega - \Omega\|_\infty
 \le & T
  \max_{t\in [T]}\max_{l\in[p]}\Big|\frac{1}{NM}\sumi\sumj (Z_{ijt,l}^2\varepsilon_{ijt}^2-E[Z_{ijt,l}^2\varepsilon_{ijt}^2])\Big|\\
  \lesssim & \sqrt{\frac{\sigma^2 \log a}{NM}} + \frac{B\log a}{NM}
\end{align*}
with probability at least $1-o(1)$,
where
\begin{align*}
\sigma^2&=\max_{t\in[T],l\in [p]}\frac{1}{NM}\sumi\sumj E[Z_{ijt,l}^2\varepsilon_{ijt}^2]
\\
&\le \max_{t\in[T],l\in [p]} \sqrt{\frac{1}{NM}\sumi\sumj E[Z_{ijt,l}^2] } \sqrt{\frac{1}{NM}\sumi\sumj E[\varepsilon_{ijt}^2] }=O(1) 
\end{align*}
under Assumption \ref{a:moments}, and
\begin{align*}
B^2=& E[\max_{i,j,t} \|Z_{ijt}\varepsilon_{ijt}\|^2_\infty]\\
\le& (E[\max_{i,j,t} \|Z_{ijt}\varepsilon_{ijt}\|^q_\infty])^{2/q}\\
\le& (NM)^{2/q}(\frac{1}{NM}\sumi\sumj\sumt E[\|Z_{ijt}\varepsilon_{ijt}\|^q_\infty])^{2/q}\\
\le& (NM)^{2/q}\Big\{\Big(\frac{1}{NM}\sumi\sumj\sumt E[\|Z_{ijt}\|^{2q}_\infty]\Big)^{1/2}
\Big(\frac{1}{NM}\sumi\sumj\sumt E[\varepsilon_{ijt}^{2q}]\Big)^{1/2}
\Big\}^{2/q}\\
\lesssim &  (NM)^{2/q} B_{NM}^2 O(1)
\end{align*}
under Assumption \ref{a:moments}.
Therefore, we obtain
\begin{align*}
 \sqrt{\frac{\sigma^2 \log a}{NM}} + \frac{B\log a}{NM}\lesssim \sqrt{\frac{\log a}{NM}} + \frac{B_{NM}\log a}{(NM)^{1-1/q}} = O\Big(\sqrt{\frac{\log a}{NM}}\Big)
\end{align*}
where the last rate follows from Assumption \ref{a:moments} (i).
Combining these results, we obtain
\begin{align*}
 \|\hat \Theta_l\|^2_1 \|\tilde\Omega - \Omega \|_\infty=  O_p\Big(\sqrt{\frac{s_l^2\log a}{NM}}\Big).
\end{align*}

We next bound 
$
\max_{ \substack{\|\xi\|=1\\
 \|\xi\|_0\le Cs_l}}\xi'(\hat\Omega - \tilde\Omega)\xi  
$ 
on the right-hand side of (\ref{eq:variance_est}). 
Note that $\hat \varepsilon =\varepsilon+R- Z(\widehat{\boldsymbol{\eta}} - \overline{\boldsymbol{\eta}})$.
Thus,
\begin{align*}
&\max_{ \substack{\|\xi\|=1\\
 \|\xi\|_0\le Cs_l}}\xi'(\hat \Omega-\tilde \Omega )\xi\\
 =
&\max_{ \substack{\|\xi\|=1\\
 \|\xi\|_0\le Cs_l}}\xi'\sumi\sumj \frac{1}{NM}\Big\{
 \Big(\sumt Z_{ijt}\varepsilon_{ijt}\Big)\Big(\sumt  Z_{ijt}r_{ijt} \Big)'
 - \Big(\sumt Z_{ijt}\varepsilon_{ijt}\Big)\Big(\sumt  Z_{ijt}Z'_{ijt}(\widehat{\boldsymbol{\eta}} - \overline{\boldsymbol{\eta}}) \Big)'\\
 &+ \Big(\sumt Z_{ijt}R_{ijt}\Big)\Big(\sumt  Z_{ijt}\varepsilon_{ijt} \Big)'
  + \Big(\sumt Z_{ijt}R_{ijt}\Big)\Big( \sumt Z_{ijt}R_{ijt} \Big)'\\
& - \Big(\sumt Z_{ijt}R_{ijt}\Big)\Big(\sumt  Z_{ijt}Z'_{ijt}(\widehat{\boldsymbol{\eta}} -\overline{\boldsymbol{\eta}}) \Big)'
 - \Big(\sumt Z_{ijt}Z'_{ijt}(\widehat{\boldsymbol{\eta}} - \overline{\boldsymbol{\eta}})\Big)\Big(\sumt  Z_{ijt}\varepsilon_{ijt} \Big)'\\
& - \Big(\sumt Z_{ijt}Z'_{ijt}(\widehat{\boldsymbol{\eta}} - \overline{\boldsymbol{\eta}})\Big)\Big(\sumt  Z_{ijt} R_{ijt}\Big)'
 + \Big(\sumt Z_{ijt}X'_{ijt}(\widehat{\boldsymbol{\eta}} - \overline{\boldsymbol{\eta}})\Big)\Big(\sumt  Z_{ijt}Z'_{ijt}(\widehat{\boldsymbol{\eta}} - \overline{\boldsymbol{\eta}}) \Big)'\Big\}\xi\\
 \le & (1)+(2)+(3)+(4)+(5)+(6)+(7)+(8).
\end{align*}
We bound each term of the last eight terms separately. 
First, Cauchy-Schwartz's inequality yields
\begin{align*}
(8)=&
\max_{ \substack{\|\xi\|=1\\
 \|\xi\|_0\le Cs_l}}
 \xi'\sumi\sumj \frac{1}{NM}\Big\{ \Big(\sumt Z_{ijt}Z'_{ijt}(\widehat{\boldsymbol{\eta}} - \overline{\boldsymbol{\eta}})\Big)\Big(\sumt  Z_{ijt}Z'_{ijt}(\widehat{\boldsymbol{\eta}} - \overline{\boldsymbol{\eta}}) \Big)'\Big\}\xi\\
 \le&
 T^2\max_{t\in [T]}\max_{ \substack{\|\xi\|=1\\
 \|\xi\|_0\le Cs_l}} \frac{1}{NM}\sumi\sumj\xi' Z_{ijt}Z'_{ijt}(\widehat{\boldsymbol{\eta}} - \overline{\boldsymbol{\eta}})(\widehat{\boldsymbol{\eta}} - \overline{\boldsymbol{\eta}})'Z_{ijt}Z'_{ijt}\xi\\
 \lesssim&
  \max_{t\in [T]}\max_{ \substack{\|\xi\|=1\\
 \|\xi\|_0\le Cs_l}} \sqrt{\frac{1}{NM}\sumi\sumj\Big(\xi' Z_{ijt}Z'_{ijt}(\widehat{\boldsymbol{\eta}} - \overline{\boldsymbol{\eta}})\Big)^2} 
 \sqrt{\frac{1}{NM}\sumi\sumj
\Big( (\widehat{\boldsymbol{\eta}} - \overline{\boldsymbol{\eta}})'Z_{ijt}Z'_{ijt} \xi \Big)^2 }.
\end{align*}
Due to the sparsity of all the feasible $\xi$, we have $\|\xi\|_1\le \sqrt{s_l}\|\xi\|$. 
Thus, by Assumption \ref{a:moments}, Lemma \ref{lemma:rates_eta}, and Lemma \ref{lemma:regularized_events} with $\mu=C\sqrt{NM \log a}$ under Assumptions \ref{a:sparsity}, \ref{a:moments} (1), and \ref{a:sparse_eigenvalues}, we have
\begin{align}
 &\max_{t\in [T]}\max_{ \substack{\|\xi\|=1\\
 \|\xi\|_0\le Cs_l}} \frac{1}{NM}\sumi\sumj\Big(\xi' Z_{ijt}Z'_{ijt}(\widehat{\boldsymbol{\eta}} - \overline{\boldsymbol{\eta}})\Big)^2\nonumber\\
 \le& 
 \max_{t\in [T]}\max_{ \substack{\|\xi\|=1\\
 \|\xi\|_0\le Cs_l}} (\max_{i,j}|\xi' Z_{ijt}|^2) \frac{1}{NM}\sumi\sumj\Big(Z'_{ijt}(\widehat{\boldsymbol{\eta}} - \overline{\boldsymbol{\eta}})\Big)^2\nonumber\\
 \le&
  \max_{t\in [T]}\max_{ \substack{\|\xi\|=1\\
 \|\xi\|_0\le Cs_l}} (\max_{i,j}\|\xi'\|^2_1 \cdot( 1\vee\|X_{ijt}\|^2_\infty)) \frac{1}{NM} \|Z(\widehat{\boldsymbol{\eta}}-\overline{\boldsymbol{\eta}})\|^2\nonumber\\
 \lesssim&  
 s_l \cdot  O_p(1\vee E[\max_{i,j,t}\|X_{ijt} \|^2_\infty])O_p\Big( \frac{s\log a}{NM}\Big)
 = O_p\Big(\frac{s \cdot s_l B^2_{NM}\log a}{(NM)^{1-1/q}}
 \Big).\label{eq:variance_estimator_1} 
\end{align}
Therefore, $(8)= O_p\Big(\frac{s \cdot s_l B^2_{NM}\log a}{(NM)^{1-1/q}} \Big).$

Similarly, for $(1)$ and $(3)$, we have
\begin{align*}
& \max_{ \substack{\|\xi\|=1\\
 \|\xi\|_0\le Cs_l}}
 \xi'\sumi\sumj \frac{1}{NM}\Big\{ \Big(\sumt Z_{ijt}R_{ijt}\Big)\Big(\sumt  Z_{ijt}\varepsilon_{ijt} \Big)'\Big\}\xi\\
 \le 
 & T^2\max_{t\in [T]}\max_{ \substack{\|\xi\|=1\\
 \|\xi\|_0\le Cs_l}}\frac{1}{NM}
 \sumi\sumj \xi' Z_{ijt}R_{ijt}  \varepsilon_{ijt} Z'_{ijt}\xi\\
  \le 
 & T^2\max_{t\in [T]}\max_{ \substack{\|\xi\|=1\\
 \|\xi\|_0\le Cs_l}}
 \sqrt{
 \frac{1}{NM}
 \sumi\sumj (\xi' Z_{ijt}R_{ijt} )^2
 } 
 \sqrt{
 \frac{1}{NM}\sumi \sumj (\varepsilon_{ijt} Z'_{ijt}\xi)^2 
 }
\end{align*}
Thus, by Assumptions \ref{a:sparsity} and \ref{a:moments} (1),
\begin{align}
 \max_{ \substack{\|\xi\|=1\\
 \|\xi\|_0\le Cs_l}} \frac{1}{NM}
 \sumi\sumj (\xi' Z_{ijt}R_{ijt} )^2
 \le &
  \max_{ \substack{\|\xi\|=1\\
 \|\xi\|_0\le Cs_l}} \frac{1}{NM}\max_{i,j,t}|\xi' Z_{ijt}|\sumi\sumj R_{ijt}^2\nonumber\\
  \le&   
 \max_{ \substack{\|\xi\|=1\\
 \|\xi\|_0\le Cs_l}}  \frac{1}{NM}\|\xi\|_1^2\cdot (1 \vee \max_{i,j,t}\| Z_{ijt} \|_\infty^2) \cdot s\nonumber\\
  =&O_p\Big( \frac{s \cdot s_l B_{NM}^2}{(NM)^{1-1/q}}\Big)\label{eq:variance_estimator_2} 
\end{align}
for all feasible $\xi$.
Since $Z'Z/NM=Q \bar \Psi Q$ and $\|Q\xi\|\le\|\xi\|$,
\begin{align}
\max_{ \substack{\|\xi\|=1\\
 \|\xi\|_0\le Cs_l}} \frac{1}{NM}\sumi \sumj (\varepsilon_{ijt} Z'_{ijt}\xi)^2 
\le& 
\max_{i,j,t}|\varepsilon_{ijt}|^2 \max_{ \substack{\|\xi\|=1\\
 \|\xi\|_0\le Cs_l}} \frac{1}{NM}\sumi \sumj ( Z'_{ijt}\xi)^2\nonumber \\
\le&  
\max_{i,j,t}|\varepsilon_{ijt}|^2 \max_{ \substack{\|\xi\|=1\\
 \|\xi\|_0\le Cs_l}} \xi' Q\bar \Psi Q \xi\nonumber \\
 \le&  
\max_{i,j,t}|\varepsilon_{ijt}|^2 \max_{ \substack{\|\xi\|=1\\
 \|\xi\|_0\le Cs_l}} \xi' \bar \Psi  \xi\nonumber \\
\le&  
O_p\Big(E \max_{i,j,t}|\varepsilon_{ijt}|^2\Big) \varphi^2_{\max}(\bar\Psi,Cs_l) 
= 
O_p\Big( (NM)^{1/q}\Big),
\label{eq:variance_estimator_3} 
\end{align}
where the second inequality is due to Assumption \ref{a:sparse_eigenvalues} and 
the last uses Assumption \ref{a:moments} (3).

Since all the remaining terms consist of the products of the above three components, by using (\ref{eq:variance_estimator_1}), (\ref{eq:variance_estimator_2}) and (\ref{eq:variance_estimator_3}), we obtain
\begin{align*}
\max_{ \substack{\|\xi\|=1
 \|\xi\|_0\le Cs_l}}\xi'(\hat \Omega -\tilde \Omega )\xi\le&
 O_p\Big( \sqrt{\frac{s \cdot s_l B^2_{NM} }{(NM)^{1-2/q}} }\Big)
 +
 O_p\Big( \sqrt{\frac{s \cdot s_l B^2_{NM} \log a}{(NM)^{1-2/q}} }\Big)
 +
  O_p\Big( \sqrt{\frac{s \cdot s_l B^2_{NM} }{(NM)^{1-2/q}} }\Big)   \\
&+  
   O_p\Big( \frac{s \cdot s_l B^2_{NM} }{(NM)^{1-1/q}} \Big)
+
  O_p\Big( \frac{s \cdot s_l B^2_{NM}\sqrt{\log a} }{(NM)^{1-1/q}} \Big)
  +
   O_p\Big( \sqrt{\frac{s \cdot s_l B^2_{NM} \log a}{(NM)^{1-2/q}} }\Big)\\
&   +
     O_p\Big( \frac{s \cdot s_l B^2_{NM}\sqrt{\log a} }{(NM)^{1-1/q}} \Big)
     +   
     O_p\Big( \frac{s \cdot s_l B^2_{NM}\log a }{(NM)^{1-1/q}} \Big)\\
     = &O_p\Big( \sqrt{\frac{s \cdot s_l B^2_{NM} \log a}{(NM)^{1-2/q}} }\Big).
\end{align*}
Using the rate for $\max_{l\in [k_0]}\|\hat \Theta_{l}-\Theta_{l}\|$ from Lemma \ref{lemma:Theta_1_way} under Assumptions \ref{a:sparsity}, \ref{a:moments}, and \ref{a:variance}, we have
\begin{align*}
(\ref{eq:variance_est})=O_p\Big( \sqrt{\frac{s \cdot s_l B^2_{NM} \log a}{(NM)^{1-2/q}} }\Big) + \frac{s_l\log a}{NM} O(1) + O(1)O_p(1)O_p\Big(\sqrt{\frac{s_l\log a}{NM}}\Big) =o_P(1)
\end{align*}
as desired.
\end{proof}

\begin{remark}
As emphasized in the main text, recall that Assumption \ref{a:Theta} (4) requires $ss_l(\log (p\vee (NM)))^2/(N \wedge M)=o(1)$ instead of $ss^2_l(\log(p\vee (NM)))^2/(N \wedge M)=o(1)$. 
This is due to the fact that we made use of the bound 
$ |\hat\Theta_{l}' \hat\Omega \hat\Theta_{l}-\hat\Theta_{l}' \tilde\Omega \hat\Theta_{l}|\le \|\hat \Theta_l\|^2 \max_{ \substack{\|\xi\|=1\\
 \|\xi\|_0\le Cs_l}}\xi'(\hat\Omega - \tilde\Omega)\xi   $ with probability approaching unity following Lemma \ref{lemma:empirical_pre-sparsity}. On the other hand , in \cite{Kock2016} and \cite{KockTang2018}, the bound based on the dual norm inequality $|\hat\Theta_{l}' \hat\Omega \hat\Theta_{l}-\hat\Theta_{l}' \tilde\Omega \hat\Theta_{l}|\le \|\hat \Theta_l\|_1^2 \|\hat \Omega - \tilde \Omega\|_\infty$ is used in place.
\end{remark}

\section{Auxiliary Lemmas}
\label{sec:auxiliary_lemmas}

\subsection{Oracle Inequalities}
\label{sec:oracle_inequalities}

\begin{assumption}[Oracle Inequalities]\label{a:rates}
For each $(N,M)$ and for some choice of $\mu$ that depends on $(N,M)$,
we have $2\|\widehat \Upsilon_1^{-1}\varepsilon'X\|_\infty \le \mu/c$, $2\|\widehat \Upsilon_2^{-1}\varepsilon'D_1\|_\infty \le\mu/\sqrt{N}c$ and $2\|\widehat \Upsilon_3^{-1}\varepsilon'D_2\|_\infty \le \mu/\sqrt{M}c$ with probability $1-o(1)$ for some $c>1$.
\end{assumption}

\begin{assumption}[Weights for Penalty]\label{a:penalty_loading}
There exist the ideal penalty loading matrix $\widehat \Upsilon^0_l$ with all elements bounded and bounded away from zero uniformly over $(N,M)$, sequences $u$, $\ell$ with $0<\ell \le 1\le u$, $\ell \overset{p}{\to} 1$, and $u  \overset{p}{\to} u'>1$ for some constant $u'$ such that
\begin{align*}
\ell \widehat \Upsilon^0_l \le \widehat \Upsilon_l \le u\widehat\Upsilon^0_l
\end{align*}
with probability $1-o(1)$ for $l=1,2,3$.
\end{assumption}
\begin{remark}\label{rem:weights}
There are many possible situations where one may want to impose weights to penalize different parameters differently. These situations include (1) the case where one incorporates extra information from economic theory; (2) a penalty choice based on the theory of moderate deviation inequality for self-normalized sums as in \cite{BelloniChenChernozhukovHansen2012}; (3) the case where one conducts an iterating lasso algorithm such as the conservative lasso as in \cite{CanerKock2018}; and (4) the common practice of normalizing the standard errrs of all covariates to one.
\end{remark}

\begin{assumption}[Restricted Eigenvalues]\label{a:restricted_eigenvalue}
For any $C>0$, there exists $\underline\kappa_C>0$ depends only on $C$ such that $\kappa^2_{C}:=\kappa^2_{C}(\bar\Psi,s_1,s_2,s_3)\ge \underline\kappa_C$ for all $(N,M)$ with probability $1-o(1)$.
\end{assumption}
\begin{remark}
As highlighted in \cite{BelloniChenChernozhukovHansen2012}, Assumption \ref{a:sparse_eigenvalues} implies Assumption \ref{a:restricted_eigenvalue} by the argument in \cite{BickelRitovTsybakov2009}.
\end{remark}
The following lemma presents oracle inequalities for three-dimensional panel lasso. Its proof is closely related to Lemma 6 of \cite{BelloniChenChernozhukovHansen2012}. The main difference is that it accounts for the presence of fixed effects with different effective sample sizes.
\begin{lemma}[Oracle Inequalities]\label{lemma:rates_eta}
If Assumptions \ref{a:sparsity}, \ref{a:rates}, \ref{a:penalty_loading}, and \ref{a:restricted_eigenvalue} are satisfied, then
 \begin{align*}
 &\|Z(\widehat{\boldsymbol{\eta}}-\overline{\boldsymbol{\eta}})\|=(\widehat{\boldsymbol{\eta}}-\overline{\boldsymbol{\eta}})'Q\bar \Psi Q(\widehat{\boldsymbol{\eta}}-\overline{\boldsymbol{\eta}})
 \lesssim \frac{\mu\sqrt{s}}{\sqrt{NM}\kappa_{c_0}} + c_s,\\
&\sqrt{(\widehat{\boldsymbol{\eta}}-\overline{\boldsymbol{\eta}})'\frac{Z'Z}{NM}(\widehat{\boldsymbol{\eta}}-\overline{\boldsymbol{\eta}})}\lesssim \frac{\mu\sqrt{s}}{NM\kappa_{c_0}} + \frac{c_s}{\sqrt{NM}},\\
 &\|\widehat \Upsilon_1^0 (\widehat{\boldsymbol{\beta}} - {\boldsymbol{\beta}})\|_1\lesssim \frac{\mu s}{NM\kappa_{2c_0}\kappa_{c_0}}+ \frac{\sqrt{s}c_s}{\sqrt{NM}\kappa_{2c_0}} +  \frac{ c_s^2}{\mu},\\
  &\|\widehat \Upsilon_2^0 (\widehat{\boldsymbol{\alpha}} - \overline{\boldsymbol{\alpha}})\|_1\lesssim \frac{\mu s}{\sqrt{N}M\kappa_{2c_0}\kappa_{c_0}}+ \frac{\sqrt{ s}c_s}{\sqrt{M}\kappa_{2c_0}} +  \frac{N^{1/2} c_s^2}{\mu}, \qquad\text{and}\\
  &\|\widehat \Upsilon_3^0 (\widehat{\boldsymbol{\gamma}} - \overline{\boldsymbol{\gamma}})\|_1\lesssim \frac{\mu s}{N\sqrt{M}\kappa_{2c_0}\kappa_{c_0}}+ \frac{\sqrt{ s}c_s}{\sqrt{N}\kappa_{2c_0}} +  \frac{M^{1/2} c_s^2}{\mu}.
 \end{align*}
\end{lemma}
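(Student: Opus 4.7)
The plan is to adapt the classical lasso oracle inequality argument (see \citet{BickelRitovTsybakov2009} and especially Lemma 6 of \citet{BelloniChenChernozhukovHansen2012}) to the three-block penalty structure in which $\boldsymbol{\beta}$, $\boldsymbol{\alpha}$, $\boldsymbol{\gamma}$ are penalized at rates $\mu$, $\mu/\sqrt{N}$, $\mu/\sqrt{M}$ reflecting their different effective sample sizes. Let $\delta = \widehat{\boldsymbol{\eta}} - \overline{\boldsymbol{\eta}}$ with block partition $(\delta_\beta',\delta_\alpha',\delta_\gamma')'$, let $J = \supp(\overline{\boldsymbol{\eta}})$, and use $Y = Z\overline{\boldsymbol{\eta}} + R + \varepsilon$. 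Optimality of $\widehat{\boldsymbol{\eta}}$ gives the basic inequality
$$\|Z\delta\|^2 \le 2\varepsilon'Z\delta + 2R'Z\delta + \mu\bigl[P(\overline{\boldsymbol{\eta}}) - P(\widehat{\boldsymbol{\eta}})\bigr].$$

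Next I would control the noise by H\"older's inequality applied block by block. On the event of Assumption \ref{a:rates}, $|2\varepsilon'X\delta_\beta|\le(\mu/c)\|\widehat\Upsilon_1\delta_\beta\|_1$, $|2\varepsilon'D_1\delta_\alpha|\le(\mu/(\sqrt{N}c))\|\widehat\Upsilon_2\delta_\alpha\|_1$, and $|2\varepsilon'D_2\delta_\gamma|\le(\mu/(\sqrt{M}c))\|\widehat\Upsilon_3\delta_\gamma\|_1$, so that $|2\varepsilon'Z\delta|\le(\mu/c)P(\delta)$. The approximation term is handled by Cauchy--Schwarz, $|2R'Z\delta|\le 2c_s\|Z\delta\|$, using Assumption \ref{a:sparsity}(2). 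The penalty-difference term is controlled by the standard splitting $P(\overline{\boldsymbol{\eta}}) - P(\widehat{\boldsymbol{\eta}})\le P_J(\delta) - P_{J^c}(\delta)$ (where $P_J$ restricts $P$ to $J$), and Assumption \ref{a:penalty_loading} permits passing between $\widehat\Upsilon$ and $\widehat\Upsilon^0$ at the cost of factors converging to constants. Combining yields
$$\|Z\delta\|^2 + \mu(1-1/c)\,P_{J^c}(\delta) \le \mu(1+1/c)\,P_J(\delta) + 2c_s\|Z\delta\|.$$

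After absorbing $2c_s\|Z\delta\|$ via the Young inequality $2ab \le \tfrac12 a^2 + 2b^2$, $\delta$ essentially satisfies the $c_0$-cone condition with $c_0 = (c+1)/(c-1)$, so Assumption \ref{a:restricted_eigenvalue} applies to $Q\delta$ through $\bar\Psi$. Because $Q$ is block-diagonal with entries $I_{k_0}$, $I_{N_0}/\sqrt{N}$, $I_{M_0}/\sqrt{M}$, one has $P_J(\delta)$ equal to $\|Q\delta_J\|_1$ up to the constant loadings $\widehat\Upsilon^0$, hence $P_J(\delta)\le \sqrt{s}\,\|Q\delta_J\|\le \sqrt{s}\,\|Z\delta\|/(\sqrt{NM}\,\kappa_{c_0})$. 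Substituting back produces a quadratic inequality in $\|Z\delta\|$ whose solution is the prediction-norm bound $\|Z\delta\|\lesssim \mu\sqrt{s}/(\sqrt{NM}\,\kappa_{c_0}) + c_s$; dividing by $\sqrt{NM}$ gives the second bound of the lemma.

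For the three $\ell_1$ bounds, I would deduce the $2c_0$-cone condition that carries an extra $c_s^2/\mu$ slack (by not fully absorbing the $2c_s\|Z\delta\|$ term into $\|Z\delta\|^2$), yielding $\|Q\delta\|_1\lesssim \sqrt{s}\,\|Z\delta\|/(\sqrt{NM}\,\kappa_{2c_0}) + c_s^2/\mu$. Translating from the rescaled $\ell_1$ norm on each block back to its native block-$\ell_1$ norm multiplies the $\alpha$-block bound by $\sqrt{N}$ and the $\gamma$-block bound by $\sqrt{M}$ (leaving the $\beta$-block intact), which produces exactly the differing rates in the three stated $\ell_1$ inequalities. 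The step requiring the most care is preventing the $R$-induced term $2c_s\|Z\delta\|$ from polluting the rates: it must be absorbed so as to contribute only the $c_s$ piece in the prediction-norm bound and the $c_s^2/\mu$ piece in the $\ell_1$ bounds, rather than being bounded crudely. The remainder is bookkeeping across the three blocks with their different effective sample sizes.
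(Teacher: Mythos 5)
Your proposal is correct and follows essentially the same route as the paper's proof: the basic inequality from optimality, block-wise dual-norm control of $\varepsilon'Z\delta$ under Assumption \ref{a:rates}, the Cauchy--Schwarz bound $2|R'Z\delta|\le 2c_s\|Z\delta\|$, the restricted-eigenvalue step on the cone, and the separate $2c_0$-cone case analysis producing the $c_s^2/\mu$ slack in the $\ell_1$ bounds. The only cosmetic difference is that the paper absorbs $2c_s\|Z\delta\|$ by splitting on whether $\|Z\delta\|\ge 2c_s$ (rather than Young's inequality) and carries the penalty-loading factors into $c_0=(uc+1)/(\ell c-1)$, but you correctly identify this absorption as the delicate step and the intended outcome.
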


\begin{proof}
From the definition of $\widehat{\boldsymbol{\eta}}$, we have 
\begin{align*}
& \|y- Z \widehat{\boldsymbol{\eta}}\|^2 + \mu P(\widehat{\boldsymbol{\eta}})
\le
\|y- Z \overline{\boldsymbol{\eta}}\|^2 + \mu P(\overline{\boldsymbol{\eta}}).
\end{align*}
Rewrite this inequality and and get
\begin{align*}
& \| Z( \widehat{\boldsymbol{\eta}} - \overline{\boldsymbol{\eta}} ) + (R+\varepsilon)\|^2 + \mu P(\widehat{\boldsymbol{\eta}})
\le
\|(R+\varepsilon)\|^2 + \mu P(\overline{\boldsymbol{\eta}}).
\end{align*}
\begin{align*}
& \| Z( \widehat{\boldsymbol{\eta}} - \overline{\boldsymbol{\eta}} ) + (R+\varepsilon)\|^2 
\le
\|(R+\varepsilon)\|^2 + \mu (P(\overline{\boldsymbol{\eta}})- P(\widehat{\boldsymbol{\eta}}) ).
\end{align*}
Using reverse triangle inequality and the dual norm inequality,
\begin{align*}
 \| Z( \widehat{\boldsymbol{\eta}} - \overline{\boldsymbol{\eta}} )\|^2  
\le&
2 |\varepsilon'Z( \widehat{\boldsymbol{\eta}} - \overline{\boldsymbol{\eta}} )|+
2 |R' Z( \widehat{\boldsymbol{\eta}} - \overline{\boldsymbol{\eta}} )|
 + \mu \Big\{P( (\overline{\boldsymbol{\eta}} - \widehat{\boldsymbol{\eta}})_J )- P((\overline{\boldsymbol{\eta}} - \widehat{\boldsymbol{\eta}})_{J^c})  \Big\}\\
 \le&
2 \|(\widehat \Upsilon_1)^{-1}\varepsilon'X\|_\infty\|\widehat \Upsilon_1(\widehat{\boldsymbol{\beta}} - {\boldsymbol{\beta}}) \|_1+2 \|(\widehat \Upsilon_2)^{-1}\varepsilon'D_1\|_\infty\|\widehat \Upsilon_2( \widehat{\boldsymbol{\alpha}} - \overline{\boldsymbol{\alpha}} )\|_1 \\
&+ 2 \|(\widehat \Upsilon_3)^{-1}\varepsilon'D_2\|_\infty\|\widehat \Upsilon_3 \widehat{\boldsymbol{\gamma}} - \overline{\boldsymbol{\gamma}} \|_1\\
&+ 2 \|R\| \|Z( \widehat{\boldsymbol{\eta}} - \overline{\boldsymbol{\eta}} )\|
 + \mu \Big\{P( (\overline{\boldsymbol{\eta}} - \widehat{\boldsymbol{\eta}})_J )- P((\overline{\boldsymbol{\eta}} - \widehat{\boldsymbol{\eta}})_{J^c})  \Big\}\\
\le&
\frac{\mu}{c}\Big(\|\widehat \Upsilon_1(\widehat{\boldsymbol{\beta}} - {\boldsymbol{\beta}}) \|_1+\frac{1}{\sqrt{N}}\|\widehat \Upsilon_2( \widehat{\boldsymbol{\alpha}} - \overline{\boldsymbol{\alpha}}) \|_1 + \frac{1}{\sqrt{M}}\|\widehat \Upsilon_3( \widehat{\boldsymbol{\gamma}} - \overline{\boldsymbol{\gamma}}) \|_1\Big)\\
&+ 2 c_s \|Z( \widehat{\boldsymbol{\eta}} - \overline{\boldsymbol{\eta}} )\|
 + \mu \Big\{P( (\overline{\boldsymbol{\eta}} - \widehat{\boldsymbol{\eta}})_J )- P((\overline{\boldsymbol{\eta}} - \widehat{\boldsymbol{\eta}})_{J^c})  \Big\},
\end{align*}
where the third inequality follows from Assumptions \ref{a:sparsity} and \ref{a:rates}.
By the definition of $P$,  we have
\begin{align}
 \| Z( \widehat{\boldsymbol{\eta}} - \overline{\boldsymbol{\eta}} )\|^2  \le& 
\mu\Big(u+\frac{1}{c}\Big)\Big(\|\widehat \Upsilon^0_1(\widehat{\boldsymbol{\beta}} - {\boldsymbol{\beta}})_{J_1} \|_1+\frac{1}{\sqrt{N}}\|\widehat \Upsilon^0_2( \widehat{\boldsymbol{\alpha}} - \overline{\boldsymbol{\alpha}})_{J_2} \|_1 + \frac{1}{\sqrt{M}}\|\widehat \Upsilon^0_3( \widehat{\boldsymbol{\gamma}} - \overline{\boldsymbol{\gamma}})_{J_3} \|_1\Big)\nonumber\\
&
-\mu\Big(\ell-\frac{1}{c}\Big)\Big(\|\widehat \Upsilon^0_1(\widehat{\boldsymbol{\beta}} - {\boldsymbol{\beta}})_{J^c_1} \|_1+\frac{1}{\sqrt{N}}\|\widehat \Upsilon^0_2( \widehat{\boldsymbol{\alpha}} - \overline{\boldsymbol{\alpha}})_{J^c_2} \|_1 + \frac{1}{\sqrt{M}}\|\widehat \Upsilon^0_3( \widehat{\boldsymbol{\gamma}} - \overline{\boldsymbol{\gamma}})_{J^c_3} \|_1\Big)\nonumber\\
&+ 2c_s \|Z( \widehat{\boldsymbol{\eta}} - \overline{\boldsymbol{\eta}} )\|\label{eq:basic_inequality}
\end{align}
under Assumption \ref{a:penalty_loading}.

We now branch into two cases.
First, suppose that $\|Z(\widehat{\boldsymbol{\eta}} - \overline{\boldsymbol{\eta}})\| < 2 c_s$.
In this case, the first equation in the statement of the lemma is trivially true since all the terms on right-hand side of the first equation in the statement of the lemma are non-negative. 
Second, suppose that $\|Z(\widehat{\boldsymbol{\eta}} - \overline{\boldsymbol{\eta}})\|\ge 2c_s$.
In this case,
\begin{align*}
 \| Z( \widehat{\boldsymbol{\eta}} - \overline{\boldsymbol{\eta}} )\|^2  \le& 
\mu\Big(u+\frac{1}{c}\Big)\Big(\|\widehat \Upsilon^0_1(\widehat{\boldsymbol{\beta}} - {\boldsymbol{\beta}})_{J_1} \|_1+\frac{1}{\sqrt{N}}\|\widehat \Upsilon^0_2( \widehat{\boldsymbol{\alpha}} - \overline{\boldsymbol{\alpha}})_{J_2} \|_1 + \frac{1}{\sqrt{M}}\|\widehat \Upsilon^0_3( \widehat{\boldsymbol{\gamma}} - \overline{\boldsymbol{\gamma}})_{J_3} \|_1\Big)\\
&
-\mu\Big(\ell-\frac{1}{c}\Big)\Big(\|\widehat \Upsilon^0_1(\widehat{\boldsymbol{\beta}} - {\boldsymbol{\beta}})_{J^c_1} \|_1+\frac{1}{\sqrt{N}}\|\widehat \Upsilon^0_2( \widehat{\boldsymbol{\alpha}} - \overline{\boldsymbol{\alpha}})_{J^c_2} \|_1 + \frac{1}{\sqrt{M}}\|\widehat \Upsilon^0_3( \widehat{\boldsymbol{\gamma}} - \overline{\boldsymbol{\gamma}})_{J^c_3} \|_1\Big)\\
&+ \|Z( \widehat{\boldsymbol{\eta}} - \overline{\boldsymbol{\eta}} )\|^2,
\end{align*}
and thus 
\begin{align}
&\Big(\|\widehat \Upsilon^0_1(\widehat{\boldsymbol{\beta}} - {\boldsymbol{\beta}})_{J^c_1} \|_1+\frac{1}{\sqrt{N}}\|\widehat \Upsilon^0_2( \widehat{\boldsymbol{\alpha}} - \overline{\boldsymbol{\alpha}})_{J^c_2} \|_1 + \frac{1}{\sqrt{M}}\|\widehat \Upsilon^0_3( \widehat{\boldsymbol{\gamma}} - \overline{\boldsymbol{\gamma}})_{J^c_3} \|_1\Big)\nonumber \\ 
\le& 
c_0\Big(\|\widehat \Upsilon^0_1(\widehat{\boldsymbol{\beta}} - {\boldsymbol{\beta}})_{J_1} \|_1+\frac{1}{\sqrt{N}}\|\widehat \Upsilon^0_2( \widehat{\boldsymbol{\alpha}} - \overline{\boldsymbol{\alpha}})_{J_2} \|_1 + \frac{1}{\sqrt{M}}\|\widehat \Upsilon^0_3( \widehat{\boldsymbol{\gamma}} - \overline{\boldsymbol{\gamma}})_{J_3} \|_1\Big)\label{eq:restricted_event},
\end{align}
where $c_0=(uc+1)/(\ell c-1)$. 
Assumption \ref{a:restricted_eigenvalue} implies that, for any $\delta$ which is in the choice set of the minimum of restricted eigenvalue definition, we have
\begin{align*}
\kappa^2_{c_0}=\underset{ \substack{
R_1 \subset [k],\, |R_1|\le s_1\\
R_2 \subset [N_0],\, |R_2|\le s_2\\
R_3 \subset [M_0],\, |R_3|\le s_3\\
R= R_1 \cup R_2 \cup R_3 
} }{\min} 
\underset{\substack{\delta\in \mathbb{R}^p\setminus \{0\} \\
\|\delta_{J}^c\|_1 \le C\|\delta_{J}\|_1
}
}{\min} (s_1+s_2+s_3)\frac{\delta'\bar\Psi\delta}{\|\delta\|^2_1}.
\end{align*}
Since $\delta'\Psi\delta=\delta'S^{-1} Z'Z S^{-1}\delta=b'Z'Zb$ for $b=S^{-1}\delta$, we can rewrite the condition in terms of $b$ and obtain
\begin{align*}
\kappa^2_{c_0}=\underset{ \substack{
R_1 \subset [k],\, |R_1|\le s_1\\
R_2 \subset [N_0],\, |R_2|\le s_2\\
R_3 \subset [M_0],\, |R_3|\le s_3\\
R= R_1 \cup R_2 \cup R_3 
} }{\min} 
\underset{\substack{b\in \mathbb{R}^p\setminus \{0\} \\
\|b^1_{R^c_1}\|_1 +\frac{1}{\sqrt{N}}\|b^2_{R^c_2}\|_1
+\frac{1}{\sqrt{M}}\|b^3_{R^c_3}\|_1
\\
 \le c_0 \|b_{R_2}\|_1
 +\frac{1}{\sqrt{N}}\|b^2_{R_2}\|_1
 +\frac{1}{\sqrt{M}}\|b^3_{R_3}\|_1
}
}{\min} (s_1+s_2+s_3)\frac{\|Zb\|^2}{NM\|(NM)^{-1} S b \|^2_1}
\end{align*}
Note that (\ref{eq:restricted_event}) implies that we can let $b=\widehat{\boldsymbol{\eta}}-\overline{\boldsymbol{\eta}}$. 
Thus,
\begin{align*}
\Bigg\|\substack{ (\widehat{\boldsymbol{\beta}} - {\boldsymbol{\beta}})_{J_1}\\
\frac{1}{\sqrt{N}}(\widehat{\boldsymbol{\alpha}} -\overline{\boldsymbol{\alpha}})_{J_2}\\
\frac{1}{\sqrt{M}}(\widehat{\boldsymbol{\gamma}} -\overline{\boldsymbol{\gamma}})_{J_3}
 }\Bigg\|_1^2\le 
\frac{(s_1+ s_2 + s_3)} {\kappa^2_{c_0} NM}
  \|Z(\widehat{\boldsymbol{\eta}} -\overline{\boldsymbol{\eta}})\|^2.
\end{align*}
Taking the square root on both sides yields
\begin{align}
\|\widehat \Upsilon^0_1(\widehat{\boldsymbol{\beta}} - {\boldsymbol{\beta}})_{J_1} \|_1+\frac{1}{\sqrt{N}}\|\widehat \Upsilon^0_2( \widehat{\boldsymbol{\alpha}} - \overline{\boldsymbol{\alpha}})_{J_2} \|_1 + \frac{1}{\sqrt{M}}\|\widehat \Upsilon^0_3( \widehat{\boldsymbol{\gamma}} - \overline{\boldsymbol{\gamma}})_{J_3} \|_1
\le
 \frac{\sqrt{s_1+ s_2 + s_3}} {\kappa_{c_0} \sqrt{NM}}
  \|Z(\widehat{\boldsymbol{\eta}} -\overline{\boldsymbol{\eta}})\|. \label{eq:RE_inequalty}
\end{align}
Finally, substitute this equation into (\ref{eq:basic_inequality}) and drop the negative terms on the right-hand side yield
\begin{align*}
 \| Z( \widehat{\boldsymbol{\eta}} - \overline{\boldsymbol{\eta}} )\|  \le& 
\mu\Big(u+\frac{1}{c}\Big) \frac{\sqrt{s_1+ s_2 + s_3}} {\kappa_{c_0} \sqrt{NM}}
 + 2c_s.
\end{align*}
This shows the first equation in the statement of the lemma.

We next obtain the $L^1$-norm bounds.
We branch into two cases.
First, suppose that
\begin{align*}
&\Big(\|\widehat \Upsilon^0_1(\widehat{\boldsymbol{\beta}} - {\boldsymbol{\beta}})_{J^c_1} \|_1+\frac{1}{\sqrt{N}}\|\widehat \Upsilon^0_2( \widehat{\boldsymbol{\alpha}} - \overline{\boldsymbol{\alpha}})_{J^c_2} \|_1 + \frac{1}{\sqrt{M}}\|\widehat \Upsilon^0_3( \widehat{\boldsymbol{\gamma}} - \overline{\boldsymbol{\gamma}})_{J^c_3} \|_1\Big) \\ \le& 
2c_0\Big(\|\widehat \Upsilon^0_1(\widehat{\boldsymbol{\beta}} - {\boldsymbol{\beta}})_{J_1} \|_1+\frac{1}{\sqrt{N}}\|\widehat \Upsilon^0_2( \widehat{\boldsymbol{\alpha}} - \overline{\boldsymbol{\alpha}})_{J_2} \|_1 + \frac{1}{\sqrt{M}}\|\widehat \Upsilon^0_3( \widehat{\boldsymbol{\gamma}} - \overline{\boldsymbol{\gamma}})_{J_3} \|_1\Big).
\end{align*}
By definition of $\kappa_{2 c_0}$, we have
\begin{align*}
&\Big(\|\widehat \Upsilon^0_1(\widehat{\boldsymbol{\beta}} - {\boldsymbol{\beta}}) \|_1+\frac{1}{\sqrt{N}}\|\widehat \Upsilon^0_2( \widehat{\boldsymbol{\alpha}} - \overline{\boldsymbol{\alpha}}) \|_1 + \frac{1}{\sqrt{M}}\|\widehat \Upsilon^0_3( \widehat{\boldsymbol{\gamma}} - \overline{\boldsymbol{\gamma}}) \|_1\Big) \\ \le& 
(1+2c_0)\Big(\|\widehat \Upsilon^0_1(\widehat{\boldsymbol{\beta}} - {\boldsymbol{\beta}})_{J_1} \|_1+\frac{1}{\sqrt{N}}\|\widehat \Upsilon^0_2( \widehat{\boldsymbol{\alpha}} - \overline{\boldsymbol{\alpha}})_{J_2} \|_1 + \frac{1}{\sqrt{M}}\|\widehat \Upsilon^0_3( \widehat{\boldsymbol{\gamma}} - \overline{\boldsymbol{\gamma}})_{J_3} \|_1\Big)\\
\le& (1+2c_0)\frac{\sqrt{s_1 + s_2 + s_3}}{\kappa_{2c_0}\sqrt{NM}}\|Z(\widehat{\boldsymbol{\eta}} - \overline{\boldsymbol{\eta}})\|
\end{align*}
by applying similar lines of arguments to those of the first part of the proof using $2c_0$ in place of $c_0$.
Second, suppose that 
\begin{align}
&\Big(\|\widehat \Upsilon^0_1(\widehat{\boldsymbol{\beta}} - {\boldsymbol{\beta}})_{J^c_1} \|_1+\frac{1}{\sqrt{N}}\|\widehat \Upsilon^0_2( \widehat{\boldsymbol{\alpha}} - \overline{\boldsymbol{\alpha}})_{J^c_2} \|_1 + \frac{1}{\sqrt{M}}\|\widehat \Upsilon^0_3( \widehat{\boldsymbol{\gamma}} - \overline{\boldsymbol{\gamma}})_{J^c_3} \|_1\Big) \nonumber\\ >& 
2c_0\Big(\|\widehat \Upsilon^0_1(\widehat{\boldsymbol{\beta}} - {\boldsymbol{\beta}})_{J_1} \|_1+\frac{1}{\sqrt{N}}\|\widehat \Upsilon^0_2( \widehat{\boldsymbol{\alpha}} - \overline{\boldsymbol{\alpha}})_{J_2} \|_1 + \frac{1}{\sqrt{M}}\|\widehat \Upsilon^0_3( \widehat{\boldsymbol{\gamma}} - \overline{\boldsymbol{\gamma}})_{J_3} \|_1\Big).\label{eq:ineq_for_l_1}.
\end{align}
In this case, equation (\ref{eq:basic_inequality}) implies 
\begin{align*}
 \| Z( \widehat{\boldsymbol{\eta}} - \overline{\boldsymbol{\eta}} )\|^2  \le& 
\mu\Big(u+\frac{1}{c}\Big)\Big(\|\widehat \Upsilon^0_1(\widehat{\boldsymbol{\beta}} - {\boldsymbol{\beta}})_{J_1} \|_1+\frac{1}{\sqrt{N}}\|\widehat \Upsilon^0_2( \widehat{\boldsymbol{\alpha}} - \overline{\boldsymbol{\alpha}})_{J_2} \|_1 + \frac{1}{\sqrt{M}}\|\widehat \Upsilon^0_3( \widehat{\boldsymbol{\gamma}} - \overline{\boldsymbol{\gamma}})_{J_3} \|_1\Big)\nonumber\\
&
-\mu\Big(\ell-\frac{1}{c}\Big)\Big(\|\widehat \Upsilon^0_1(\widehat{\boldsymbol{\beta}} - {\boldsymbol{\beta}})_{J^c_1} \|_1+\frac{1}{\sqrt{N}}\|\widehat \Upsilon^0_2( \widehat{\boldsymbol{\alpha}} - \overline{\boldsymbol{\alpha}})_{J^c_2} \|_1 + \frac{1}{\sqrt{M}}\|\widehat \Upsilon^0_3( \widehat{\boldsymbol{\gamma}} - \overline{\boldsymbol{\gamma}})_{J^c_3} \|_1\Big)\nonumber\\
&+ 2c_s \|Z( \widehat{\boldsymbol{\eta}} - \overline{\boldsymbol{\eta}} )\|
\le 2c_s \|Z( \widehat{\boldsymbol{\eta}} - \overline{\boldsymbol{\eta}} )\|,
\end{align*}
where the last inequality is due to the definition of $c_0 = (uc + 1)/(\ell c - 1)$.
Equation (\ref{eq:basic_inequality}) further implies that
\begin{align*}
&\Big(\|\widehat \Upsilon^0_1(\widehat{\boldsymbol{\beta}} - {\boldsymbol{\beta}})_{J^c_1} \|_1+\frac{1}{\sqrt{N}}\|\widehat \Upsilon^0_2( \widehat{\boldsymbol{\alpha}} - \overline{\boldsymbol{\alpha}})_{J^c_1} \|_1 + \frac{1}{\sqrt{M}}\|\widehat \Upsilon^0_3( \widehat{\boldsymbol{\gamma}} - \overline{\boldsymbol{\gamma}})_{J^c_3} \|_1\Big)\\ \le& 
c_0\Big(\|\widehat \Upsilon^0_1(\widehat{\boldsymbol{\beta}} - {\boldsymbol{\beta}})_{J_1} \|_1+\frac{1}{\sqrt{N}}\|\widehat \Upsilon^0_2( \widehat{\boldsymbol{\alpha}} - \overline{\boldsymbol{\alpha}})_{J_2} \|_1 + \frac{1}{\sqrt{M}}\|\widehat \Upsilon^0_3( \widehat{\boldsymbol{\gamma}} - \overline{\boldsymbol{\gamma}})_{J_3} \|_1\Big)\nonumber\\
&+ \frac{c}{\ell c -1} \frac{1}{\mu}\|Z( \widehat{\boldsymbol{\eta}} - \overline{\boldsymbol{\eta}} )\|(2c_s-\|Z( \widehat{\boldsymbol{\eta}} - \overline{\boldsymbol{\eta}} )\| )\\
\le& 
c_0\Big(\|\widehat \Upsilon^0_1(\widehat{\boldsymbol{\beta}} - {\boldsymbol{\beta}})_{J_1} \|_1+\frac{1}{\sqrt{N}}\|\widehat \Upsilon^0_2( \widehat{\boldsymbol{\alpha}} - \overline{\boldsymbol{\alpha}})_{J_2} \|_1 + \frac{1}{\sqrt{M}}\|\widehat \Upsilon^0_3( \widehat{\boldsymbol{\gamma}} - \overline{\boldsymbol{\gamma}})_{J_3} \|_1\Big)\nonumber\\
&+ \frac{c}{\ell c -1} \frac{1}{\mu}c_s^2\\
\le&
\frac{c_0}{2}\Big(\|\widehat \Upsilon^0_1(\widehat{\boldsymbol{\beta}} - {\boldsymbol{\beta}})_{J^c_1} \|_1+\frac{1}{\sqrt{N}}\|\widehat \Upsilon^0_2( \widehat{\boldsymbol{\alpha}} - \overline{\boldsymbol{\alpha}})_{J^c_2} \|_1 + \frac{1}{\sqrt{M}}\|\widehat \Upsilon^0_3( \widehat{\boldsymbol{\gamma}} - \overline{\boldsymbol{\gamma}})_{J^c_3} \|_1\Big)\nonumber\\
&+ \frac{c}{\ell c -1} \frac{1}{\mu}c_s^2,
\end{align*}
where 
the first inequality follows from (\ref{eq:basic_inequality}),
the second inequality follows from $\|Z( \widehat{\boldsymbol{\eta}} - \overline{\boldsymbol{\eta}} )\|(2c_s-\|Z( \widehat{\boldsymbol{\eta}} - \overline{\boldsymbol{\eta}} )\| )\le \max_{x\ge 0}x(2c_s-x)\le c_s^2$, and 
the third inequality follows from (\ref{eq:ineq_for_l_1}). 
Therefore,
\begin{align*}
&\Big(\|\widehat \Upsilon^0_1(\widehat{\boldsymbol{\beta}} - {\boldsymbol{\beta}}) \|_1+\frac{1}{\sqrt{N}}\|\widehat \Upsilon^0_2( \widehat{\boldsymbol{\alpha}} - \overline{\boldsymbol{\alpha}}) \|_1 + \frac{1}{\sqrt{M}}\|\widehat \Upsilon^0_3( \widehat{\boldsymbol{\gamma}} - \overline{\boldsymbol{\gamma}}) \|_1\Big)\\
\le &
\Big(1+\frac{1}{2c_0}\Big)\Big(\|\widehat \Upsilon^0_1(\widehat{\boldsymbol{\beta}} - {\boldsymbol{\beta}})_{J^c_1} \|_1+\frac{1}{\sqrt{N}}\|\widehat \Upsilon^0_2( \widehat{\boldsymbol{\alpha}} - \overline{\boldsymbol{\alpha}})_{J^c_2} \|_1 + \frac{1}{\sqrt{M}}\|\widehat \Upsilon^0_3( \widehat{\boldsymbol{\gamma}} - \overline{\boldsymbol{\gamma}})_{J^c_3} \|_1\Big)\\
\le& \Big(1+\frac{1}{2c_0}\Big)\frac{2c}{\ell c-1}\frac{1}{\mu}c^2_s,
\end{align*}
where
the first inequality is due to (\ref{eq:ineq_for_l_1})
and
the second inequality is due to the previous equation.
Combining the two cases together, we obtain
\begin{align*}
&\Big(\|\widehat \Upsilon^0_1(\widehat{\boldsymbol{\beta}} - {\boldsymbol{\beta}}) \|_1+\frac{1}{\sqrt{N}}\|\widehat \Upsilon^0_2( \widehat{\boldsymbol{\alpha}} - \overline{\boldsymbol{\alpha}}) \|_1 + \frac{1}{\sqrt{M}}\|\widehat \Upsilon^0_3( \widehat{\boldsymbol{\gamma}} - \overline{\boldsymbol{\gamma}}) \|_1\Big)\\
\le& 
(1+2c_0)\frac{\sqrt{s_1 + s_2 + s_3}}{\kappa_{2c_0}\sqrt{NM}}\|Z(\widehat{\boldsymbol{\eta}} - \overline{\boldsymbol{\eta}})\|
+
\Big(1+\frac{1}{2c_0}\Big)\frac{2c}{\ell c-1}\frac{1}{\mu}c^2_s,
\end{align*}
and the remaining three equations in the statement of the lemma follow.
\end{proof}

\subsection{Concentration Inequality}
\label{sec:concentration_inequality}

The following lemma follows from \citet{CCK14} and \citet{CCK15}.

\begin{lemma}[A Concentration Inequality]\label{lemma:concentration_inequality_CCK}
Let $(X_i)_{i\in [n]}$ be $p$-dimensional independent random vectors,
$B=\sqrt{E[\max_{i\in [n]}\|X_i\|^2_\infty] }$, and 
$\sigma^2=\max_{j\in[p]}\frac{1}{n}\sum_{i=1}^n E|X_{ij}|^2$. 
With probability at least $1-C(\log n)^{-1}$,
\begin{align*}
 \max_{j\in [p]}\Big| \frac{1}{n}\sum_{i=1}^n (|X_{ij}| - E|X_{ij}|)\Big| \lesssim \sqrt{\frac{\sigma^2\log (p\vee n)}{n} }
+
\frac{B \log( p\vee n)}{n}.
\end{align*}
\end{lemma}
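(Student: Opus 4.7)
The plan is to reduce the claim to a Bernstein--Nemirovski maximal inequality for sums of independent random vectors from the CCK toolbox, and then convert the resulting moment/tail bound into the stated high-probability statement.

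First I would set $W_i := (|X_{i1}|-E|X_{i1}|, \ldots, |X_{ip}|-E|X_{ip}|)$, which are independent, mean-zero, have coordinate-wise variance bounded by $E|X_{ij}|^2$ (so the CCK ``variance proxy'' is at most $n\sigma^2$), and whose sup-norm envelope satisfies $(E[\max_i\|W_i\|_\infty^2])^{1/2} \le 2B$ by $|W_{ij}| \le 2\max_i\|X_i\|_\infty$. Then I would invoke the maximal moment inequality of Chernozhukov--Chetverikov--Kato (e.g. Lemma 8 of CCK 2014, or the Fuk--Nagaev-type bound in CCK 2015) applied to $W_i$, which delivers
\begin{equation*}
E\!\left[\max_{j\in[p]}\Bigl|\tfrac{1}{n}\sum_i W_{ij}\Bigr|\right] \le K\!\left(\sqrt{\sigma^2\log(p\vee n)/n} + B\log(p\vee n)/n\right).
\end{equation*}

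Second, to obtain the stated probability bound I would apply Markov's inequality: since only a probability of $1 - C(\log n)^{-1}$ is demanded, an absolute constant multiple of the expectation suffices at this level. An equivalent route, if one prefers to avoid any logarithmic cushion inside $\lesssim$, is to truncate $X_{ij}$ at level $B\sqrt{\log n}$ (controlling the truncation probability by $1/\log n$ via Markov on $\max_i\|X_i\|_\infty^2$) and apply Bernstein's inequality with a union bound over $j \in [p]$ to the truncated centered sums; the uncentered truncation bias is harmless because one also subtracts its conditional mean.

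The main technical obstacle is the Poisson-type term $B\log(p\vee n)/n$: a direct Nemirovski-style bound would produce an envelope of the form $E[\max_i\|X_i\|_\infty]$ or $(E[\max_i\|X_i\|_\infty^q])^{1/q}$ with $q>2$, which need not match $B$ under the paper's weak moment assumptions. The CCK maximal inequalities are engineered precisely to reduce the envelope to its $L^2$-norm $B$ via a symmetrization plus Hoffmann--Jorgensen step inside a sharp moment inequality, and invoking their statement directly delivers the claim without any extraneous logarithmic factor.
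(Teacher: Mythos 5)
Your setup is right and you correctly isolate the real technical point (getting the envelope in $L^2$, i.e.\ $B$, rather than in $L^q$ for $q>2$), but the step that converts the expectation bound into the stated high-probability bound does not go through as written, and that is precisely the step where the paper does something you omit. Markov's inequality gives $P\{Z\ge K\,E[Z]\}\le 1/K$, so to reach a failure probability of $C(\log n)^{-1}$ you would need $K\asymp\log n$, which inflates the bound by a $\log n$ factor; a constant multiple of the expectation only yields a constant (non-vanishing) failure probability, which does not match the statement since $C(\log n)^{-1}\to 0$. Your fallback of truncating at level $B\sqrt{\log n}$ and applying Bernstein plus a union bound has the same defect in the Poisson term: with only a second-moment envelope, Markov on $\max_i\|X_i\|_\infty^2$ forces the truncation level to be at least of order $B\sqrt{\log n}$, so the linear Bernstein term becomes $B\sqrt{\log n}\,\log(p\vee n)/n$, which is larger than the claimed $B\log(p\vee n)/n$ by a factor $\sqrt{\log n}$. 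So neither route you describe actually delivers the inequality at the stated rate and probability level.

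The missing ingredient is a deviation (not moment) inequality around the mean. The paper's proof couples the maximal moment inequality (Lemma 8 of \citet{CCK15}, which is the expectation bound you state) with the Fuk--Nagaev/Talagrand-type deviation inequality of Theorem 5.1 of \citet{CCK14}, applied with $\alpha=1$, $q=2$, and $t=\log n$: that theorem bounds $P\{Z\ge(1+\alpha)E[Z]+K(\sigma\sqrt{t}+t\,\|\max_i\|X_i\|_\infty\|_{2})\}$ by $e^{-t/3}+Ct^{-q/2}$, and the residual $t^{-q/2}=(\log n)^{-1}$ is exactly the origin of the $1-C(\log n)^{-1}$ probability in the lemma, while the $\sigma\sqrt{t}$ and $t\,B$ terms are absorbed into the two terms of the stated bound. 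You mention a ``Fuk--Nagaev-type bound'' in passing but then only extract an expectation bound from it; invoking the deviation form explicitly is what closes the gap without any extraneous logarithmic factor.
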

\begin{proof}
The claim follows from applying Theorem 5.1 of \citet{CCK14} to Lemma 8 of \citet{CCK15} with $t=\log n$, $\alpha= 1$, and $q=2$.
\end{proof}

\subsection{Regularized Events}
\label{sec:regularized_events}

\begin{lemma}[Regularized Events]\label{lemma:regularized_events}
Fix constants $c>1$ and $C > 0$, and let $\widehat\Upsilon=I$. 
If Assumption \ref{a:moments} is satisfied, then we have $2\|\varepsilon'X\|_\infty \le \mu/c$, $2\|\varepsilon'D_1\|_\infty \le\mu/c\sqrt{N}$ and $2\|\varepsilon'D_2\|_\infty \le \mu/c\sqrt{M}$ with probability at least $1-C(\log (N\wedge M))^{-1}$ where $\mu = C\sqrt{NM\log a}$. 
Similarly, if Assumption \ref{a:moments} is satisfied, then we have $\| X_{-l}\zeta_l\|_\infty \le \mu_{\text{node,l}}/2c$,  $\|D_l\zeta_l\|_\infty \le \mu_{\text{node,l}}/2c\sqrt{N}$,  and $\|D_2\zeta_l\|_\infty \le \mu_{\text{node,l}}/2c\sqrt{M}$ uniformly over $l\in [p]$ with probability at least $1-C(\log (N\wedge M))^{-1}$ where $\mu_{\text{node},l}=C\sqrt{NM\log a}$.
\end{lemma}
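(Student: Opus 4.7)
The plan is to bound each of the six quantities in the statement by writing them as sums of independent cluster contributions and then applying the concentration inequality in Lemma \ref{lemma:concentration_inequality_CCK}, combined with a union bound over the (at most $p$-many) columns. The clustered-independence structure is provided by Assumption \ref{a:moments}: the random vectors $(Y'_{ij1},Z'_{ij1},\dots,Y'_{ijT},Z'_{ijT})'$ are independent across $(i,j)\in[N]\times[M]$. The key observation is that the indicator structure of $D_1$ and $D_2$ makes each column of $\varepsilon'D_1$ (resp.\ $\varepsilon'D_2$) a sum of only $O(M)$ (resp.\ $O(N)$) cluster contributions rather than $NM$, which is precisely what produces the extra $1/\sqrt{N}$ (resp.\ $1/\sqrt{M}$) factor in the stated bound.

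First I would handle $\varepsilon'X$. For each $l\in[k_0]$, write $[\varepsilon'X]_l=\sum_{i,j} W_{ij,l}$ with $W_{ij,l}=\sum_{t=1}^T \varepsilon_{ijt}X_{ijt,l}$, which are independent across $(i,j)$ and mean-zero (using standard exogeneity). I would apply Lemma \ref{lemma:concentration_inequality_CCK} with $n=NM$ and ambient dimension $k_0\le p$. By Cauchy--Schwartz and Assumption \ref{a:moments}(1),(3), one obtains $\sigma^2=\max_l \tfrac{1}{NM}\sum_{i,j} E[W_{ij,l}^2]=O(1)$, and by Jensen together with the $L^{2q}$ bounds the envelope satisfies $B\lesssim (NM)^{1/(2q)}B_{NM}$. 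The bound from the lemma is then $\lesssim\sqrt{\log a/NM}+B_{NM}\log a/(NM)^{1-1/q}=O(\sqrt{\log a/NM})$, where the last equality uses exactly the growth condition $B_{NM}\sqrt{\log a}\lesssim (NM)^{1/2-1/q}$ in Assumption \ref{a:moments}(1). Multiplying by $NM$ gives $\|\varepsilon'X\|_\infty\lesssim\sqrt{NM\log a}$, so for $C$ large enough $2\|\varepsilon'X\|_\infty\le\mu/c$ with probability $1-C(\log(N\wedge M))^{-1}$.

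Next I would treat $\varepsilon'D_1$, where the argument is identical in form but with a smaller effective sample size due to the indicator structure. For a column of $D_1$ corresponding to $\mathbbm{1}_{i=i^\ast}$, $[\varepsilon'D_1]_l=\sum_{j=1}^M(\sum_t \varepsilon_{i^\ast jt})$, a sum of $M$ independent mean-zero terms; for a column corresponding to $\mathbbm{1}_{i=i^\ast}\mathbbm{1}_{t=t^\ast}$, $[\varepsilon'D_1]_l=\sum_{j=1}^M \varepsilon_{i^\ast jt^\ast}$, again $M$ independent terms. Applying Lemma \ref{lemma:concentration_inequality_CCK} with $n=M$ and the union bound over the $N_0\le p$ columns (plus the analogous moment computations) yields $\|\varepsilon'D_1\|_\infty\lesssim\sqrt{M\log a}=(1/\sqrt{N})\sqrt{NM\log a}$, which matches $\mu/(c\sqrt{N})$ up to the choice of $C$. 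The bound for $\varepsilon'D_2$ follows symmetrically with $M\leftrightarrow N$. The second block of claims, for $X_{-l}\zeta_l$, $D_1\zeta_l$ and $D_2\zeta_l$ uniformly over $l\in[p]$, follows by the same three arguments with $\varepsilon$ replaced by $\zeta_l$ (whose $2q$-th moments are controlled uniformly in $l$ by Assumption \ref{a:moments}(3)), paying an extra union bound over $l\in[p]$ that is absorbed into $\log a$.

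The main obstacle I expect is purely bookkeeping: verifying that, in every one of the six applications of Lemma \ref{lemma:concentration_inequality_CCK}, the envelope term $B\log a/n$ is dominated by the variance term $\sqrt{\sigma^2\log a/n}$. This reduces to checking $B_{NM}\sqrt{\log a}\lesssim (NM)^{1/2-1/q}$ — exactly the growth restriction built into Assumption \ref{a:moments}(1) — and to using Jensen's inequality together with Assumption \ref{a:moments}(3) to pass from pointwise $2q$-th-moment control to $L^q$ control of the cluster-sum envelope. Once this is in place, the stated probability $1-C(\log(N\wedge M))^{-1}$ is obtained directly from the tail bound in Lemma \ref{lemma:concentration_inequality_CCK}.
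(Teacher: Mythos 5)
Your proposal is correct and follows essentially the same route as the paper's proof: apply the concentration inequality of Lemma \ref{lemma:concentration_inequality_CCK} block by block, use the indicator structure of $D_1$ and $D_2$ to reduce the effective sample size to $O(M)$ and $O(N)$ respectively (which produces the $1/\sqrt{N}$ and $1/\sqrt{M}$ factors), verify $\sigma^2=O(1)$ and the envelope bound via Jensen, H\"older and Assumption \ref{a:moments}(1),(3), and invoke $B_{NM}\sqrt{\log a}\lesssim (NM)^{1/2-1/q}$ so the envelope term is dominated; the nodewise events are handled identically with $\zeta_l$ in place of $\varepsilon$. The only blemish is the stated envelope rate $B\lesssim (NM)^{1/(2q)}B_{NM}$, which should read $B\lesssim (NM)^{1/q}B_{NM}$ as in the paper, but your displayed bound $B\log a/(NM)=B_{NM}\log a/(NM)^{1-1/q}$ already reflects the correct exponent, so the conclusion is unaffected.
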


\begin{proof}
Applying Lemma \ref{lemma:concentration_inequality_CCK}, we have 
\begin{align*}
\frac{\|X'\varepsilon\|_\infty}{NM}=\max_{l\in[k_0]}\Big|\frac{1}{NM}\sumi\sumj\sumt (X_{ijt,l}\varepsilon_{ijt}-E[X_{ijt,l}\varepsilon_{ijt}])\Big|
\\
\lesssim \sqrt{\frac{\sigma^2 \log (p\vee(NM))}{NM}} + \frac{B\log (p\vee(NM))}{NM}
\end{align*}
with probability $1-C(\log NM)^{-1}$, where $\sigma^2= \max_{l\in[k_0]}\max_{t\in[T]}\frac{1}{NM}E[X^2_{ijt,l}\varepsilon_{ijt}^2]\le O(K^4)$.
Note that we have
\begin{align*}
B^2=&E[\max_{i\in [N],j\in[ M],t\in[T] }\|X_{ijt}\varepsilon_{ijt}\|^2_\infty]\\
\le&(E[\max_{i\in [N],j\in[ M],t\in[T] }\|X_{ijt}\|^q_\infty |\varepsilon_{ijt}|^q])^{2/q} \\
\lesssim& (NM)^{2/q}(E[\frac{1}{NM}\sumi\sumj\sumt\|X_{ijt}\|^q_\infty |\varepsilon_{ijt}|^q])^{2/q}\\
\lesssim& (NM)^{2/q}\Big[\Big(E[\frac{1}{NM}\sumi\sumj\sumt\|X_{ijt}\|^{2q}_\infty] \Big)^{1/2}\Big(E[\frac{1}{NM}\sumi\sumj\sumt |\varepsilon_{ijt}|^{2q}]\Big)^{1/2}\Big]^{2/q}\\
=&O((NM)^{2/q}B^2_{NM})
\end{align*}
where 
the first inequality is due to Jensen's inequality, 
the third inequality is due to H\"older's inequality, and
the last equality is due to Assumption \ref{a:moments} (1) and (3).
Thus $\frac{B_{NM}\log (p\vee (NM))}{(NM)^{1-1/q}}=O(\sqrt{\frac{\log a}{NM}})$,
and this implies
\begin{align*}
2c\|X'\varepsilon\|_\infty=\max_{l\in[k_0]}\Big|\sumi\sumj\sumt (X_{ijt,l}\varepsilon_{ijt}-E[X_{ijt,l}\varepsilon_{ijt}])\Big|\lesssim \sqrt{NM\log a}=C^{-1}\mu
\end{align*}
with probability at least $1-C(\log (N M))^{-1}$ for $K>0$ large enough. 

Since $\|(D_1,D_2)\|_\infty=1$ under Assumption \ref{a:moments} (2), an application of Lemma \ref{lemma:concentration_inequality_CCK} gives
\begin{align*}
2c\|D_1'\varepsilon\|_\infty=\max_{l\in \{k_0+1,...,k_0+N_0\}}\Big|
\sumj\sumt (D_{1,ijt,l}-ED_{1,ijt,l})
\Big|\lesssim \sqrt{ M\log a}=C^{-1}\mu/\sqrt{N}
\end{align*}
with probability at least $1-C(\log (N\wedge M))^{-1}$,
where $i$ depends on the choice of $l$.
Note that there are at most $MT=O(M)$ nonzero terms in the summand for each $l$. 
Analogous arguments hold for $\|D_2'\varepsilon\|_\infty$ with the number of nonzero terms being at most $NT=O(N)$ in place of $MT$. 

Under Assumption \ref{a:moments} and the choice $\mu_{\text{node},l}=C\sqrt{NM\log a}$, similar lines of argument to those above show that the regularized events $\| X_{-l}\zeta_l\|_\infty \le \mu_{\text{node}}/2c$,
 $\|D_l\zeta_l\|_\infty \le \mu_{\text{node}}/2c\sqrt{N}$, and
 $\|D_2\zeta_l\|_\infty \le \mu_{\text{node}}/2c\sqrt{M}$
occur with probability approaching one. 
Applying Lemma \ref{lemma:concentration_inequality_CCK}, we have 
\begin{align*}
&\frac{\|\zeta^{l \prime}Z_{-l}'\|_\infty}{NM}=\max_{k\in[k_0], l\in[p]}\Big|\frac{1}{NM}\sumi\sumj\sumt (Z^l_{ijt,k}\zeta^l_{ijt}-E[Z^l_{ijt,k}\zeta^l_{ijt}])\Big|\\
\lesssim& \sqrt{\frac{\sigma^2 \log (p^2\vee(NM))}{NM}} + \frac{B\log (p^2\vee(NM))}{NM}\lesssim \frac{\log a}{NM}
\end{align*}
with probability $1-C(\log N\wedge  M)^{-1}$.
\end{proof}

\subsection{Rates of Nuisance Parameters}
\label{sec:rates_of_nuisance_parameters}

Throughout this section, we use the following notations.
For any diagonal matrix $A$, $A_{l}$ denotes the $l$-th diagonal entry and $A_{-l}$ denotes $A$ with the $l$-th column and row removed. 

The following lemma establishes behaviors of the nuisance parameters based on the nodewise regressions under three-dimensional panel setting.
It is closely related to Lemma C.9 of \cite{KockTang2018}. 
The main difference is that, in \cite{KockTang2018}, their one-way fixed effect modeling assumption implies their $D_2=\emptyset$ and $D'_1D_1=I$, which in turn implies the diagonal structure of
$$\Theta=
\begin{bmatrix}
\Theta_X & 0\\
0& I
\end{bmatrix},
$$
and greatly simplifies their estimation procedure. 
In our case, however, due to the potential presence of multi-way fixed effects, such decomposition is not available. Therefore, the theory of our nodewise regression needs to account for these fixed effects with different convergence rates simultaneously.
\begin{lemma}[Nodewise Lasso for Nuisance Parameters]\label{lemma:Theta_1_way}
Suppose Assumptions \ref{a:moments}, \ref{a:sparse_eigenvalues}, and \ref{a:Theta} are satisfied and $\widehat \Theta$ is calculated following (\ref{eq:theta_hat}) with $\mu_{\text{node},l}=C\sqrt{NM \log a}$ for a $C>0$.
It holds uniformly over $l\in[k_0]$ that
\begin{align*}
\|\widehat \phi^l -\phi^l\|_1
 =&O_p\Big(\sqrt{\frac{s^2_l\log a}{N\wedge M}}
 \Big),\\
\|\widehat \phi^l -\phi^l\|
 =&O_p\Big(\sqrt{\frac{s_l\log a}{N\wedge M}}
 \Big),
\end{align*}
\begin{align*}
|\widehat \tau_l^2 - \tau_l^2| =&O_p\Big(\sqrt{\frac{s_l\log a}{NM}}\Big),\\
\Big|\frac{1}{\widehat \tau_l^2} - \frac{1}{\tau_l^2}\Big| =&O_p\Big(\sqrt{\frac{s_l\log a}{NM}}\Big),
\end{align*}
\begin{align*}
 \|\widehat\Theta_l'-\Theta_l'\|_1
 =&O_p\Big(\sqrt{\frac{s^2_l\log a}{N\wedge M}}
 \Big),\\
  \|\widehat\Theta_l'-\Theta_l'\|
 =&O_p\Big(\sqrt{\frac{s_l\log a}{N\wedge M}}
 \Big),\\
  \|\widehat \Theta_{l}\|_1=&O_p(s^{1/2}_l),\qquad\text{and}\\
\max_{l\in[k_0]}\Big|\frac{1}{\widehat \tau^2_l}\Big|=&O_p(1).
\end{align*}
\end{lemma}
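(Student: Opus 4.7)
The plan is to treat the nodewise regression (\ref{eq:node_wise_lasso}) as a structural analogue of the main lasso (\ref{eq:lasso}) and transport the oracle-inequality machinery of Lemma \ref{lemma:rates_eta} to the auxiliary decomposition $Z_l = Z_{-l}\phi^l + r_l + \zeta_l$, with $r_l$ playing the role of $R$ and $\zeta_l$ the role of $\varepsilon$. First I would check that, on the regularized event supplied by the second half of Lemma \ref{lemma:regularized_events}, the penalty $\mu_{\text{node},l} = C\sqrt{NM\log a}$ dominates $2c$ times the corresponding block coordinates of $Z'_{-l}\zeta_l$ --- namely the $X$-, $D_1$-, $D_2$-blocks are controlled by $\mu_{\text{node},l}/(2c)$, $\mu_{\text{node},l}/(2c\sqrt N)$, $\mu_{\text{node},l}/(2c\sqrt M)$ respectively, reflecting the $S_{-l}$ weighting. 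This event holds uniformly over $l\in[p]$ with probability $1-o(1)$ under Assumption \ref{a:moments}, which automatically delivers uniformity over $l\in[k_0]$.

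Replaying the basic inequality and the restricted-eigenvalue argument used for Lemma \ref{lemma:rates_eta}, but with $\|\phi^l\|_0\le s_l$ (Assumption \ref{a:Theta}(1)) and $\|r_l\|\le\sqrt{s_l}$ (Assumption \ref{a:Theta}(2)) serving as sparsity and approximation budgets, yields the prediction bound $\|Z_{-l}(\widehat\phi^l-\phi^l)\|^2 = O_p(s_l\log a)$. The crucial conversion from prediction error to parameter norm uses the block structure of the weight matrix: since $Z_{-l}'Z_{-l} = S_{-l}\bar\Psi_{-l,-l}S_{-l}$ and $\|S_{-l}v\|^2 \ge (N\wedge M)\|v\|^2$ for every $v$, the sparse-eigenvalue bound (Assumption \ref{a:sparse_eigenvalues}) gives
\begin{align*}
\underline k\,(N\wedge M)\|\widehat\phi^l-\phi^l\|^2 \;\le\; \|Z_{-l}(\widehat\phi^l-\phi^l)\|^2,
\end{align*}
which produces the $L^2$ rate $\|\widehat\phi^l-\phi^l\| = O_p\!\big(\sqrt{s_l\log a/(N\wedge M)}\big)$. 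The $L^1$ rate inflates by $\sqrt{s_l}$ via the cone inequality $\|\widehat\phi^l-\phi^l\|_1 \lesssim \sqrt{s_l}\|\widehat\phi^l-\phi^l\|$ implied by the restricted eigenvalue condition.

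For $\widehat\tau_l^2$, I would expand $\widehat\tau_l^2-\tau_l^2$ through the identity $Z_l - Z_{-l}\widehat\phi^l = \zeta_l + r_l - Z_{-l}(\widehat\phi^l-\phi^l)$, combined with a concentration bound on $\|\zeta_l\|^2/NM$ (from Lemma \ref{lemma:concentration_inequality_CCK} under Assumption \ref{a:moments}(3)) and the $\widehat\phi^l$ rates just established; the penalty-residual term contributes at most $\mu_{\text{node},l}\|\phi^l\|_1/(NM)^{3/2}$, which is absorbed by the stated rate. The $|1/\widehat\tau_l^2 - 1/\tau_l^2|$ rate follows immediately because Assumption \ref{a:Theta}(3) forces $\tau_l^2$ to be uniformly bounded away from zero, also giving $\max_l|1/\widehat\tau_l^2|=O_p(1)$. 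The $\widehat\Theta_l$ rates then follow by plugging $(\widehat\phi^l,\widehat\tau_l^{-2})$ into the definition (\ref{eq:theta_hat}) and splitting $\widehat\Theta_l - \Theta_l$ into a $\widehat\phi^l-\phi^l$ part and a $\widehat\tau_l^{-2}-\tau_l^{-2}$ part via the triangle inequality; the $L^1$ bound $\|\widehat\Theta_l\|_1 = O_p(\sqrt{s_l})$ combines $\|\phi^l\|_1 \le K\sqrt{s_l}$ (Assumption \ref{a:Theta}(1)) with the sparsity budget of Assumption \ref{a:Theta}(4). The main obstacle is the first step: carefully tracking how the block-weighted penalty in (\ref{eq:node_wise_lasso}) interacts with the basic inequality, so that the cone condition and the subsequent eigenvalue bound are both expressed in the weighted norm $\|S_{-l}\cdot\|$; the $N\wedge M$ rate in the final claim is precisely the price one pays when translating weighted to unweighted parameter norms.
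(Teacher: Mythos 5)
Your overall architecture matches the paper's: transport the oracle-inequality machinery of Lemma \ref{lemma:rates_eta} to the decomposition $Z_l = Z_{-l}\phi^l + r_l + \zeta_l$ using the nodewise regularized events from Lemma \ref{lemma:regularized_events}, then derive the $\widehat\tau_l^2$ and $\widehat\Theta_l$ rates by the same expansions the paper uses. The $L^1$ bounds, the $\widehat\tau_l^2$ expansion, the boundedness of $1/\tau_l^2$ via $\Lambda_{\min}(\Psi)>0$, and the $\widehat\Theta_l$ splitting are all essentially the paper's Steps 1--3.

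There is, however, one genuine gap at the step you yourself flag as crucial: the conversion from the prediction norm to the $L^2$ parameter norm. You write that Assumption \ref{a:sparse_eigenvalues} gives $\underline k\,(N\wedge M)\|\widehat\phi^l-\phi^l\|^2 \le \|Z_{-l}(\widehat\phi^l-\phi^l)\|^2$. But $\varphi_{\min}(\bar\Psi,Cs)$ only lower-bounds quadratic forms over vectors with at most $Cs$ nonzero coordinates, and at this point in the argument nothing bounds $\|\widehat\phi^l\|_0$: the lasso solution's support size is exactly what the Empirical Pre-sparsity lemma (Lemma \ref{lemma:empirical_pre-sparsity}) establishes, and that lemma's proof in the paper itself invokes rates from the present lemma, so you cannot simply cite it here. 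The restricted-eigenvalue/cone condition does not rescue the step either: it controls $\|\delta_J\|_1$ in terms of the prediction norm, and bounding $\|\delta\|\le\|\delta\|_1$ would only recover the $L^1$ rate, which is a factor $\sqrt{s_l}$ too large. The paper's proof avoids this by detouring through the population Gram matrix: it bounds $(\widehat\phi^l-\phi^l)'Q_{-l}\Psi_{-l,-l}Q_{-l}(\widehat\phi^l-\phi^l)$ by the empirical quadratic form plus $\|\bar\Psi-\Psi\|_\infty\,\|Q_{-l}(\widehat\phi^l-\phi^l)\|_1^2$, shows $\|\bar\Psi-\Psi\|_\infty=O_p(\sqrt{\log a/(N\wedge M)})$ by the concentration inequality, uses Assumption \ref{a:Theta}(4) to make the $L^1$-squared remainder negligible, and only then applies $\Lambda_{\min}(\Psi)>0$ from Assumption \ref{a:Theta}(3), which holds over \emph{all} directions and so requires no sparsity of $\widehat\phi^l$. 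Your argument needs either this detour or a self-contained pre-sparsity bound for $\widehat\phi^l$ derived from the prediction-norm bound alone before the sparse-eigenvalue condition can legitimately be applied.
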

\begin{proof}
The proof is consists of three steps.\\
\noindent 
\textbf{Step 1}
First, under Assumption \ref{a:moments} and by the choice $\mu_{\text{node},l}=C\sqrt{NM\log a}$, Lemma \ref{lemma:regularized_events} gives that the regularized events $\| X_{-l}\zeta_l\|_\infty \le \mu_{\text{node}}/2c$,
 $\|D_l\zeta_l\|_\infty \le \mu_{\text{node}}/2c\sqrt{N}$,
 $\|D_2\zeta_l\|_\infty \le \mu_{\text{node}}/2c\sqrt{M}$
occur with probability approaching one uniformly over $[k_0]$. 
Using the arguments similar to those of Lemma \ref{lemma:rates_eta}, under Assumptions \ref{a:sparse_eigenvalues} and \ref{a:Theta} (1) and (2), we have
\begin{align}
\frac{1}{NM}\|Z_{-l}(\widehat\phi^l - \phi^l)\|^2&= \,(\widehat\phi^l - \phi^l)'Q_{-l}\bar \Psi_{-l,-l}Q_{-l}(\widehat \phi^l -\phi^l) =
O_p\Big( \frac{s_l\log a}{NM}\Big),\label{eq:phi_prediction_norm}\\
\|Q_{-l}(\widehat \phi^l - \widehat \phi^l)\|_1
 =&
 O_p\Big( s_l\sqrt{\frac{\log a}{N M} }\Big),\label{eq:phi_l1_norm_weighted}\\
\|\widehat \phi^l - \widehat \phi^l\|_1
 =&O_p\Big( s_l\sqrt{\frac{\log a}{N\wedge M}}\Big).\label{eq:phi_l1_norm}
\end{align}
uniformly for $l \in [k_0]$.

To find a bound for $ \|\widehat\phi^l - \phi^l\|$ that holds uniformly over $[k_0]$, note that
\begin{align}
(\widehat \phi^l -\phi^l )'Q_{-l}\Psi_{-l,-l} Q_{-l}(\widehat \phi^l -\phi^l )\le&  
  (\widehat \phi^l -\phi^l )'Q_{-l}\bar \Psi_{-l,-l} Q_{-l}(\widehat \phi^l -\phi^l ) 
  +
 \|\bar \Psi - \Psi\|_\infty\|Q_{-l}(\widehat \phi^l -\phi^l) \|_1^2.\nonumber\\
  \le&  O_p\Big( \frac{s\log a}{NM}\Big)+  \|\bar \Psi - \Psi\|_\infty\|Q_{-l}(\widehat \phi^l -\phi^l )\|_1^2,\label{eq:psi_l2}
\end{align}
by (\ref{eq:phi_prediction_norm}), where $\|A\|_\infty$ denotes the maximal element of a matrix $A$.
We now bound the second term on the right-hand side.
Note that
\begin{align*}
P\Big( \|\bar \Psi - \Psi\|_\infty\ge r  \Big)
\le& 
P\Big( \max_{t\in [T]}\max_{l\in[k_0]}\Big\|\frac{1}{NM}\sum_{i,j}( X^2_{ijt,l}  - EX^2_{ijt,l}) \Big\|_\infty\ge r/T  \Big)\\
+& 
P\Big( \max_{t\in [T]}\max_{l\in\{k_0+1,...,k_0+N_0\}}\Big\|\frac{1}{M}\sum_{j}( D^2_{1,ijt,l}  - ED^2_{1,ijt,l}) \Big\|_\infty\ge r/T  \Big)\\
+&
 P\Big( \max_{t\in [T]}\max_{l\in\{k_0+N_0+1,...,k_0+N_0+M_0\}}\Big\|\frac{1}{N}\sum_{i}( D^2_{2,ijt,l}  - ED^2_{2,ijt,l}) \Big\|_\infty\ge r/T  \Big)
\end{align*}
We want to show all three terms go to zero with $r=C\sqrt{\frac{\log a}{NM}}$.
Assumption \ref{a:moments} (1) and (3) imply
\begin{align*}
B^2= E[\max_{i\le N,\, j\le M,\, t\le T}\|X_{ijt}\|^4_\infty]
 \le&
  ( E[\max_{i\le N,\, j\le M,\, t\le T}\|X_{ijt}\|^{2q}_\infty])^{2/q}\\
  \le&  (NM)^{2/q}\Big( E\Big[\frac{1}{NM}\sumi\sumj\max_{ t\le T}\|X_{ijt}\|^{2q}_\infty\Big]\Big)^{2/q}\\
  \le&  (NM)^{2/q} B_{NM}^4.
\end{align*}
Thus, with probability at least $1-C(\log (NM))^{-1}$,
\begin{align*}
\max_{t\in [T]}\max_{l\in[k_0]}\Big\|\frac{1}{NM}\sum_{i,j}( X^2_{ijt,l}  - EX^2_{ijt,l}) \Big\|_\infty \lesssim \sqrt{\frac{\log(k_0^2\vee (NM)  )}{NM}} +  \frac{B^2_{NM}\log(k_0^2\vee (NM)  )}{(NM)^{1-1/q}} \lesssim  \sqrt{\frac{\log a}{NM} }.
\end{align*}
by Lemma \ref{lemma:concentration_inequality_CCK}.
Similarly, with probability at least $1-C((\log N\wedge M))^{-1}$,
\begin{align*}
\max_{t\in [T]}\max_{l\in\{k_0+1,...,k_0+N_0\}}\Big\|\frac{1}{M}\sum_{j}( D^2_{1,ijt,l}  - ED^2_{1,ijt,l}) \Big\|_\infty \lesssim \sqrt{\frac{\log M  }{M}} +  \frac{\log M  }{M} \lesssim  \sqrt{\frac{\log a}{M} },\\
\max_{t\in [T]}\max_{l\in\{k_0+N_0+1,...,k_0+N_0+M_0\}}\Big\|\frac{1}{N}\sum_i( D^2_{2,ijt,l}  - ED^2_{2,ijt,l}) \Big\|_\infty \lesssim \sqrt{\frac{\log N  }{N}} +  \frac{\log N  }{N} \lesssim  \sqrt{\frac{\log a}{N} }
\end{align*}
by Assumption \ref{a:moments} (2).
Thus, with probability at least $1-C(\log(N \wedge M))^{-1}$,
\begin{align*}
\|\bar \Psi - \Psi\|_\infty = O_p\Big( \sqrt{\frac{\log a}{N\wedge M} }\Big).
\end{align*}
Since following Assumption \ref{a:Theta}(4), $s_l\sqrt{\frac{\log a}{N\wedge M}}=o(1) $, we therefore have
\begin{align*}
 \|\bar \Psi - \Psi\|_\infty\|Q_{-l}(\widehat \phi^l -\phi^l )\|_1^2
=&
  O_p\Big( \sqrt{\frac{\log a}{N\wedge M} }\Big) O_p\Big(\frac{s^2_l \log a}{NM}\Big)\\
 =& O_p\Big( s_l\sqrt{\frac{\log a}{N\wedge M} }\Big) O_p\Big(\frac{s_l \log a}{NM}\Big)=o_p\Big(\frac{s_l \log a}{NM}\Big)
\end{align*}
uniformly in $l \in [k_0]$.
Substitute into (\ref{eq:psi_l2}), we obtain
\begin{align*}
  (\widehat \phi^l -\phi^l )'Q_{-l}\Psi_{-l,-l} Q_{-l}(\widehat \phi^l -\phi^l )=O_p\Big(\frac{s \log a}{NM}\Big)
\end{align*}
uniformly in $l \in [k_0]$.
Since under Assumption \ref{a:Theta}(3), $\Lambda_{\min}(\Psi)>0$ and 
\begin{align*}
  \Lambda_{\min}(\Psi)\|\widehat Q_{-l}(\phi^l -\phi^l)\|^2 \le \max_{l\in [k_0]} (\widehat \phi^l -\phi^l )'Q_{-l} \Psi_{-l,-l} Q_{-l}(\widehat \phi^l -\phi^l )
\end{align*}
uniformly in $l \in [k_0]$, we conclude that
\begin{align*}
\|Q_{-l}(\widehat \phi^l -\phi^l)\|=O_p\Big(\sqrt{\frac{s_l \log a}{NM}}\Big),
\end{align*}
uniformly in $l \in [k_0]$.
Thus, the triangle inequality and definition of $Q$ together imply it holds uniformly over $[k_0]$
\begin{align*}
 \|\widehat \phi^l -\phi^l\|=O_p\Big(\sqrt{\frac{s_l \log a}{N\wedge M}}\Big).
\end{align*}
\noindent 
\textbf{Step 2}
We next show $\max_{l \in [k_0]}|\widehat \tau_l^2- \tau_l^2|$. 
By the definition of $\widehat \tau_l$ and the K.K.T. condition, we get the following equality using the decomposition $Z_l=Z_{-l}\phi^l +r_l + \zeta_l$.
\begin{align*}
\widehat \tau_l^2=& \frac{(Z_l - Z_{-l}\widehat \phi^l )'Z_{l}}{NM}\\
=& \frac{[r_l + \zeta_l   - Z_{-l}(\widehat \phi^l -\phi^l)]'(Z_{-l}\phi^l +r_l +\zeta_l )}{NM}\\
=& \frac{\zeta_l'\zeta_l}{NM} + \frac{\zeta_l'Z_{-l}\phi^l}{NM} - \frac{(\widehat \phi^l - \phi^l)'Z_{-l}'Z_{-l}\phi^l}{NM}- \frac{(\widehat \phi^l - \phi^l)'Z_{-l}'\zeta_l}{NM}\\
&
+\Big(\frac{r_l'r_l}{NM}
+
\frac{r'_l Z_{-l}\phi^l}{NM}
+
\frac{2r'_l \zeta_l}{NM}-\frac{(\widehat \phi^l -\phi^l)'Z_{-l}r_l}{NM}
\Big)
\end{align*}
Thus,
\begin{align}
&\max_{l\in [k_0]}|\widehat \tau_l^2-\tau_l^2|
\nonumber\\
\leq&
\max_{l\in [k_0]} \Big|\frac{\zeta_l'\zeta_l}{NM}-\tau_l^2\Big| 
+ 
\max_{l\in [k_0]}\Big|\frac{\zeta_l'Z_{-l}\phi^l}{NM}\Big| 
+
\max_{l\in [k_0]}\Big| \frac{(\widehat \phi^l - \phi^l)'Z_{-l}'Z_{-l}\phi^l}{NM}\Big|
+
\max_{l\in [k_0]}\Big| \frac{(\widehat \phi^l - \phi^l)'Z_{-l}'\zeta_l}{NM}\Big|\nonumber\\
&+
\max_{l\in [k_0]}\Big|\frac{r_l'r_l}{NM}
+
\frac{r'_l Z_{-l}\phi^l}{NM}
+
\frac{2r'_l \zeta_l}{NM}-\frac{(\widehat \phi^l -\phi^l)'Z_{-l} r_l}{NM}
\Big|=(i)+(ii)+(iii)+(iv)+(v). \label{eq:tau2}
\end{align}
It suffices to find bounds for each of the five terms in the last expression. 

First we consider $(i)$.
Under Assumption \ref{a:moments} (1), we have 
\begin{align*}
E\Big[\max_{i,j,t}|\zeta^l_{ijt}|^4\Big]
=&\Big(E\Big[\max_{i,j,t}|\zeta^l_{ijt}|^{2q}\Big]\Big)^{2/q}\\
=&(NM)^{2/q}\Big(E\Big[\frac{1}{NM}\sum_{i,j}\max_{t}|\zeta_{ijt}^l|^{2q}\Big]\Big)^{2/q}\\
  \lesssim&  (NM)^{2/q}\Big( E\Big[\frac{1}{NM}\sumi\sumj\max_{ t\le T}\|X_{ijt}\|^{2q}_\infty\Big]\Big)^{2/q}
  \le  (NM)^{2/q} B_{NM}^4
\end{align*}
for all $l \in[k_0]$.
Therefore, by Lemma  \ref{lemma:concentration_inequality_CCK} and Assumption \ref{a:moments} (1),
\begin{align*}
\max_{l\in[k_0]}\Big\|\frac{1}{NM}\sum_{i,j,t}[ (\zeta^l)^2_{ijt}  - E(\zeta^l)^2_{ijt}] \Big\|_\infty 
\le& 
T \max_{t\in[T]}\max_{l\in[k_0]}\Big\|\frac{1}{NM}\sum_{i,j}[ (\zeta^l)^2_{ijt}  - E(\zeta^l)^2_{ijt}] \Big\|_\infty\\
 \lesssim&
  \sqrt{\frac{\log(k_0\vee (NM)  )}{NM}} +  \frac{B^2_{NM}\log(k_0\vee (NM)  )}{(NM)^{1-1/q}} \lesssim  \sqrt{\frac{\log a}{NM} }
\end{align*}
with probability at least $1-C(\log (NM))^{-1}$.
It follows that
\begin{align*}
\max_{l\in [k_0]} \Big|\frac{\zeta_l'\zeta_l}{NM}-\tau_l^2\Big| =O_p\Big( \sqrt{\frac{\log a}{NM} }\Big).
\end{align*}

Second, we consider $(iv)$ in (\ref{eq:tau2}).
By the regularized events established in Step 1, we have
\begin{align*}
\Big\| \frac{Q^{-1}_{-l}Z_{-l}'\zeta_l}{NM}\Big\|_\infty =\max\Big\{\Big\|\frac{X_{-l}'\zeta_l}{NM}\Big\|_\infty,
\sqrt{N}\Big\|\frac{D_{1}'\zeta_l}{NM}\Big\|_\infty,
\sqrt{M}\Big\|\frac{D_{2}'\zeta_l}{NM}\Big\|_\infty\Big\}=O_p\Big(\sqrt{ \frac{\log a}{NM} }\Big).
\end{align*}
Thus, by (\ref{eq:phi_l1_norm_weighted}),
\begin{align*}
\max_{l\in [k_0]}\Big| \frac{(\widehat \phi^l - \phi^l)'Z_{-l}'\zeta_l}{NM}\Big| \le& 
\Big\| \frac{Q^{-1}_{-l}Z_{-l}'\zeta_l}{NM}\Big\|_\infty \|Q_{-l}(\widehat \phi^l - \phi^l)\|_1=O_p\Big( \frac{s_l\log p}{NM} \Big)=O_p\Big( \sqrt{\frac{s_l\log p}{NM} }\Big)
\end{align*}
follows.

We next consider $(ii)$ in (\ref{eq:tau2}). 
Note that 
$\|\phi^l\|_1=O(\sqrt{s_l})$ in Assumption \ref{a:Theta}(1) implies $\|Q_{-l}\phi^l\|_1=O(\sqrt{s_l})$.
Combining this and the regularized events as before, we have
\begin{align*}
\max_{l\in [k_0]}\Big|\frac{\zeta_l'Z_{-l}\phi^l}{NM}\Big| \le \max_{l\in [k_0]}\Big\|\frac{\zeta_l'Z_{-l}Q^{-1}_{-l}}{NM}\Big\|_\infty \|Q_{-l}\phi^l\|_1=O_p\Big( \sqrt{\frac{s_l\log a}{NM}}\Big).
\end{align*}

Now, we consider $(iii)$ in (\ref{eq:tau2}). 
Using Assumptions \ref{a:sparse_eigenvalues} and \ref{a:Theta} (1) and (3), the fact that $Q^{-1}=\sqrt{NM}S^{-1}$, and the definition of $\bar \Psi$, we obtain
\begin{align*}
\|Z_{-l}\phi^l\|\le& \|Q_{-l}\phi^l\|\max_{\substack{
\|\xi\|=1\\
\|\xi\|_0\le s_l 
}
}
\sqrt{\xi'Q_{-l}^{-1} Z'_{-l}Z_{-l} Q_{-l}^{-1}\xi}\\
\le& 
\sqrt{NM}\|Q_{-l}\phi^l\|\max_{\substack{
\|\xi\|=1\\
\|\xi\|_0\le s_l 
}
}
\sqrt{\xi'\bar \Psi\xi}\\
\le&\sqrt{NM}\cdot O(1)\cdot\sqrt{\varphi_{\max}(\bar \Psi,s_l)}=O_p(\sqrt{NM})
\end{align*}
Furthermore, (\ref{eq:phi_prediction_norm}) implies,
\begin{align*}
\frac{1}{NM}\|Z_{-l}(\widehat \phi^l - \phi^l)\|\le
O_p\Big(
\frac{(s_l\log a)^{1/2}}{NM}
\Big).
\end{align*}
Combining these two intermediate results, we have
\begin{align*}
\max_{l\in [k_0]}\Big| \frac{(\widehat \phi^l - \phi^l)'Z_{-l}'Z_{-l}\phi^l}{NM}\Big|
\le 
\max_{l\in [k_0]} \frac{\|Z_{-l}(\widehat \phi^l - \phi^l)\|\|Z_{-l}\phi^l\|}{NM}=O_p\Big(\sqrt{\frac{s_l \log a}{NM}}\Big),
\end{align*}

Finally, we consider the remaining terms in (\ref{eq:tau2}) that involve $r_l$. 
Note that 
\begin{align*}
\frac{| r'_l r_l|}{NM}\lesssim&\frac{s_l}{NM},\\
\frac{| r'_l \zeta_l|}{NM}\le& \frac{1}{NM}\|r_l\|\|\zeta_l\|\le \frac{1}{\sqrt{NM}} \sqrt{s_l} O_p\Big(\sqrt{\max_{t\in[T]}\max_{l\in[k_0]}\frac{T}{NM}\sumi\sumj E(\zeta^{l}_{ijt})^2}\Big)\lesssim O_p\Big(\sqrt{\frac{s_l}{NM}}\Big)
\end{align*}
follows from Assumption \ref{a:Theta} (2) and Assumption \ref{a:moments} (3) .
Also, under Assumptions \ref{a:sparse_eigenvalues} and \ref{a:Theta} (1) and (2) 
\begin{align*}
\frac{|r_l'Z_l\phi^l|}{NM}\le& \frac{\|r_l\|\|Z_{-l}\phi^l\|}{NM}\\
\le
&\frac{1}{NM}\sqrt{s_l}\|Q_{-l}\phi^l\|\max_{\substack
{\|\delta\|=1\\
\|\delta\|_0\le s_l}
}\sqrt{\delta'Q^{-1}_{-l}Z'_{-l}Z_{-l}Q^{-1}_{-l}\delta}\\
\le
&\frac{1}{\sqrt{NM}}\sqrt{s_l}O(1)\max_{\substack
{\|\delta\|=1\\
\|\delta\|_0\le s_l}
}\sqrt{\delta'\bar\Psi\delta}\\
\le& \sqrt{\frac{s_l}{NM}}\sqrt{\varphi_{\max}(\bar \Psi,s_l)}=O\Big( \sqrt{\frac{s_l}{NM}}\Big)
\end{align*}
with probability at least $1-o(1)$.
A similar argument under Assumptions \ref{a:sparse_eigenvalues} and \ref{a:Theta} (1) and (2) shows that $\frac{|r_l'Z_l(\widehat \phi^l-\phi^l)|}{NM}= O\Big( \sqrt{\frac{s_l}{NM}}\Big)$.

Combining all the results above, we obtain
\begin{align*}
|\widehat \tau_l^2 - \tau_l^2| =O_p\Big(
\sqrt{
\frac{s_l\log a}{NM}
}\Big).
\end{align*} 
uniformly over $[k_0]$
\bigskip\\
\textbf{Step 3}
Since $l\in[k_0]$,
\begin{align}
\frac{1}{ \tau_l^2}= \Theta_{l,l} = Q^{-1}_l \Theta_{l,l} Q^{-1}_l \le \Lambda_{\max}( Q^{-1}\Theta  Q^{-1})= \Lambda_{\max}(\Psi^{-1}) =1/\Lambda_{\min}(\Psi)=O(1),\label{eq:one_over_tau2_eigenvalue}
\end{align}
hods for each $(N,M)$ under Assumption \ref{a:Theta} (3),
where the first inequality follows from the discussion following (B.30) in the Proof of Theorem 1 in \citet{CanerKock2018}.
Therefore, $\widehat\tau^2_l  $ is bounded away from zero in probability, 
and we have
\begin{align}
\max_{l\in[k_0]}\Big|\frac{1}{\widehat \tau_l^2 }- \frac{1}{\tau_l^2}\Big| =O_p\Big(
\sqrt{
\frac{s_l\log a}{NM}
}\Big)\label{eq:one_over_tau2}
\end{align}
by Step 2.

Now, we bound $\max_{l\in [k_0]}\|\widehat \Theta_l - \Theta_l\|_1$. 
Since $\|\phi^l\|_1=O(\sqrt{s_l})$ under Assumption \ref{a:Theta} (1), we have
\begin{align*}
\max_{l\in [k_0]}\|\widehat \Theta_{l} - \Theta_{l}\|_1
\le&
\max_{l\in [k_0]}\Big\|\frac{\widehat C_l}{\widehat \tau_l^2 }- \frac{C_l}{\tau_l^2}\Big\|_1
\\
\le& \max_{l\in [k_0]}\Big|\frac{1}{\widehat \tau_l^2 }- \frac{1}{\tau_l^2}\Big| 
+
\max_{l\in [k_0]}\Big\|\frac{\widehat \phi^l}{\widehat \tau_l^2 }-\frac{\phi^l}{\widehat \tau_l^2 }+ \frac{ \phi^l}{\widehat \tau_l^2 }- \frac{\phi^l}{\tau_l^2}\Big\|_1\\
\le & \max_{l\in [k_0]}\Big|\frac{1}{\widehat \tau_l^2 }- \frac{1}{\tau_l^2}\Big| 
+\max_{l\in[k_0]}\frac{\|\widehat \phi^l - \phi^l\|_1}{\widehat \tau^2_l}
+\max_{l\in[k_0]}\|\phi^l\|_1 \max_{l\in[k_0]}
\Big(\Big|\frac{1}{\widehat \tau_l^2 }- \frac{1}{\tau_l^2}\Big|\Big).
\end{align*}
The first and the third terms can be bounded by (\ref{eq:one_over_tau2}) and the second term can be bounded by (\ref{eq:phi_l1_norm}).
Therefore,
\begin{align*}
\max_{l\in [k_0]}\|\widehat \Theta_{l} - \Theta_{l}\|_1=&O_p\Big(\frac{s_l\log a}{NM}\Big)+O_p\Big(s_l\sqrt{\frac{\log a}{N\wedge M}}\Big)+O_p\Big(s_l \sqrt{\frac{\log a}{NM}}\Big)\\
=&O_p\Big(s_l
\sqrt{\frac{\log a}{N\wedge M}
}\Big).
\end{align*}
Similar lines of argument under Assumption \ref{a:Theta} (1) and $\|\widehat \phi^l - \phi^l\|$ from Step 1 lead to 
\begin{align*}
\|\widehat \Theta_{l} - \Theta_{l}\|=&O_p\Big(
\sqrt{\frac{s_l\log a}{N\wedge M}
}\Big).
\end{align*}
Since $\|\Theta_{l}\|_1 \le \max_{l\in[k_0]}\frac{1}{ \tau^2_l} + \max_{l\in[k_0]}\| \frac{\phi^l}{\tau^2_l}\|_1=O(\sqrt{s_l})$ by (\ref{eq:one_over_tau2_eigenvalue}) and Assumption \ref{a:Theta} (1), it follows that $\|\widehat\Theta_{l}\|_1=O_p(\sqrt{s_l})$ for all $l\in[k_0]$.
\end{proof}

\subsection{Sufficiency for Assumption \ref{a:asymptotic_normality} (i)}
\label{sec:sufficiency_for_i}

\begin{lemma}\label{lemma:approximation_error_Delta}
If Assumptions \ref{a:sparsity}, \ref{a:moments}, \ref{a:sparse_eigenvalues}, and \ref{a:Theta} are satisfied, then
$$
\max_{l\in[k_0]} \abs{ \sqrt{NM}(\widehat\Theta_l' Q\bar\Psi Q  -e_l')(\widehat {\boldsymbol{\eta}}  - \overline{\boldsymbol{\eta}} ) } = o_p(1).
$$
\end{lemma}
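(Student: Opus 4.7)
The plan is to exploit the construction of $\widehat\Theta_l$ via the nodewise lasso together with its K.K.T. conditions, then combine those with the oracle inequalities of Lemma \ref{lemma:rates_eta}, carefully accounting for the three different block scalings embedded in $S$. First, since $S\bar\Psi S = Z'Z$, we have $Q\bar\Psi Q = Z'Z/(NM)$, and the construction \eqref{eq:theta_hat} gives $\widehat\Theta_l' Z = \widehat\tau_l^{-2}(Z_l - Z_{-l}\widehat\phi^l)'$. The $l$-th coordinate of $\widehat\Theta_l' Q\bar\Psi Q - e_l'$ vanishes by the definition of $\widehat\tau_l^2$. For each $k\neq l$, the K.K.T. condition of the nodewise program \eqref{eq:node_wise_lasso} (with $\widehat\Upsilon_{\text{node}}^l=I$) yields
\[
\left|(Z_l - Z_{-l}\widehat\phi^l)' Z_k\right| \le \frac{\mu_{\text{node}}^l \cdot (S)_{k,k}}{2\sqrt{NM}},
\]
where $(S)_{k,k}$ equals $\sqrt{NM}$, $\sqrt{M}$, or $\sqrt{N}$ depending on whether column $k$ of $Z$ belongs to $X$, $D_1$, or $D_2$.

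Next, I partition the inner product $(\widehat\Theta_l' Q\bar\Psi Q-e_l')(\widehat{\boldsymbol{\eta}} - \overline{\boldsymbol{\eta}})$ by block and apply the triangle inequality. The column-wise K.K.T. bounds combine with the matching pieces of $\widehat{\boldsymbol{\eta}}-\overline{\boldsymbol{\eta}}$ to give
\[
\left|\sqrt{NM}(\widehat\Theta_l' Q\bar\Psi Q - e_l')(\widehat{\boldsymbol{\eta}} - \overline{\boldsymbol{\eta}})\right|
\le \frac{\widehat\tau_l^{-2}\,\mu_{\text{node}}^l}{2\sqrt{NM}}\!\left[\|\widehat{\boldsymbol{\beta}} - \boldsymbol{\beta}\|_1 + \frac{\|\widehat{\boldsymbol{\alpha}} - \overline{\boldsymbol{\alpha}}\|_1}{\sqrt{N}} + \frac{\|\widehat{\boldsymbol{\gamma}} - \overline{\boldsymbol{\gamma}}\|_1}{\sqrt{M}}\right].
\]
This is precisely the point at which the different block scalings of $S$ conspire with the factors $1/\sqrt{N}$ and $1/\sqrt{M}$ already appearing in the penalty $P$.

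I would then invoke Lemma \ref{lemma:rates_eta} with $\mu=C\sqrt{NM\log a}$ (whose regularization event holds by Lemma \ref{lemma:regularized_events} under Assumption \ref{a:moments}), $c_s\lesssim \sqrt{s}$ from Assumption \ref{a:sparsity}(2), and $\kappa_{c_0},\kappa_{2c_0}$ bounded away from zero by Assumption \ref{a:sparse_eigenvalues}. Each of the three bracketed terms simplifies, after the prescribed rescaling by $\sqrt{N}$ or $\sqrt{M}$, to the common rate $O_p(\sqrt{s^2\log a/(NM)})$ (the $\mu s/(NM)$ piece dominates the other two contributions when $c_s\lesssim\sqrt{s}$). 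Lemma \ref{lemma:Theta_1_way} gives $\max_{l\in[k_0]}\widehat\tau_l^{-2}=O_p(1)$. Substituting $\mu_{\text{node}}^l\asymp\sqrt{NM\log a}$ yields, uniformly over $l\in[k_0]$,
\[
\left|\sqrt{NM}(\widehat\Theta_l' Q\bar\Psi Q - e_l')(\widehat{\boldsymbol{\eta}} - \overline{\boldsymbol{\eta}})\right| = O_p\!\left(\sqrt{\frac{s^2(\log a)^2}{NM}}\right).
\]
Assumption \ref{a:Theta}(4) gives $s\sqrt{(\log a)^2/(N\wedge M)}=o(1)$, and since $NM\ge N\wedge M$, the right-hand side is $o_p(1)$.

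The main obstacle is the careful bookkeeping of the three block scalings; the naive Hölder inequality $\|\widehat\Theta_l'Q\bar\Psi Q - e_l'\|_\infty \cdot \|\widehat{\boldsymbol{\eta}} - \overline{\boldsymbol{\eta}}\|_1$ would overestimate the contribution from the $D_1$ and $D_2$ blocks, since the effective sample sizes there are only $NT$ and $MT$ rather than $NMT$. Structuring the Hölder split to pair each of the three K.K.T. bounds $\mu_{\text{node}}^l/2$, $\mu_{\text{node}}^l/(2\sqrt N)$, $\mu_{\text{node}}^l/(2\sqrt M)$ with the correspondingly rescaled $L^1$ norm from Lemma \ref{lemma:rates_eta} is what produces the uniform rate $\sqrt{s^2\log a/(NM)}$ in place of the much weaker $\sqrt{s^2\log a/(N\wedge M)}$, and it is this improvement that keeps the overall rate condition at the level $ss_l(\log a)^2/(N\wedge M)=o(1)$ highlighted in the Introduction.
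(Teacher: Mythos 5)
Your proof is correct and follows the same skeleton as the paper's: the $l$-th coordinate identity from the definition of $\widehat\tau_l^2$, the K.K.T. bounds on the off-coordinates of $\widehat\Theta_l'Q\bar\Psi Q$, a H\"older step against $\widehat{\boldsymbol{\eta}}-\overline{\boldsymbol{\eta}}$, and the oracle inequalities of Lemma \ref{lemma:rates_eta} with $\mu\asymp\sqrt{NM\log a}$. The one genuine difference is the H\"older step. The paper uses the crude global bound $\sqrt{NM}\,\|\widehat\Theta_l'Q\bar\Psi Q-e_l'\|_\infty\,\|\widehat{\boldsymbol{\eta}}-\overline{\boldsymbol{\eta}}\|_1$, where $\|\widehat\Theta_l'Q\bar\Psi Q-e_l'\|_\infty=O_p(\sqrt{\log a/(NM)})$ and $\|\widehat{\boldsymbol{\eta}}-\overline{\boldsymbol{\eta}}\|_1=O_p(s\sqrt{\log a/(N\wedge M)})$, arriving at $O_p(\sqrt{s^2(\log a)^2/(N\wedge M)})$. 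You instead split by block and pair the K.K.T. bound $\mu_{\text{node}}^l (S)_{kk}/(2\sqrt{NM})$ with the correspondingly rescaled $L^1$ rates ($\|\widehat{\boldsymbol{\alpha}}-\overline{\boldsymbol{\alpha}}\|_1/\sqrt{N}$ and $\|\widehat{\boldsymbol{\gamma}}-\overline{\boldsymbol{\gamma}}\|_1/\sqrt{M}$ are each $O_p(s\sqrt{\log a/(NM)})$ by Lemma \ref{lemma:rates_eta}), which yields the sharper rate $O_p(\sqrt{s^2(\log a)^2/(NM)})$. Both bounds are $o_p(1)$ under Assumption \ref{a:Theta} (4), so your refinement is valid but not needed here; the paper's weaker $N\wedge M$ rate already closes the argument. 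Your closing claim that this block-wise pairing is what preserves the $ss_l(\log a)^2/(N\wedge M)=o(1)$ condition is a mis-attribution: that improvement over $ss_l^2$ is obtained in the variance-estimation argument (bounding $|\widehat\Theta_l'(\widehat\Omega-\tilde\Omega)\widehat\Theta_l|$ via $\|\widehat\Theta_l\|^2$ and a sparse quadratic-form bound rather than $\|\widehat\Theta_l\|_1^2\|\cdot\|_\infty$), not in this lemma.
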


\begin{proof}
Recall $\bar\Psi=S^{-1}Z'ZS^{-1}$ and $Q=S/\sqrt{NM}$. 
Also, if we let $\Gamma=ZS^{-1}$, then $\bar \Psi=\Gamma'\Gamma$ and
\begin{align*}
\frac{Z'Z}{NM}=Q\Gamma'\Gamma Q=Q\bar \Psi Q.
\end{align*}
Since $l\in[k_0]$, $Q_{ll}=1$. 
Let $\bar \Psi_l $ denote the $l-$th column of $\bar \Psi$. 
Using the K.K.T. condition for the nodewise lasso, we have
\begin{align}
1=&\frac{(Z_l - Z_{-l}\widehat \phi^l)'Z_l}{\widehat \tau_l^2 NM}=\frac{\widehat \Theta'_{l}Z'Z_l}{NM}=\widehat \Theta_l' Q \Gamma'\Gamma_l\cdot 1=\widehat \Theta_l' Q \bar \Psi_{l}  Q_{ll}.
\label{eq:bound_for_Delta_1}
\end{align}
Also using the K.K.T. condition, we have
\begin{align*}
\frac{Q_{-l}\widehat \kappa_l}{NM} =& \frac{Z_{-l}'(Z_l - Z_{-l}\widehat \phi^l)}{\mu_{\text{node},l}NM}.
\end{align*}
Using the property of the sub-gradient $\kappa_l$, we have
\begin{align*}
\Big\| \frac{Z_{-l}'(Z_l - Z_{-l}\widehat \phi^l)}{\mu_{\text{node},l}NM}
\Big\|_\infty\le \frac{\|\widehat\kappa_l\|_\infty}{NM} \le \frac{1}{NM},
\end{align*}
which is the same as
\begin{align*}
\Big\| \frac{Z_{-l}'Z\widehat C_l}{NM}
\Big\|_\infty\le  \frac{\mu_{\text{node},l}}{NM}
\end{align*}
since $Z_{l}-Z_{-l}\widehat \phi^l=Z \widehat C_l$.  
Divide both sides by $\widehat \tau^2_l$ by using $\widehat \Theta_l =\widehat C_l/\widehat \tau_l^2$ to obtain
\begin{align*}
\Big\| \frac{Z_{-l}'Z\widehat \Theta_{l}}{NM}\Big\|_\infty\le \frac{\mu_{\text{node},l}}{\widehat \tau_l^2NM}
\end{align*}
With some rewriting
\begin{align}
\frac{\mu_{\text{node}}}{\widehat \tau_l^2NM}
\geq
\Big\| \frac{Z_{-l}'Z\widehat \Theta_{l}}{NM}\Big\|_\infty
=
\Big\| \frac{S_{-l}\Gamma_{-l}'\Gamma S\widehat \Theta_{l}}{NM}\Big\|_\infty
=\Big\|\frac{
 S_{-l}\bar \Psi_{-l} S \widehat\Theta_l}{NM}
\Big\|_\infty
=
\Big\| Q_{-l}\bar \Psi_{-l} Q\widehat\Theta_l\Big\|_\infty,
\label{eq:bound_for_Delta_2}
\end{align}
where $S_{-l}$ is $S$ with both the $l$-th column and the $l$-th row removed. 
$Q_{-l}$ is defined similarly.
Applying Lemma \ref{lemma:Theta_1_way} under Assumptions \ref{a:moments}, \ref{a:sparse_eigenvalues}, and \ref{a:Theta}, we have $1/\widehat \tau_l^2=O_p(1)$.
Therefore, by (\ref{eq:bound_for_Delta_1}) and (\ref{eq:bound_for_Delta_2}),
\begin{align*}
\max_{l\in[k_0]}\Big\|\widehat \Theta_{l}'Q\bar\Psi Q - e_l'\Big\|_\infty
=
\max_{l\in[k_0]}\Big\| \frac{Z_l'X\widehat \Theta_{l}}{NM}\Big\|_\infty
\lesssim \max_{l\in[k_0]}\frac{\mu_{\text{node}}}{\widehat \tau_l^2NM}=O_p\Big(\sqrt{\frac{\log a}{NM}}\Big).
\end{align*}
Finally, Lemma \ref{lemma:rates_eta} and Lemma \ref{lemma:regularized_events} with $\mu=C\sqrt{(NM)\log a}$ under Assumptions \ref{a:sparsity}, \ref{a:moments}, and \ref{a:sparse_eigenvalues} together imply
\begin{align*}
&\max_{l\in[p]}|\sqrt{NM}( \widehat\Theta_l
Q\bar\Psi Q- e'_l )(\widehat {\boldsymbol{\eta}}  -\overline{\boldsymbol{\eta}} )|\le 
\sqrt{NM}\max_{l\in[k_0]}\Big\| \widehat \Theta_{l}'Q\bar\Psi Q - e_l'\Big\|_\infty \|\widehat {\boldsymbol{\eta}} - \overline{\boldsymbol{\eta}}\|_1\\
=&\sqrt{NM}O_p\Big(\sqrt{\frac{ \log a}{NM}}\Big)O_p\Big(s\sqrt{\frac{ \log a}{N\wedge M}}\Big)=O_p\Big(\sqrt{\frac{s^2 (\log a)^2}{N\wedge M}}\Big)=o_p(1)
\end{align*}
as claimed.\footnote{Note that Lemma \ref{lemma:rates_eta}, as it is stated, requires Assumptions \ref{a:sparsity}, \ref{a:rates}, \ref{a:penalty_loading}, and \ref{a:restricted_eigenvalue}. While Assumption \ref{a:sparsity} is directly invoked by the statement of Lemma \ref{lemma:approximation_error_Delta}, Assumption \ref{a:rates} is implied by Assumption \ref{a:moments} through Lemma \ref{lemma:regularized_events}, Assumption \ref{a:penalty_loading} is trivially satisfied under the current setting with $\widehat\Upsilon=I$, and Assumption \ref{a:restricted_eigenvalue} is implied by Assumption \ref{a:sparse_eigenvalues}.}
\end{proof}

\subsection{Sufficiency for Assumption \ref{a:asymptotic_normality} (ii)}
\label{sec:sufficiency_for_ii}

\begin{lemma}\label{lemma:suff_2}
Suppose that Assumptions \ref{a:sparsity}, \ref{a:moments}, \ref{a:sparse_eigenvalues}, and \ref{a:Theta} are satisfied.
Then,
$$
\max_{l\in[k_0]} \abs{ \widehat \Theta_l' Z'R/\sqrt{NM} }=o_p(1).
$$
\end{lemma}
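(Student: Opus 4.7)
The plan is to handle this by a direct Cauchy--Schwarz bound, splitting the work between controlling the Euclidean norm of each column of the approximate inverse Gram matrix $\widehat\Theta_l$ and directly invoking the approximate sparsity assumption on the residual vector $R$. Specifically, for every $l\in[k_0]$,
\begin{align*}
\big|\widehat\Theta_l' Z'R\big|/\sqrt{NM} \;\le\; \|\widehat\Theta_l\|\cdot \|Z'R\|/\sqrt{NM},
\end{align*}
so the result will follow once we show that $\max_{l\in[k_0]}\|\widehat\Theta_l\|=O_p(1)$ and that $\|Z'R\|/\sqrt{NM}=o_p(1)$.

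First, I would bound $\max_{l\in[k_0]}\|\widehat\Theta_l\|$. By the triangle inequality,
\begin{align*}
\max_{l\in[k_0]}\|\widehat\Theta_l\| \;\le\; \max_{l\in[k_0]}\|\Theta_l\| \;+\; \max_{l\in[k_0]}\|\widehat\Theta_l-\Theta_l\|.
\end{align*}
For the population piece, each $\Theta_l$ is a column of $\Theta=(E[Z'Z/NM])^{-1}$, so $\|\Theta_l\|\le \Lambda_{\max}(\Theta)=1/\Lambda_{\min}(\Psi)$, which is bounded uniformly in $(N,M)$ under Assumption~\ref{a:Theta}(3). For the estimation piece, Lemma~\ref{lemma:Theta_1_way} (whose hypotheses Assumptions \ref{a:moments}, \ref{a:sparse_eigenvalues} and \ref{a:Theta} are all in force here) gives $\max_{l\in[k_0]}\|\widehat\Theta_l-\Theta_l\|=O_p(\sqrt{s_l\log a/(N\wedge M)})$, which is $o_p(1)$ by Assumption~\ref{a:Theta}(4). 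Hence $\max_{l\in[k_0]}\|\widehat\Theta_l\|=O_p(1)$.

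Second, Assumption~\ref{a:sparsity}(3) directly delivers $\|Z'R\|=o_p(\sqrt{NM})$, so $\|Z'R\|/\sqrt{NM}=o_p(1)$. Combining the two bounds via Cauchy--Schwarz yields
\begin{align*}
\max_{l\in[k_0]}\big|\widehat\Theta_l' Z'R/\sqrt{NM}\big| \;\le\; O_p(1)\cdot o_p(1) \;=\; o_p(1),
\end{align*}
which is the desired conclusion. There is no serious obstacle here: the structure of the argument places all the heavy lifting into earlier results — the uniform $\ell^2$ rate for the nodewise lasso estimator (Lemma~\ref{lemma:Theta_1_way}) and the approximate-sparsity control on the approximation error (Assumption~\ref{a:sparsity}(3)). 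The only mild point to verify while writing is that both bounds hold uniformly in $l\in[k_0]$, but Lemma~\ref{lemma:Theta_1_way} is explicitly stated uniformly over $[k_0]$ and the bound $\|\Theta_l\|\le 1/\Lambda_{\min}(\Psi)$ is uniform by spectral considerations, so uniformity comes for free.
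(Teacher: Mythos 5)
Your overall architecture is exactly the paper's: bound $\abs{\widehat\Theta_l'Z'R}/\sqrt{NM}$ by $\|\widehat\Theta_l\|\cdot\|Z'R\|/\sqrt{NM}$, show $\max_{l\in[k_0]}\|\widehat\Theta_l\|=O_p(1)$ via Lemma \ref{lemma:Theta_1_way}, and finish with Assumption \ref{a:sparsity}(3). However, one step in your justification of $\max_{l\in[k_0]}\|\Theta_l\|=O(1)$ is wrong as written. You claim $\|\Theta_l\|\le\Lambda_{\max}(\Theta)=1/\Lambda_{\min}(\Psi)$. But $\Theta=(E[Z'Z/NM])^{-1}$ is the inverse of the \emph{unadjusted} expected Gram matrix, whereas Assumption \ref{a:Theta}(3) controls the eigenvalues of the \emph{rate-adjusted} matrix $\Psi=E\bar\Psi$. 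These are related by $E[Z'Z/NM]=Q\Psi Q$ with $Q=S/\sqrt{NM}=\mathrm{diag}(I_{k_0},\,N^{-1/2}I_{N_0},\,M^{-1/2}I_{M_0})$, so $\Theta=Q^{-1}\Psi^{-1}Q^{-1}$ and $\Lambda_{\max}(\Theta)$ can be of order $(N\vee M)/\Lambda_{\min}(\Psi)$, which diverges. The identity $\Lambda_{\max}(\Theta)=1/\Lambda_{\min}(\Psi)$ would require $Q=I$, and the distinction between the two Gram matrices is precisely the point of the rate adjustment in this three-dimensional setting, so this is not a cosmetic slip.

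The conclusion you need is nonetheless true, and the paper obtains it differently: for $l\in[k_0]$ one uses the nodewise representation of the $l$-th column, $\Theta_l=(1,-\phi^{l\prime})'/\tau_l^2$ up to permutation, so that $\|\Theta_l\|\le(1+\|\phi^l\|)/\tau_l^2$. Assumption \ref{a:Theta}(1) gives $\|\phi^l\|\le K$, and $1/\tau_l^2=\Theta_{ll}$ coincides with the $(l,l)$ entry of $\Psi^{-1}$ because $Q_{ll}=1$ for $l\le k_0$, whence $1/\tau_l^2\le\Lambda_{\max}(\Psi^{-1})=1/\Lambda_{\min}(\Psi)=O(1)$ by Assumption \ref{a:Theta}(3) (this is equation (\ref{eq:one_over_tau2_eigenvalue}) in Lemma \ref{lemma:Theta_1_way}). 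Note that the spectral bound on $\Psi^{-1}$ is legitimately applied only to the diagonal entry indexed by $l\in[k_0]$, where the rate adjustment is trivial, not to the full column of $\Theta$. With this repair the remainder of your argument — the triangle inequality with the $O_p\bigl(\sqrt{s_l\log a/(N\wedge M)}\bigr)=o_p(1)$ rate from Lemma \ref{lemma:Theta_1_way} and Assumption \ref{a:Theta}(4), followed by Assumption \ref{a:sparsity}(3) — goes through and matches the paper's proof.
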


\begin{proof}
Note that 
$$
\max_{l\in[k_0]} \| \widehat \Theta_l \| = O_p(1)
$$
by Assumption \ref{a:Theta} (1) and (4) and Lemma \ref{lemma:Theta_1_way} under Assumptions \ref{a:moments}, \ref{a:sparse_eigenvalues}, and \ref{a:Theta}.
Therefore, 
\begin{align*}
\max_{l\in[k_0]} \abs{ \widehat \Theta_l' Z'R/\sqrt{NM} }
\leq
\frac{1}{NM} \max_{l\in[k_0]} \| \widehat \Theta_l \| \| Z'R \| = o_p(1)
\end{align*}
follows under Assumption \ref{a:sparsity} (3).
\end{proof}

\subsection{Sufficiency for Assumption \ref{a:asymptotic_normality} (iii)}
\label{sec:sufficiency_for_iii}

\begin{lemma}\label{lemma:suff_3}
Suppose that Assumptions \ref{a:moments}, \ref{a:sparse_eigenvalues}, \ref{a:Theta} and \ref{a:variance} are satisfied.
Then,
\begin{align*}
V_{ll}^{-1/2}\widehat\Theta_l' Z'\varepsilon /\sqrt{NM} \leadsto N(0,1).
\end{align*}
\end{lemma}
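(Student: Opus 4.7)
The plan is to separate the expression into an oracle component---a normalized sum of cluster-independent random variables amenable to a Lyapunov central limit theorem---and a remainder that is controlled by the previously-established estimation rates. Concretely, I would decompose
\begin{align*}
\widehat\Theta_l' Z'\varepsilon /\sqrt{NM}
= \Theta_l' Z'\varepsilon /\sqrt{NM}
+ (\widehat\Theta_l - \Theta_l)' Z'\varepsilon /\sqrt{NM},
\end{align*}
normalize by $V_{ll}^{-1/2}$ (which is $O(1)$ since $V_{ll}$ is bounded away from zero by Assumption \ref{a:variance}), and handle the two terms separately.

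For the remainder, H\"older's inequality gives the bound $|(\widehat\Theta_l - \Theta_l)' Z'\varepsilon| \le \|\widehat\Theta_l - \Theta_l\|_1 \cdot \|Z'\varepsilon\|_\infty$. Lemma \ref{lemma:Theta_1_way} under Assumptions \ref{a:moments}, \ref{a:sparse_eigenvalues}, and \ref{a:Theta} supplies $\|\widehat\Theta_l - \Theta_l\|_1 = O_p\big(s_l\sqrt{\log a/(N\wedge M)}\big)$, and Lemma \ref{lemma:regularized_events} under Assumption \ref{a:moments} yields $\|Z'\varepsilon\|_\infty = O_p(\sqrt{NM\log a})$, with the dominant contribution coming from the $X$-block (the $D_1$- and $D_2$-block entries only reach orders $\sqrt{M\log a}$ and $\sqrt{N\log a}$). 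Combining and dividing by $\sqrt{NM}$ produces a remainder of order $s_l\log a/\sqrt{N\wedge M}=o_p(1)$ by Assumption \ref{a:Theta} (4).

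For the oracle term, I would rewrite it as a normalized sum of independent summands indexed by the clusters $(i,j)$,
\begin{align*}
\Theta_l' Z'\varepsilon /\sqrt{NM} = \frac{1}{\sqrt{NM}}\sum_{i=1}^N\sum_{j=1}^M W_{ij},\qquad
W_{ij} := \Theta_l'\Big(\sum_{t=1}^T Z_{ijt}\varepsilon_{ijt}\Big).
\end{align*}
Independence across $(i,j)$ holds by Assumption \ref{a:moments}, the summands have mean zero, and the variance of the normalized sum equals $\Theta_l'\Omega\Theta_l = V_{ll}$ by definition of $\Omega$, bounded away from zero by Assumption \ref{a:variance}. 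I would then invoke Lyapunov's CLT with exponent $q>4$: using the Cauchy--Schwarz bound $|W_{ij}|\le \|\Theta_l\|\,\|\sum_t Z_{ijt}\varepsilon_{ijt}\|$ together with $\|\Theta_l\|=O(1)$ (from the nodewise block-inversion identity $\Theta_l=\tau_l^{-2}(1,-\phi^l)$, the bound $\tau_l^{-2}=O(1)$ from (\ref{eq:one_over_tau2_eigenvalue}), and $\|\phi^l\|=O(1)$ from Assumption \ref{a:Theta} (1)), the $q$th moments aggregate via Jensen and Cauchy--Schwarz under Assumption \ref{a:moments} (1)--(3) to give a Lyapunov ratio of order $B_{NM}^q/(NM)^{q/2-1}$, which vanishes under the rate requirement $B_{NM}\sqrt{\log a}\lesssim (NM)^{1/2-1/q}$ in Assumption \ref{a:moments} (1).

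The main obstacle I anticipate is the bookkeeping of three different block-scales within both $Z'\varepsilon$ and $\widehat\Theta_l-\Theta_l$. Once one observes that the $X$-block governs the infinity norm of $Z'\varepsilon$ while the single-number $\ell^1$ rate of $\widehat\Theta_l-\Theta_l$ already captures the worst of the three block errors, the remainder analysis collapses to a one-line H\"older inequality, and the oracle piece is a textbook Lyapunov CLT.
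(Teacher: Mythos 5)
Your proposal is correct and follows essentially the same route as the paper's proof: the identical decomposition into $\Theta_l'Z'\varepsilon/\sqrt{NM}$ plus $(\widehat\Theta_l-\Theta_l)'Z'\varepsilon/\sqrt{NM}$, the same H\"older bound on the remainder via Lemmas \ref{lemma:Theta_1_way} and \ref{lemma:regularized_events}, and a Lyapunov CLT over the independent $(i,j)$ clusters for the oracle term. The only (harmless) divergence is in verifying the Lyapunov condition: the paper uses the dual-norm bound with $\|\Theta_l\|_1\lesssim\sqrt{s_l}$ and $\|\Theta_l\|_0\le s_l$, absorbing the resulting $s_l^{q/2+1}/(NM)^{q/2-1}$ via Assumption \ref{a:Theta} (4), whereas you use Cauchy--Schwarz with $\|\Theta_l\|=O(1)$ and absorb $B_{NM}^q/(NM)^{q/2-1}$ via Assumption \ref{a:moments} (1); both close the argument.
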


\begin{proof}
First we show $\frac{1}{\sqrt{NM}}\Theta_l' Z'\varepsilon \leadsto N(0,V_{ll})$. 
Note that we have 
$$
E[ \frac{1}{\sqrt{NM}}\Theta_l' Z'\varepsilon]=\frac{1}{\sqrt{NM}}E[\Theta_l' Z'E[\varepsilon|Z]]=0
$$ 
and 
\begin{align*}
V_{ll}=E\Big[\Big(\frac{1}{\sqrt{NM}}\Theta_l' Z'\varepsilon\Big)\Big(\frac{1}{\sqrt{NM}}\Theta_l' Z'\varepsilon\Big)'\Big]=\Theta_l' \Omega \Theta_l\ge\underline k>0
\end{align*}
under Assumption \ref{a:variance}.
Furthermore, by Assumption \ref{a:moments}
\begin{align*}
E\Big| \frac{1}{\sqrt{NM}}\Theta_l' Z'\varepsilon\Big|^{q}
\le &
 \frac{1}{(NM)^{q/2}}E\|\Theta_l\|^q_1\max_{k\in \supp(\Theta_l)}\sumi\sumj\sumt \Big|Z_{ijt,k}\varepsilon_{ijt}\Big|^{q}\\
\lesssim
&\frac{ s_l^{q/2}}{(NM)^{q/2}}\sum_{k\in \supp(\Theta_l)}E\sumi\sumj\sumt \Big|Z_{ijt,k}\varepsilon_{ijt}\Big|^{q}\\
\le
&\frac{ s_l^{q/2}\cdot s_l}{(NM)^{q/2}}\max_{l\in [p]}E\sumi\sumj \sumt \Big|Z_{ijt,k}\varepsilon_{ijt}\Big|^{q}\\
\le&
\frac{ s_l^{q/2+1}(NM)}{(NM)^{q/2}}\max_{l\in [p]}\sqrt{\frac{1}{NM}E\sumi\sumj\sumt \Big|Z_{ijt,k}\Big|^{2q} \frac{1}{NM} E\sumi\sumj\sumt \Big|\varepsilon_{ijt}\Big|^{2q}
}\\
\le& \frac{ s_l^{q/2+1}}{(NM)^{q/2-1}}O(1)=o(1),
\end{align*} 
where $q> 4$, 
the first inequality follows from a dual norm inequality, 
the second and the third from the fact that $\|\Theta_l\|_1\lesssim \sqrt{s_l}$ and $\|\Theta_l\|_0\le s_l$ implied by Assumption \ref{a:Theta}(1), 
the fourth from Cauchy-Schwartz's inequality, and
the fifth from Assumption \ref{a:moments} and the last equality follows from Assumption \ref{a:Theta} (4). 
This verifies the Lyapunov's condition. 
Thus, we have $\frac{1}{\sqrt{NM}}\Theta_l' Z'\varepsilon \leadsto N(0,V_{ll})$. 

Now, we show $|\frac{1}{\sqrt{NM}}(\widehat\Theta_l-\Theta_l)'Z'\varepsilon|=o_p(1)$. 
Invoking Lemmata \ref{lemma:regularized_events} and \ref{lemma:Theta_1_way} under Assumptions \ref{a:moments}, \ref{a:sparse_eigenvalues}, \ref{a:Theta}, we have
\begin{align*}
\Big|\frac{1}{\sqrt{NM}}(\widehat\Theta_l-\Theta_l)'Z'\varepsilon\Big|
\le&
\|\widehat\Theta_l-\Theta_l\|_1\Big\|\frac{1}{\sqrt{NM}}Z'\varepsilon\Big\|_\infty\\
\le&
 O_p\Big(\sqrt{\frac{s_l^2\log a}{N\wedge M}}\Big)
 O_p\Big(\sqrt{\log a}\Big)\\
=& O_p\Big(\sqrt{\frac{s_l^2(\log a)^2}{N\wedge M}}\Big)=o_p(1).
\end{align*}
Combining these results concludes $\frac{1}{\sqrt{NM}}\widehat\Theta_l' Z'\varepsilon \leadsto N(0,V_{ll})$.
\end{proof}

\subsection{Empirical Pre-Sparsity}
\label{sec:empirical_pre_sparsity}
The following lemma is a minor modification of Lemma 8 in \citet{BelloniChenChernozhukovHansen2012}.
\begin{lemma}[Empirical Pre-sparsity]\label{lemma:empirical_pre-sparsity}
If Assumptions \ref{a:sparsity}, \ref{a:moments}, \ref{a:sparse_eigenvalues}, and \ref{a:Theta} are satisfied, then we have
\begin{align*}
\widehat s_l =O_p(s_l) \text{ and } \widehat s =O_p(s),
\end{align*}
where 
$\widehat s_l = \| \widehat{\phi}^l \|_0$ and
$\widehat s = \| \widehat{\boldsymbol{\eta}} \|_0$.
\end{lemma}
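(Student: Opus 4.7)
The plan is to adapt the empirical pre-sparsity argument from Lemma 8 of \citet{BelloniChenChernozhukovHansen2012} to the present three-way panel setting, where the three blocks of regressors $X$, $D_1$, $D_2$ carry different scalings in the penalty. I will prove $\widehat s = O_p(s)$ for the main lasso (\ref{eq:lasso}) in detail; the bound $\widehat s_l = O_p(s_l)$ follows by applying exactly the same argument to each nodewise program (\ref{eq:node_wise_lasso}), with $Z_l$ playing the role of $Y$, the decomposition (\ref{eq:auxiliary_decomposition}) replacing the oracle decomposition of $Y$, and Assumption \ref{a:Theta} (1)--(2) supplying the analogues of Assumption \ref{a:sparsity}.

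First, normalize the columns of the design by $S$: set $\tilde Z := Z S^{-1}$, so that $\tilde Z'\tilde Z = \bar\Psi$ and the penalty weight on coordinate $l$ equals $Q_{ll} = S_{ll}/\sqrt{NM}$. Writing the K.K.T.\ condition for (\ref{eq:lasso}) coordinate-by-coordinate and dividing by $S_{ll}$ yields the uniform identity
\begin{equation*}
\bigl|\tilde Z_l'(Y - Z\widehat{\boldsymbol{\eta}})\bigr| \;=\; \mu/\sqrt{NM}, \qquad l\in \widehat J := \supp(\widehat{\boldsymbol{\eta}}).
\end{equation*}
Decomposing $Y - Z\widehat{\boldsymbol{\eta}} = \varepsilon + R - Z(\widehat{\boldsymbol{\eta}} - \overline{\boldsymbol{\eta}})$ and invoking Lemma \ref{lemma:regularized_events} with $\mu = C\sqrt{NM\log a}$ (so that $\|\tilde Z'\varepsilon\|_\infty \le \mu/(2c\sqrt{NM})$ with probability $1-o(1)$), the triangle inequality produces, for all $l\in\widehat J$,
\begin{equation*}
\mu(1-\tfrac{1}{2c})/\sqrt{NM} \;\le\; |\tilde Z_l'R| + |\tilde Z_l'Z(\widehat{\boldsymbol{\eta}} - \overline{\boldsymbol{\eta}})|.
\end{equation*}

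Squaring and summing over $\widehat J$, then using the fact that for any $w$ the quantity $\|\tilde Z_{\widehat J}'w\|^2$ is bounded by $\varphi_{\max}(\bar\Psi,\widehat s)\cdot \|w\|^2$ (extend the dual vector by zeros outside $\widehat J$ to identify it as an $\widehat s$-sparse unit vector acting on $\tilde Z$), I obtain
\begin{equation*}
\widehat s \cdot \mu^2(1-\tfrac{1}{2c})^2/(NM) \;\lesssim\; \varphi_{\max}(\bar\Psi,\widehat s)\bigl(\|R\|^2 + \|Z(\widehat{\boldsymbol{\eta}} - \overline{\boldsymbol{\eta}})\|^2\bigr).
\end{equation*}
Assumption \ref{a:sparsity} supplies $\|R\|^2 \lesssim s$, and Lemma \ref{lemma:rates_eta} (whose hypotheses are implied by Assumptions \ref{a:sparsity}, \ref{a:moments}, \ref{a:sparse_eigenvalues}, together with $\widehat\Upsilon = I$) gives $\|Z(\widehat{\boldsymbol{\eta}}-\overline{\boldsymbol{\eta}})\|^2 \lesssim \mu^2 s/(NM) + c_s^2 \lesssim s\log a$. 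Plugging in $\mu^2 = C^2 NM\log a$ cancels the $\log a$ factor and yields
\begin{equation*}
\widehat s \;\lesssim\; \varphi_{\max}(\bar\Psi,\widehat s)\cdot s \quad \text{w.p.\ } 1-o(1).
\end{equation*}

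The last and main technical point is closing this self-referential inequality: the right-hand sparse eigenvalue is evaluated at the \emph{random} order $\widehat s$. The resolution, standard since \citet{BelloniChenChernozhukovHansen2012}, is a contradiction argument. Suppose $\widehat s > C_0 s$ for a deterministic constant $C_0$ to be chosen. Assumption \ref{a:sparse_eigenvalues} asserts that for \emph{any} fixed $C > 0$, $\varphi_{\max}(\bar\Psi, Cs) \le \overline k$ with probability $1-o(1)$; by monotonicity of $m\mapsto \varphi_{\max}(\bar\Psi,m)$ in the sparsity level combined with a sub-linearity/ceiling argument (one may replace $\widehat s$ by $\lceil \widehat s/(C_0 s)\rceil \cdot C_0 s$ and invoke the assumption at $2C_0 s$), one concludes $\varphi_{\max}(\bar\Psi,\widehat s)\le \overline k$ on the same event. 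This forces $\widehat s \le C\overline k\,s$, contradicting $\widehat s > C_0 s$ once $C_0$ is taken larger than $C\overline k$. Hence $\widehat s = O_p(s)$. The hard part is precisely this last step: the bound involves $\widehat s$ on both sides and one must invoke Assumption \ref{a:sparse_eigenvalues} uniformly over the relevant (random) sparsity levels without circularity. The argument for $\widehat s_l = O_p(s_l)$ repeats verbatim with $\zeta_l, r_l, \phi^l$ in place of $\varepsilon, R, \overline{\boldsymbol{\eta}}$; uniformity over $l\in[k_0]$ follows because all intermediate bounds in Lemma \ref{lemma:regularized_events} and in Step 1 of the proof of Lemma \ref{lemma:Theta_1_way} hold uniformly in $l$.
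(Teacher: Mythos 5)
Your proposal is correct and follows essentially the same route as the paper's proof: the K.K.T. stationarity condition at the active coordinates of the $S$-rescaled design, the regularized-event bound on the noise term from Lemma \ref{lemma:regularized_events}, sparse-eigenvalue/Cauchy--Schwarz bounds on the $R$ and $Z(\widehat{\boldsymbol{\eta}}-\overline{\boldsymbol{\eta}})$ terms via Assumption \ref{a:sparse_eigenvalues} and Lemma \ref{lemma:rates_eta}, and the \citet{BelloniChenChernozhukovHansen2012}-style sublinearity argument to close the self-referential inequality $\widehat s \lesssim \varphi_{\max}(\bar\Psi,\widehat s)\,s$. The only differences are cosmetic --- you count the full support $\widehat J$ where the paper counts the excess support $\widehat T_l\setminus T_l$ (immaterial since $|J|=s$), you treat the main lasso in detail and the nodewise programs by analogy where the paper does the reverse, and you spell out the closing contradiction step that the paper leaves to the citation (your phrasing there is slightly out of order, since one deduces $\widehat s\le C_0 s$ before, not after, bounding $\varphi_{\max}(\bar\Psi,\widehat s)$ by $\overline k$, but the absorption argument you gesture at is the correct one).
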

\begin{proof}
Let $\hat m_l= \abs{\hat  T_l \setminus T_l}$, where $T_l=\supp(\phi^l)$ and $\hat T_l=\supp(\hat \phi^l)$.
From K.K.T. condition, we have
\begin{align*}
2 (Q^{-1}_{-l} Z'_{-l}(Z_l-Z_{-l}\hat \phi^l))_k=\mu_{\text{node},l}\cdot \text{sign}(\hat \phi^l_k)
\end{align*}
for all $l\in [k_0]$ and $k\in \hat T_l \setminus T_l$.
Thus,
\begin{align}
\mu_{\text{node},l} \sqrt{\hat m_l} \leq &
 2\| ( Q^{-1}_{-l}Z'_{-l}(Z_l-Z_{-l}\hat \phi^l))_{\hat  T_l \setminus T_l}\|
 +
 2\| (Q^{-1}_{-l} Z'_{-l}r_l)_{\hat  T_l \setminus T_l}\|\nonumber\\
& +
 2\| (Q^{-1}_{-l} Z'_{-l}Z_{-l}(\hat \phi^l -\phi^l))_{\hat  T_l \setminus T_l}\|\nonumber\\
=& (1) +(2)+(3). \label{eq:pre-sparsity}
\end{align}
We bound the three terms in the last expression separately. 
First, Lemma \ref{lemma:regularized_events} under Assumption \ref{a:moments} yields
\begin{align*}
(1)
\le & 2\sqrt{\hat m_l}\|Q_{-l}^{-1} Z'_{l} \zeta_l\|_\infty
\le \sqrt{\hat m_l}\frac{\mu_{node,l}}{c}
\end{align*}
with probability at least $1-C(\log(N \wedge M))^{-1}$.
Second, 
\begin{align*}
\| (Q^{-1}_{-l} Z'_{-l}r_l)_{\hat  T_l \setminus T_l}\|
=&\sup_{
\substack{\|\delta\|=1\\
\|\delta\|_0\le \hat m_l
}}
|\delta'Q^{-1}_{-l} Z'_{-l}r_l|  \\
\le& 
\sup_{
\substack{\|\delta\|=1\\
\|\delta\|_0\le \hat m_l
}}
\|\delta'Q^{-1}_{-l}Z'_{-l}\|\| r_l\|\\
\le&
\sup_{
\substack{\|\delta\|=1\\
\|\delta\|_0\le \hat m_l
}}
(NM)
\sqrt{\delta'\bar \Psi \delta} \sqrt{\frac{s_l}{NM}}\\
\le& (NM) \sqrt{\varphi_{\max}(\bar \Psi,\hat m_l)} \sqrt{\frac{s_l}{NM}}
\end{align*}
follows by Assumptions \ref{a:sparse_eigenvalues} and \ref{a:Theta}.
Therefore,
\begin{align*}
(2)\le 2 (NM)\sqrt{ \varphi_{\max}(\bar \Psi,\hat m_l)} \sqrt{\frac{s_l}{NM}}.
\end{align*}
Finally, by Lemma \ref{lemma:Theta_1_way} under Assumptions \ref{a:moments}, \ref{a:sparse_eigenvalues}, \ref{a:Theta}, we obtain
\begin{align*}
\|( Q^{-1}_{-l} Z'_{-l}Z_{-l}(\hat \phi^l -\phi^l))_{\hat  T_l \setminus T_l}\|
\le&
\sup_{
\substack{\|\delta\|=1\\
\|\delta\|_0\le \hat m_l
}}
|\delta'Q^{-1}_{-l} Z'_{-l}Z_{-l}(\hat \phi^l -\phi^l)|\\
\le& 
\sup_{
\substack{\|\delta\|=1\\
\|\delta\|_0\le \hat m_l
}}
\|\delta'Q^{-1}_{-l} Z'_{-l}\|\|Z_{-l}(\hat \phi^l -\phi^l)\|\\
\le&   (NM)\sqrt{\varphi_{\max}(\bar \Psi,\hat m_l)} \sqrt{\frac{s_l\log a}{NM}}
\end{align*}
with probability at least $1-C(\log(N \wedge M))^{-1}$, where
the last inequality is due to Assumption \ref{a:sparse_eigenvalues} and Lemma \ref{lemma:Theta_1_way}.
Using these bounds and (\ref{eq:pre-sparsity}), we obtain
\begin{align*}
\sqrt{\hat m_l}\lesssim \sqrt{\varphi_{\max}(\bar\Psi,\hat m_l)} \sqrt{s_l}=O(\sqrt{s_l})
\end{align*}
with probability $1-o(1)$. 
Under Assumptions \ref{a:sparsity}, \ref{a:moments}, \ref{a:sparse_eigenvalues}, and \ref{a:Theta}, the result for $\hat s$ can be established following analogous arguments.
\end{proof}

\newpage
\bibliography{biblio}

\clearpage
\thispagestyle{empty}
\begin{table}
\begin{center}
\begin{tabular}{lrrrrrr}\hline\hline
Country&1990&1995&2000&2005&2010&2015\\ \hline
China         &3.08 &5.76&7.22&10.74&13.32&15.67\\
USA           &18.47&12.32&12.62&8.79&8.19&8.82\\
Germany       &6.98 &9.87&8.18&8.84&7.86&7.69\\
Japan         &8.20 &8.95&7.97&6.34&5.46&4.40\\
Korea         &1.86 &2.22&2.70&2.86&3.09&3.38\\
France        &6.55 &5.54&4.59&4.20&3.49&3.20\\
Italy         &5.36 &4.31&3.42&3.31&2.82&2.76\\
Netherlands   &3.93 &3.47&3.03&2.97&2.83&2.61\\
Canada        &1.34 &4.08&4.47&3.48&2.61&2.60\\
United Kingdom&4.78 &4.66&4.36&3.41&2.57&2.56\\ \hline
ROW           &39.47&38.81&41.43&45.07&47.78&46.30\\ 
$N_{ROW}$     &192&214&224&222&224&227\\
ROW/$N_{ROW}$ &0.21&0.18&0.18&0.20&0.21&0.20\\ \hline\hline
\end{tabular}
\end{center}
\caption{World Import Shares Over Time (\%). Notes: The above table reports the share of world imports for the 10 largest importers from the WITS database in selected years.  ROW represents the combined share of all other countries.  $N_ROW$ is the number of countries which comprise ROW and ROW/$N_{ROW}$ is the average import share among the ROW countries.}
\label{tab:world_import_shares_over_time}
\end{table}

\clearpage
\thispagestyle{empty}
\begin{table}
\begin{center}
\begin{tabular}{lrrrrrr}\hline\hline
Country&1990&1995&2000&2005&2010&2015\\ \hline
USA&22.47&14.73&1925&15.95&12.26&13.79\\
China&3.34&6.01&6.13&8.33&11.13&12.35\\
Germany&5.19&9.09&6.64&6.36&5.78&5.45\\
United Kingdom&5.74&5.35&4.72&4.15&3.37&3.39\\
Japan&5.72&5.39&4.90&4.30&3.55&3.30\\
France&6.64&5.56&4.42&4.26&3.63&3.08\\
Netherlands&4.29&3.93&3.19&3.22&3.21&2.84\\
Canada&1.13&3.44&3.59&2.85&2.46&2.47\\
Korea&2.22&2.23&2.01&1.99&2.27&2.39\\
Switzerland&2.38&1.80&2.31&2.30&2.26&2.29\\ \hline
ROW&40.88&42.47&42.84&46.30&50.07&48.65\\ 
$N_{ROW}$     &186&209&218&218&219&222\\
ROW/$N_{ROW}$ &0.22&0.20&0.20&0.21&0.23&0.22\\ \hline\hline
\end{tabular}
\end{center}
\caption{World Export Shares Over Time (\%). Notes: The above table reports the share of world exports for the 10 largest exporters from the WITS database in selected years.  ROW represents the combined share of all other countries.  $N_ROW$ is the number of countries which comprise ROW and ROW/$N_{ROW}$ is the average export share among the ROW countries.}
\label{tab:world_export_shares_over_time}
\end{table}

\clearpage
\thispagestyle{empty}
\begin{table}
	\centering
		\begin{tabular}{rccccccc}
			\hline\hline
			\multicolumn{2}{l}{True Model = (I)} && \multicolumn{3}{c}{Fixed Effect Estimators} && \\
			\cline{4-6}
			\multicolumn{1}{l}{$N=10$ ($NMT=450)$} & OLS && FE-I & FE-II & FE-III && POST\\
			\hline
			Under-Fitting or Over-Fitting & Under && Correct & Over & Over && Robust\\
			\hline
			Average                & 1.466 && 0.996 & 0.996 & 0.996 && 1.066 \\
			Bias                   & 0.466 &&-0.004 &-0.004 &-0.004 && 0.066 \\
			Standard Deviation     & 0.342 && 0.484 & 0.486 & 0.539 && 0.422 \\
			Root Mean Square Error & 0.578 && 0.484 & 0.486 & 0.539 && 0.428 \\
			95\% Coverage          & 0.712 && 0.941 & 0.938 & 0.909 && 0.961 \\
			\hline\hline
			\\
		\end{tabular}

		\begin{tabular}{rccccccc}
			\hline\hline
			\multicolumn{2}{l}{True Model = (II)} && \multicolumn{3}{c}{Fixed Effect Estimators} && \\
			\cline{4-6}
			\multicolumn{1}{l}{$N=10$ ($NMT=450)$} & OLS && FE-I & FE-II & FE-III && POST\\
			\hline
			Under-Fitting or Over-Fitting & Under && Under & Correct & Over && Robust\\
			\hline
			Average                & 1.393 && 1.206 & 1.002 & 1.003 && 1.110 \\
			Bias                   & 0.393 && 0.206 & 0.002 & 0.003 && 0.110 \\
			Standard Deviation     & 0.334 && 0.473 & 0.488 & 0.533 && 0.405 \\
			Root Mean Square Error & 0.515 && 0.516 & 0.488 & 0.533 && 0.420 \\
			95\% Coverage          & 0.771 && 0.914 & 0.938 & 0.910 && 0.957 \\
			\hline\hline
			\\
		\end{tabular}
		
		\begin{tabular}{rccccccc}
			\hline\hline
			\multicolumn{2}{l}{True Model = (III)} && \multicolumn{3}{c}{Fixed Effect Estimators} && \\
			\cline{4-6}
			\multicolumn{1}{l}{$N=10$ ($NMT=450)$} & OLS && FE-I & FE-II & FE-III && POST\\
			\hline
			Under-Fitting or Over-Fitting & Under && Under & Under & Correct && Robust\\
			\hline
			Average                & 1.461 && 1.441 & 1.421 & 1.008 && 1.108 \\
			Bias                   & 0.461 && 0.441 & 0.421 & 0.008 && 0.108 \\
			Standard Deviation     & 0.339 && 0.352 & 0.361 & 0.531 && 0.429 \\
			Root Mean Square Error & 0.572 && 0.564 & 0.555 & 0.531 && 0.442 \\
			95\% Coverage          & 0.717 && 0.746 & 0.767 & 0.909 && 0.954 \\
			\hline\hline
		\end{tabular}
	\caption{Monte Carlo simulation results under Model (I) (top panel), Model (II) (middle panel), and Model (III) (bottom panel) with size $N=10$ ($NMT=450$).}
	\label{tab:simulation}
\end{table}

\clearpage
\thispagestyle{empty}
\begin{table}
	\centering
		\begin{tabular}{rccccccc}
			\hline\hline
			\multicolumn{2}{l}{True Model = (I)} && \multicolumn{3}{c}{Fixed Effect Estimators} && \\
			\cline{4-6}
			\multicolumn{1}{l}{$N=15$ ($NMT=1050)$} & OLS && FE-I & FE-II & FE-III && POST\\
			\hline
			Under-Fitting or Over-Fitting & Under && Correct & Over & Over && Robust\\
			\hline
			Average                & 1.385 && 1.002 & 1.002 & 1.000 && 1.006 \\
			Bias                   & 0.385 && 0.002 & 0.002 & 0.000 && 0.006 \\
			Standard Deviation     & 0.215 && 0.316 & 0.317 & 0.336 && 0.274 \\
			Root Mean Square Error & 0.441 && 0.316 & 0.317 & 0.336 && 0.274 \\
			95\% Coverage          & 0.568 && 0.941 & 0.940 & 0.924 && 0.957 \\
			\hline\hline
			\\
		\end{tabular}

		\begin{tabular}{rccccccc}
			\hline\hline
			\multicolumn{2}{l}{True Model = (II)} && \multicolumn{3}{c}{Fixed Effect Estimators} && \\
			\cline{4-6}
			\multicolumn{1}{l}{$N=15$ ($NMT=1050)$} & OLS && FE-I & FE-II & FE-III && POST\\
			\hline
			Under-Fitting or Over-Fitting & Under && Under & Correct & Over && Robust\\
			\hline
			Average                & 1.481 && 1.163 & 1.000 & 1.000 && 1.031 \\
			Bias                   & 0.481 && 0.163 & 0.000 & 0.000 && 0.031 \\
			Standard Deviation     & 0.214 && 0.308 & 0.313 & 0.331 && 0.265 \\
			Root Mean Square Error & 0.526 && 0.348 & 0.313 & 0.331 && 0.267 \\
			95\% Coverage          & 0.412 && 0.910 & 0.944 & 0.930 && 0.959 \\
			\hline\hline
			\\
		\end{tabular}
		
		\begin{tabular}{rccccccc}
			\hline\hline
			\multicolumn{2}{l}{True Model = (III)} && \multicolumn{3}{c}{Fixed Effect Estimators} && \\
			\cline{4-6}
			\multicolumn{1}{l}{$N=15$ ($NMT=1050)$} & OLS && FE-I & FE-II & FE-III && POST\\
			\hline
			Under-Fitting or Over-Fitting & Under && Under & Under & Correct && Robust\\
			\hline
			Average                & 1.409 && 1.383 & 1.367 & 0.998 && 1.055 \\
			Bias                   & 0.409 && 0.383 & 0.367 &-0.002 && 0.055 \\
			Standard Deviation     & 0.222 && 0.231 & 0.235 & 0.334 && 0.282 \\
			Root Mean Square Error & 0.465 && 0.447 & 0.435 & 0.334 && 0.287 \\
			95\% Coverage          & 0.548 && 0.615 & 0.650 & 0.927 && 0.951 \\
			\hline\hline
		\end{tabular}
	\caption{Monte Carlo simulation results under Model (I) (top panel), Model (II) (middle panel), and Model (III) (bottom panel) with size $N=15$ ($NMT=1050$).}
	\label{tab:simulation15}
\end{table}

\clearpage
\thispagestyle{empty}
\begin{table}
	\centering
		\begin{tabular}{rccccccc}
			\hline\hline
			\multicolumn{2}{l}{True Model = (I)} && \multicolumn{3}{c}{Fixed Effect Estimators} && \\
			\cline{4-6}
			\multicolumn{1}{l}{$N=20$ ($NMT=1900)$} & OLS && FE-I & FE-II & FE-III && POST\\
			\hline
			Under-Fitting or Over-Fitting & Under && Correct & Over & Over && Robust\\
			\hline
			Average                & 1.213 && 0.996 & 0.996 & 0.996 && 0.956 \\
			Bias                   & 0.213 &&-0.004 &-0.004 &-0.004 &&-0.044 \\
			Standard Deviation     & 0.162 && 0.228 & 0.229 & 0.239 && 0.199 \\
			Root Mean Square Error & 0.268 && 0.228 & 0.229 & 0.239 && 0.204 \\
			95\% Coverage          & 0.739 && 0.948 & 0.948 & 0.938 && 0.955 \\
			\hline\hline
			\\
		\end{tabular}

		\begin{tabular}{rccccccc}
			\hline\hline
			\multicolumn{2}{l}{True Model = (II)} && \multicolumn{3}{c}{Fixed Effect Estimators} && \\
			\cline{4-6}
			\multicolumn{1}{l}{$N=20$ ($NMT=1900)$} & OLS && FE-I & FE-II & FE-III && POST\\
			\hline
			Under-Fitting or Over-Fitting & Under && Under & Correct & Over && Robust\\
			\hline
			Average                & 1.332 && 1.218 & 0.999 & 1.000 && 1.052 \\
			Bias                   & 0.332 && 0.218 &-0.001 & 0.000 && 0.052 \\
			Standard Deviation     & 0.168 && 0.222 & 0.230 & 0.240 && 0.214 \\
			Root Mean Square Error & 0.371 && 0.311 & 0.230 & 0.240 && 0.221 \\
			95\% Coverage          & 0.497 && 0.834 & 0.945 & 0.935 && 0.954 \\
			\hline\hline
			\\
		\end{tabular}
		
		\begin{tabular}{rccccccc}
			\hline\hline
			\multicolumn{2}{l}{True Model = (III)} && \multicolumn{3}{c}{Fixed Effect Estimators} && \\
			\cline{4-6}
			\multicolumn{1}{l}{$N=20$ ($NMT=1900)$} & OLS && FE-I & FE-II & FE-III && POST\\
			\hline
			Under-Fitting or Over-Fitting & Under && Under & Under & Correct && Robust\\
			\hline
			Average                & 1.390 && 1.374 & 1.340 & 0.995 && 1.030 \\
			Bias                   & 0.390 && 0.374 & 0.340 &-0.005 && 0.030 \\
			Standard Deviation     & 0.168 && 0.174 & 0.180 & 0.244 && 0.213 \\
			Root Mean Square Error & 0.425 && 0.413 & 0.385 & 0.244 && 0.215 \\
			95\% Coverage          & 0.363 && 0.414 & 0.520 & 0.935 && 0.949 \\
			\hline\hline
		\end{tabular}
	\caption{Monte Carlo simulation results under Model (I) (top panel), Model (II) (middle panel), and Model (III) (bottom panel) with size $N=20$ ($NMT=1900$).}
	\label{tab:simulation20}
\end{table}

\end{document}